\documentclass[10pt]{article}

\setlength{\oddsidemargin}{0in}
\setlength{\textwidth}{\paperwidth}
\addtolength{\textwidth}{-2in}
\setlength{\topmargin}{-.5in}
\setlength{\textheight}{8.75in}
\usepackage{amsmath,amssymb,amsthm,color,epsfig,mathrsfs}
\usepackage{amsbsy,multirow}
%\usepackage{esint}
%\usepackage{txfonts}
%\usepackage{pxfonts}
%\pagestyle{empty}

%\renewcommand{\thesection}{{\arabic{section}}}
%\renewcommand{\thesection}{}
%\renewcommand{\thesubsection}{}

%\newcommand{\tmpsection}[1]{}
%\let\tmpsection=\section
%\renewcommand{\section}[1]{\tmpsection{\centerline{\hspace*{-2em}\arabic{section}. #1}}}

%\newcommand{\tmpsubsection}[1]{}
%\let\tmpsubsection=\subsection
%\renewcommand{\subsection}[1]{\tmpsubsection{\itshape\hspace*{-1em}\arabic{section}.\arabic{subsection} #1}}

%\usepackage{epsfig}
%\usepackage{graphicx}
%\usepackage{mathrsfs}

%\usepackage[english,french]{babel}
%\selectlanguage{french} 

%\usepackage[T1]{fontenc}
%\usepackage[latin1]{inputenc}

%\usepackage{amsthm}
%\usepackage{amsmath,amssymb,amsbsy,color}
%\usepackage{showkeys}

% Permet de num�roter les �quations par section
%\renewcommand{\theequation}{{\arabic{section}.\arabic{equation}}}

\newtheorem{theorem}{Theorem}

\newtheorem{lemma}[theorem]{Lemma}

\newtheorem{proposition}[theorem]{Proposition}
\newtheorem{remark}[theorem]{Remark}

%\newenvironment{proof}[1][Proof]{\noindent\textbf{#1.} }{\ \rule{0.5em}{0.5em}}

% My list environment
\newcounter{spslist}

%\newcommand{\col}[2]{\left( \!\! \begin{array}{c} #1 \\ #2 \end{array} \!\! \right)}
%\newcommand{\mat}[4]{\left( \!\! \begin{array}{cc} #1 & #2 \\ #2 & #4 \end{array} \!\! \right)}

%%%%%%%%%%%%%%%%%%%%
\newcounter{geqncount}
    {\refstepcounter{equation}%
     \setcounter{geqncount}{\value{equation}}%
     \setcounter{equation}{0}%
  }%
    {\setcounter{equation}{\value{geqncount}}}
%%%%%%%%%%%%%%%%%%%%

%\numberwithin{equation}{section}

%\newcommand{\ew}[1]{\epsr\bar k^2_{#1}}

\newcommand{\Cc}{{\mathcal C}}
\newcommand{\Cco}{\mathring{\mathcal C}}
\newcommand{\go}{\mathring{g}}

\newcommand{\Zo}{\mathring{Z}}
\newcommand{\CC}{\mathbb{C}}

\newcommand{\RR}{\mathbb{R}}

\newcommand{\Spiral}{\mathcal{S}}

\newcommand{\G}{\Gamma}
\newcommand{\Lo}{\mathring{\Lambda}}
\newcommand{\lo}{\mathring{\lambda}}
\newcommand{\So}{\mathring{\Sigma}}

\newcommand{\Mo}{\mathring{M}}
\newcommand{\Do}{\mathring{D}}
\newcommand{\Eo}{\mathring{E}}
\newcommand{\No}{\mathring{N}}

\newcommand{\Ao}{\mathring{A}}
\newcommand{\Bo}{\mathring{B}}

\newcommand{\dom}{\mathcal{D}}

\newcommand{\half}{{\textstyle\frac{1}{2}}}
\newcommand{\quarter}{{\textstyle\frac{1}{4}}}

\newcommand{\Asym}{A^\text{sym}}
\newcommand{\Aosym}{\mathring{A}^\text{sym}}
\newcommand{\cchi}{\text{\raisebox{1.5pt}{$\chi$}}}

\newcommand{\rev}[1]{{\color{black}#1}}

\bibliographystyle{plain}

\begin{document}

\bibliographystyle{plain}

\begin{center}
{\bf \Large  Spectra of Regular Quantum Trees:\\\vspace{5pt}
Rogue Eigenvalues and Dependence on Vertex Condition
}
\end{center}

\vspace{0.2ex}

\begin{center}
{\scshape \large Zhaoxia\hspace{-2pt} W\hspace{-2pt}. Hess\footnote{Zhaoxia W\hspace{-2pt}. Hess, Southeastern Louisiana University, Hammond, LA 70402, zhaoxia.hess@selu.edu} \,and\, Stephen P\hspace{-2pt}. Shipman\footnote{Stephen P\hspace{-2pt}. Shipman, Louisiana State University, Baton Rouge, LA 70803, shipman@lsu.edu}} \\
\vspace{1ex}
{\itshape Departments of Mathematics\\
 Southeastern Louisiana University\textsuperscript{1}  \;and\;  Louisiana State University\textsuperscript{2}}
\end{center}

\vspace{3ex}
\centerline{\parbox{0.9\textwidth}{
{\bf Abstract.}\
We investigate the spectrum of Schr\"odinger operators on finite regular metric trees through a relation to orthogonal polynomials that provides a graphical perspective.  As the Robin vertex parameter tends to $-\infty$, a narrow cluster of finitely many eigenvalues tends to $-\infty$, while the eigenvalues above this cluster remain bounded from below.  Certain ``rogue" eigenvalues break away from this cluster and tend even faster toward~$-\infty$.
The spectrum can be visualized as the intersection points of two objects in the plane---a spiral curve depending on the Schr\"odinger potential, and a set of curves depending on the branching factor, the diameter of the tree, and the Robin parameter.
}}

\vspace{3ex}
\noindent
\begin{mbox}
{\bf Key words:}  quantum graph, Jacobi operators on trees, Robin spectrum, orthogonal polynomials
\end{mbox}

\vspace{3ex}

\noindent
\begin{mbox}
{\bf MSC:}  34B45, 34B08, 34A26, 34L15, 34L20, 34L05
\end{mbox}
\vspace{3ex}

%34B45  	Boundary value problems on graphs and networks for ordinary differential equations
%34B08  	Parameter dependent boundary value problems for ordinary differential equations
%34A26  	Geometric methods in ordinary differential equations
%34L05  	General spectral theory of ordinary differential operators
%34L15  	Eigenvalues, estimation of eigenvalues, upper and lower bounds of ordinary differential operators
%34L20  	Asymptotic distribution of eigenvalues, asymptotic theory of eigenfunctions for ordinary differential operators

\hrule
\vspace{1.1ex}

%%%%%%%%%%%%%%%%%%%%%%%%%%%%%%%%%%%%%%%%%%%%
\section{Introduction}\label{sec:trees}

A quantum tree is a Schr\"odinger operator on a metric tree graph.  On each edge, the operator acts as $-d^2/dx^2+q(x)$, and the edges are coupled by self-adjoint vertex conditions.  
Regular tree operators have large symmetry groups, which allows in-depth analyses by reduction to simpler linear graphs.  Much is known about their spectra, including distribution of eigenvalues for finite trees and, for infinite trees, band-gap structure and infinite-multiplicity eigenvalues, particularly high-energy asymptotics.  The present work emphasizes analysis of low-energy spectral phenomena, particularly certain ``rogue" eigenvalues that tend to $-\infty$ as the vertex condition becomes extreme.

The vertex condition relates the value of a function at any vertex to its flux into the adjacent edges, and the proportionality constant is called the Robin parameter $\alpha$, defined precisely below.
\rev{The condition can be thought of as a singular $\delta$-potential of strength $\alpha$ at the vertex; see, for example~\cite[Eq.\,2.6]{ExnerPost2009}; and it is called the $\delta$~vertex condition.  It is also called the Robin vertex condition~\cite[Thm.\,2.2]{BerkolaikoKuchment2012a}, and it generalizes the usual Robin, or transmission, boundary condition in linear ODE theory.  
It is proved in \cite[\S2--3]{ExnerPost2009} that a quantum graph with the Robin vertex condition can be realized as the limit of a Schr\"odinger operator on a thickened graph, where the scaling of thickened edges and vertices is taken in an appropriate way.
%There, the relation between the potential of the thickened graph and the Robin parameter is given by Eq.\,(3.3).
Different limits of thickened graphs lead to other types of self-adjoint vertex conditions; see also \cite{ExnerPost2005,CheonExnerTurek2010a} and references therein for analyses and discussions of these limits as well as their physical applications.}

\rev{
These analyses of
\cite{ExnerPost2005,ExnerPost2009,CheonExnerTurek2010a}  
strengthen the foundations of quantum graphs as models of real physical systems, especially with regard to the different vertex conditions that render them self-adjoint.  
Quantum graphs model a rich variety of physical systems, and we refer the reader to such works as \cite{AlbeverioGesztesyHoegh-Kroh1988a,Kuchment2004a}, the collection~\cite{ExnerKeatingKuchment2008}, and the monograph~\cite{BerkolaikoKuchment2013}.
The dependence of spectral properties on the Robin parameter, particularly analyticity and eigenvalue interlacing, has been investigated in~\cite{BerkolaikoKuchment2012a}.
}

The present work offers a graphical approach for analysis of the spectrum of finite regular rooted quantum trees with symmetric potential $q(x)$, which is based on orthogonal polynomials determined by the graph and how they interact with the Schr\"odinger operator.  This approach (1) offers a visualization of quantitative details of the spectrum, including clusters of eigenvalues and intermediate eigenvalues that interlace the clusters (the clusters become bands for infinite trees); (2) clarifies the role of the potential of the operator versus the role of the branching number, the length of the graph, and the vertex condition; and (3) allows one to prove new results on rogue eigenvalues that escape to $-\infty$ as the Robin parameter $\alpha$ in the vertex condition becomes large and negative.  These curious eigenvalues include an exponentially narrow cluster plus extra eigenvalues that break free below the clusters.  The rest of the eigenvalues have a lower bound independent of the Robin parameter.  In addition, (4) the relation between the spectra of large-diameter finite graphs and infinite graphs is clarified.

\rev{
The intuitive reason for the rogue eigenvalues is that a negative Robin parameter $\alpha$ imparts a negative potential localized at the vertices.  This results in a finite number of eigenvalues at negative energies for the finite tree.  These gather in thin energy clusters that tend to $-\infty$ as $\alpha\to-\infty$.  What was unexpected is that some of the lowest eigenvalues break off from the cluster and tend even faster to $-\infty$.  The graphical apparatus we introduce demonstrates this phenomenon in a clear visual way and leads to a proof of the rogue eigenvalues' asymptotics.}

\rev{
The reduction by symmetry mentioned at the beginning comes from the invariance of the tree graphs under cyclic permutations of the subtrees below any vertex~\cite{Carlson2000,NaimarkSolomyak2000,Solomyak2003,Solomyak2004}.  It is sometimes called ``radial symmetry".  It results in a reduction into energy-dependent Jacobi matrix operators.
This symmetry still permits the edge lengths, potentials, branching number, and Robin parameter to vary with the generation, or the distance from a root vertex.  
In the present paper, the constancy of these four quantities always leads to solely absolutely continuous spectrum as the length of the graph becomes infinite and the clusters turn into spectral bands. 
 But absolutely continuous spectrum for infinite trees is a rare event under radial symmetry alone~\cite{BreuerFrank2009a}. %q
Certain sparsity conditions on the edges, as they depend on the generation, lead to singular continuous and dense point spectrum \cite{Breuer2007a,Breuer2007b,ExnerLipovsky2010a}. %q
 Absence of singular continuous spectrum is known for periodic Jacobi matrices on (discrete) trees~\cite{AvniBreuerSimon2020a}, and point spectrum is possible if the branching number can vary~\cite{Aomoto1991a}. %d
Interestingly, the form of the reduction into linear (one-dimensional) graph operators is valid for a much broader class of symmetries possessed by ``path-commuting" graphs~\cite{BreuerKeller2013,BreuerLevi2020}}. %dq

The main results of the present work are Theorems~\ref{thm:spectrum1} and~\ref{thm:spectrum2} in section~\ref{sec:linearspectrum} on the spectra of the finite linear-graph reductions, which are components of the spectrum of the full tree, and these spectra are illustrated in Fig.~\ref{fig:Alphas}, \ref{fig:SpecB}, and~\ref{fig:SpecBo}.  In addition to the new results discussed above, the theorems reproduce several known results, such as a Weyl law for the asymptotic distribution of eigenvalues proved in~\cite[\S5.2]{Carlson2000} and \cite[\S5.2]{Solomyak2003}.
%The literature, particularly of Carlson and Solomyak, is discussed in context.
For a survey of regular quantum graphs up to 2004, see~\cite{Solomyak2004}.

%\cite{ExnerPost2005,ExnerKeatingKuchment2008,ExnerLipovsky2010a,CheonExnerTurek2010a,Pankrashki2012,BreuerLevi2020}

\smallskip

{\bfseries Notation.}  The underlying graph $\Gamma_{\!n}$ has a root vertex, $n$ branching levels, and a common branching factor $b$, as illustrated in Fig.\,\ref{fig:tree}.  
To make descriptions concrete, place the root vertex at the top and identify each edge with the $x$-interval $[0,1]$ directed downward.  Each vertex except the $b^n$ terminal ones at the bottom has $b$ outgoing edges emanating downward, and each vertex except the root has one incoming edge entering it from above.  Vertices besides the root and terminal ones will be called interior vertices.

The Schr\"odinger operator $A_n$ takes the form $-d^2/dx^2+q(x)$ with the same electric potential $q\in L^2[0,1]$ on each edge.  We take the potential to be symmetric about the midpoint of the edge, that is, $q(1-x) = q(x)$.  The domain of definition of $A$ consists of continuous functions on $\Gamma_{\!n}$ that, together with their derivatives along the edges, are square integrable, and are subject to a Robin vertex condition, which requires that the sum of all outgoing derivatives from a vertex be equal to $\alpha$ times the value at the vertex.  The real number $\alpha$ is the {\bfseries\em Robin parameter.}  More precisely:  Let $u$ be a function on $\Gamma_{\!n}$ with value $u_*$ at vertex $v$.  If $v$ is interior, denote $u$ by $u_0(x)$ on the incoming edge and by $u_i(x)$, $1\leq i\leq b$ on the outgoing edges.  The vertex condition~is
\begin{equation}\label{Robin}
  \sum_{i=1}^bu_i'(0) - u_0'(1) \;=\; \alpha\, u_*.
\end{equation}
At the root vertex, the condition is
$\sum_{i=1}^bu_i'(0) \!=\! \alpha\, u_*$, and at the terminal vertices, it is $-u_0'(1) \!=\! \alpha\, u_*$.
This condition makes $A_n$ a self-adjoint operator in $L^2(\Gamma)$~\cite{AlbeverioGesztesyHoegh-Kroh1988a,Exner1997,BerkolaikoKuchment2012a}.  The pair $(\Gamma_{\!n},A_n)$ is thus a quantum tree.

\begin{figure}[h]
\centerline{\includegraphics[width=0.45\linewidth]{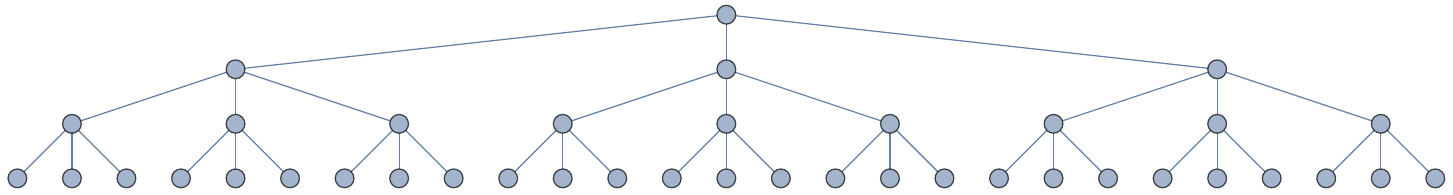}\vspace{-1.5ex}}
 \caption{\small A regular tree graph $\Gamma_{\!n}$ with $n\!=\!3$ branching levels and branching factor $b=3$.}
 \label{fig:tree}
\end{figure}

A quantum tree $(\Gamma_{\!\infty},A_\infty)$ is obtained when the number of branching levels is infinite and there are no terminal vertices.  This is nearly the infinite regular tree studied by Carlson~\cite{Carlson1997}, except that one of the vertices, the root, has degree $b$ instead of $b+1$.  The infinite regular tree has a sequence of spectral bands with one eigenvalue of infinite multiplicity in each gap~\cite{Carlson1997}.
  The root of $(\Gamma_{\!\infty},A_\infty)$ may be considered to be a defect of an infinite regular tree, and it produces additional eigenvalues in the gaps.  A variation of this is an infinite rooted regular tree with a potential that decays with the distance from the root, which also produces additional eigenvalues in the gaps~\cite{SobolevSolomyak2002}.
  The spectrum of our infinite tree with either Robin or Dirichlet conditions at the root is discussed briefly in Section~\ref{sec:infinite}, although our focus is the finite rooted tree.

\smallskip
{\bfseries Organization.} This article is organized so as to convey the results as quickly as possible, supported by graphical representations of the spectra.  This requires first the reduction to linear discrete graph operators in Section~\ref{sec:linear} and a description of the graphical analysis in Section~\ref{sec:graphical}.  The main theorems are in Section~\ref{sec:linearspectrum}.  The orthogonal polynomials are introduced and analyzed in Section~\ref{sec:orthogonal} and used to prove the theorems in Section~\ref{sec:proofs}.  Section~\ref{sec:infinite} offers a short discussion of infinite trees, and Section~\ref{sec:appendixmoments} is an appendix on the moments and the weight associated with the orthogonal polynomials and their bearing on the spectra of quantum trees.

\begin{figure}[ht]
\centerline{\includegraphics[width=0.99\linewidth]{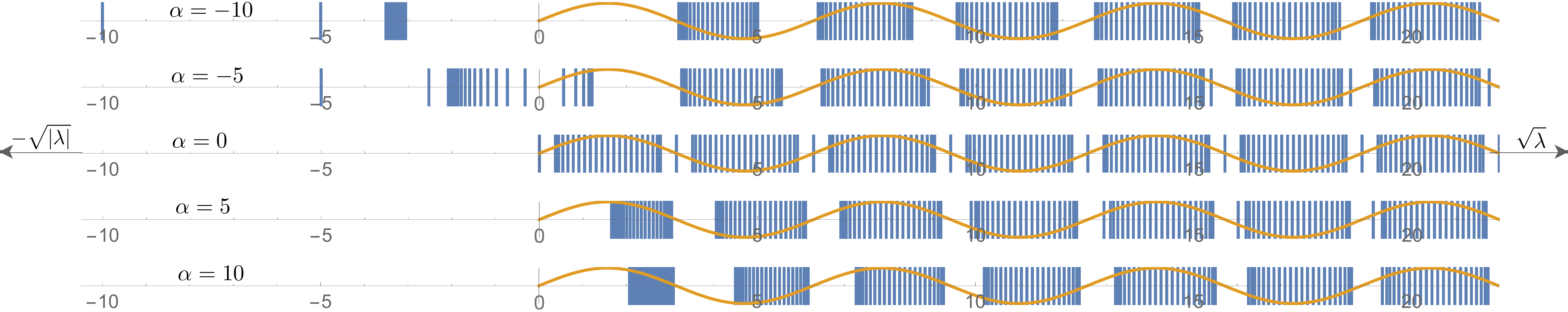}\vspace{-1.5ex}}
 \caption{\small  The spectrum of the linear quantum graph $(\Lambda_n,B_n)$ as a function of $\mu\!=\!\mathrm{sgn}\,\lambda\sqrt{|\lambda|}$ ({\itshape i.e.,} $\lambda\!=\!|\mu|\mu$) for various values of the Robin parameter~$\alpha$.  The potential is $q(x)\!=\!0$, and the branching number is $b\!=\!2$, and the length of the graph is $n\!=\!22$.  Clusters of eigenvalues are confined between the Dirichlet eigenvalues of the interval [0,1], which are the non-zero roots of $\sin\mu$, graphed in orange.  For $\alpha\!=\!0$, the spectrum is periodic for $\mu\!>\!0$.  For any $\alpha$, the spectrum approaches that for $\alpha\!=\!0$ as $\mu\to\infty$.  For $\alpha\!>\!0$, eigenvalues are pushed to the right, and for $\alpha\!<\!0$, they are pushed to the left.  As $\alpha\!\to\!-\infty$, the leftmost cluster moves to $-\infty$ and two rogue eigenvalues break off.}
 \label{fig:Alphas}
\end{figure}

\begin{figure}[h]
\centerline{\includegraphics[width=0.95\linewidth]{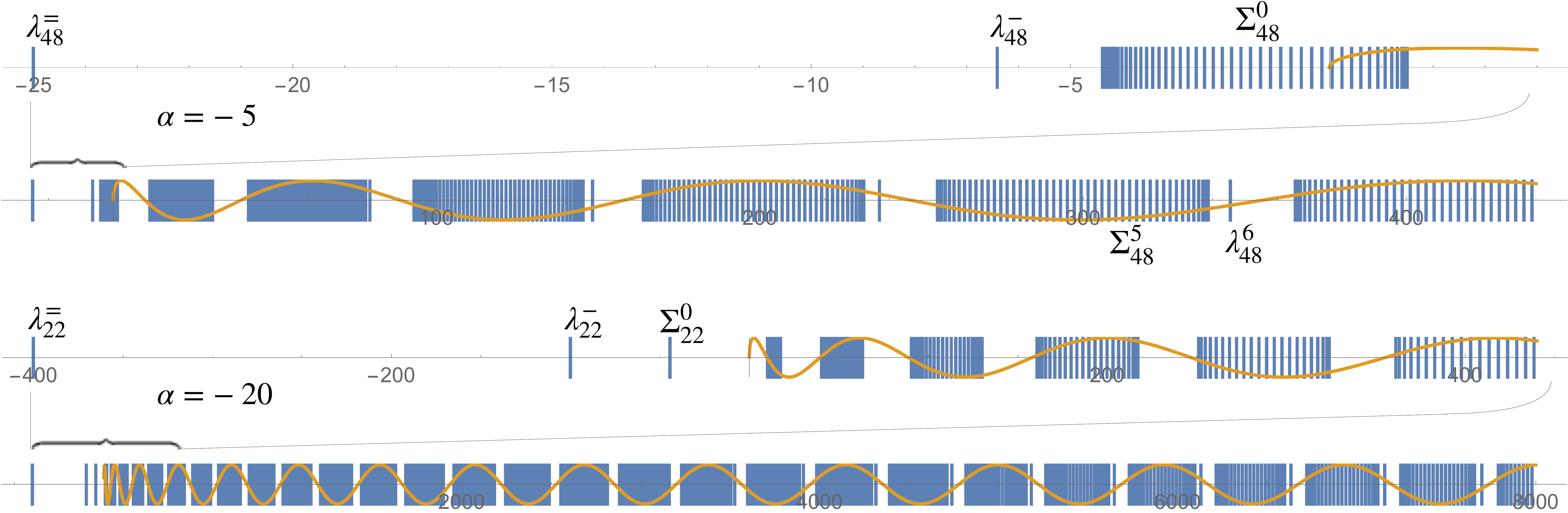}\vspace{-1.5ex}}
 \caption{\small The spectrum of the linear graph $(\Lambda_n,B_n)$, with the Robin condition at all vertices, is part of the spectrum of the regular quantum tree $(\Gamma_{\!n},A_n)$.  Here, $b=2$, the Robin parameter is $\alpha\!=\!-5$ and $\alpha\!=\!-20$, and the potential is $q(x)\!=\!0$.  (The subscript $n\!=\!48$ or $n\!=\!22$ indicates the number of levels of the tree.)  As $\alpha\to-\infty$, two {\em rogue eigenvalues} $\lambda_n^=$ and $\lambda_n^-$ and the {\em rogue cluster} $\Sigma_n^0$ tend to $-\infty$ as $-\alpha^2$, $-b^{-2}\alpha^2$, and $-(b+1)^{-2}\alpha^2$, with $\Sigma_n^0$ becoming exponentially thin.  Each {\em cluster} $\Sigma_n^k$ for $k\geq1$ lies between the Dirichlet eigenvalues $\lambda_D^k$ and $\lambda_D^{k+1}$ of the interval $[0,1]$ (which for $q(x)\!=\!0$ are the non-zero roots of $\sin(\sqrt{\lambda})$, graphed in orange).  Each two successive clusters are interlaced by {\em an intermediate eigenvalue}~$\lambda_n^k$.  For large negative $\alpha$, the intermediate eigenvalue $\lambda_n^k$ is very close to the upper end of the cluster $\Sigma_n^{k-1}$ for smaller $\lambda$ and moves rightward for larger~$\lambda$.}
 \label{fig:SpecB}
\end{figure}

\begin{figure}[ht]
\centerline{\includegraphics[width=0.95\linewidth]{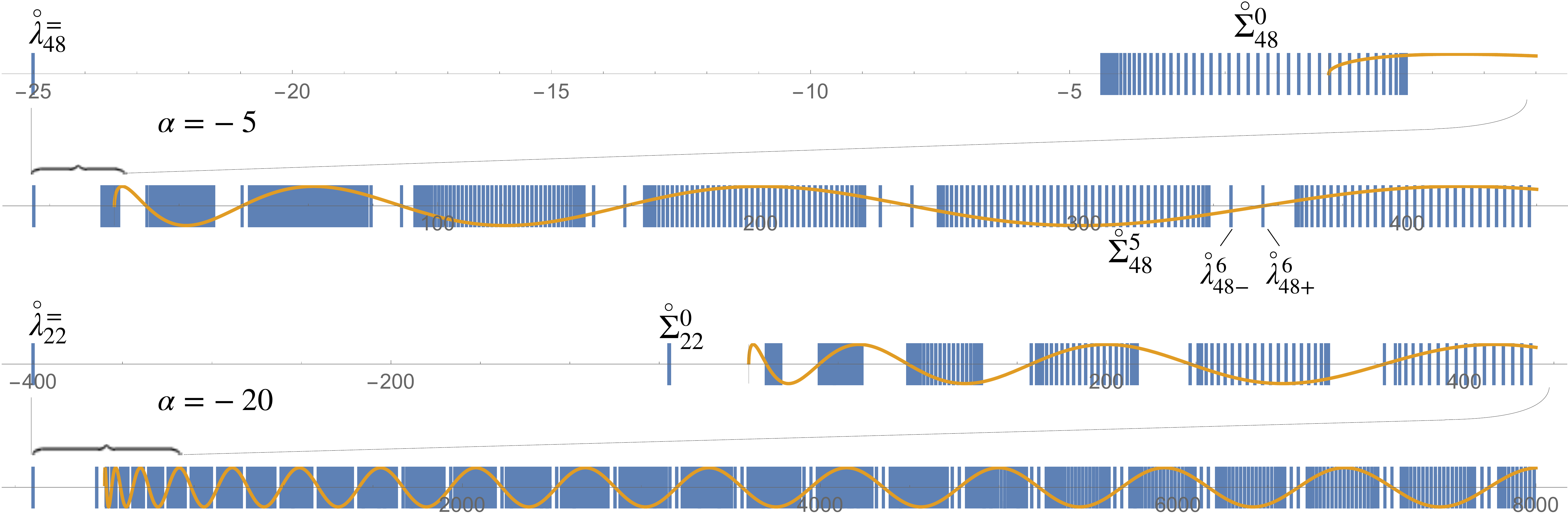}\vspace{-1.5ex}}
 \caption{\small The spectrum of the linear graphs $(\Lambda_m,\Bo_m)$ for $m\leq n$ with the Robin condition at all vertices except the root, is part of the spectrum of the regular quantum tree $(\Gamma_{\!n},A_n)$.  Here, $b=2$, the Robin parameter is $\alpha\!=\!-5$ and $\alpha\!=\!-20$, and the potential is $q(x)\!=\!0$.  (The subscript $m\!=\!49$ or $m\!=\!21$ indicates the number of levels of the tree.)  As $\alpha\to-\infty$, one {\em rogue eigenvalue} $\lo_m^=$ and the {\em rogue cluster} $\So_m^0$ tend to $-\infty$ as $-\alpha^2$ and $-(b+1)^{-2}\alpha^2$, with $\Sigma_n^0$ becoming exponentially thin.  Each {\em cluster} $\So_m^k$ for $k\geq1$ lies between the Dirichlet eigenvalues $\lambda_D^k$ and $\lambda_D^{k+1}$ of the interval (which for $q(x)\!=\!0$ are the non-zero roots of $\sin(\sqrt{\lambda})$, graphed in orange).  Each two successive clusters are interlaced by {\em two intermediate eigenvalues}~$\lo_{m-}^k$ and $\lo_{m+}^k$.  For large negative $\alpha$, the intermediate eigenvalue $\lo_{m-}^k$ is very close to the upper end of cluster $\So_m^{k-1}$ for smaller $\lambda$ and moves rightward for larger~$\lambda$.}
 \label{fig:SpecBo}
\end{figure}

%%%%%%%%%%%%%%%%%%%%%%%%%%%%%%%%%%%%%%%%%%%%
\section{Reduction to linear discrete graphs}\label{sec:linear}

A regular homogeneous quantum graph has a large group of symmetries, which reduces it to a direct sum of linear graphs.  This has been known since the work of Carlson~\cite{Carlson2000} and Naimark and Solomyak~\cite{NaimarkSolomyak2000,Solomyak2003,Solomyak2004}.

The decomposition described in~\cite[Theorem~7.2]{NaimarkSolomyak2000}, \cite[\S4]{Solomyak2003}, and~\cite[Theorem~3.2]{Solomyak2004} proceeds as follows.  Let $(\Gamma_{\!n},\Asym_n)$ be the quantum tree equal to $(\Gamma_{\!n},A_n)$ but restricted to those functions in the domain of $A_n$ that are constant across all $b^\ell$ points at any given level $\ell$ of $\G_n$.  Such functions $f$ are called {\em completely symmetric}.
Let $(\Gamma_{\!n},\Ao_n)$ be the same quantum tree as $(\Gamma_{\!n},A_n)$ except with the Dirichlet (zero-value) instead of Robin condition imposed at the root vertex; and let $(\Gamma_{\!n},\Aosym_n)$ be its restriction to completely symmetric functions.
Then
\begin{equation}
  A_n \;\cong\; \Asym_n
      \;\oplus\; \sum_{m=1}^n \big(\Aosym_m\big)^{(b-1)b^{n-m}},
\end{equation}
in which the power refers to the $(b-1)b^{n-m}$-fold direct sum of $\Aosym_m$ with itself.

The next step in the simplification of $(\G_n,A_n)$ is the reduction of $\Asym_n$ and $\Aosym_n$ to Schr\"odinger operators on a linear graph $\Lambda_n$ with $n+1$ vertices and $n$ edges, as depicted in Fig.\,\ref{fig:linear}.  This process is described in~\cite{Carlson2000} and in~\cite[\S4,\,eqn.~6]{Solomyak2003}.  Essentially, since a function in $\dom(\Asym_n)$ or $\dom(\Aosym_n)$ is invariant across any given level, it is determined by its value on a single edge of each level, so that all $b^{\ell+1}$ edges emanating down from vertices at level $\ell$ can be compressed into just one edge.
A function $u:\Lambda_n\to\CC$ is denoted by the $n$-tuple $u=\left\{ u_k \right\}_{k=1}^n$ of restrictions to the edges of $\Lambda_n$, in which each $u_k$ is a function of the unit interval $[0,1]$ that parameterizes the edge.

\begin{figure}[ht]
\centerline{\includegraphics[width=0.68\linewidth]{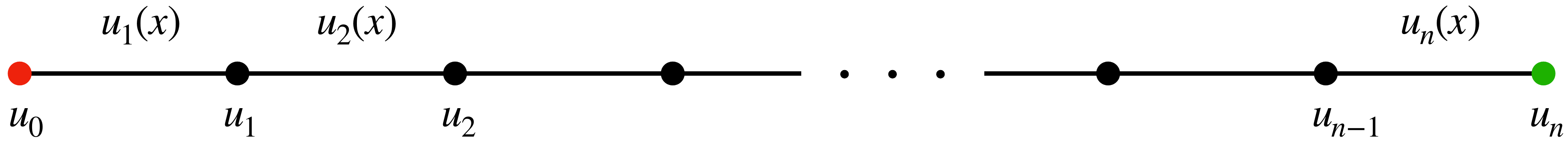}\vspace{-.5ex}}
 \caption{\small The notation for functions $\{u_k(x)\}_{k=1}^n$ on the linear graphs $\Lambda_n$ and their values $u_k$ at the vertices.}
 \label{fig:linear}
\end{figure}

The Schr\"odinger operators $(\Lambda_n,B_n)$ and $(\Lambda_n,\Bo_n)$ on these linear quantum trees are self-adjoint with respect to a weighted inner product,
\begin{equation}\label{inner}
  \langle f, g \rangle = \sum_{k=1}^n \, b^k\! \int_0^1 f_k(x) \,\bar g_k(x) \,dx\,.
\end{equation}
The Robin condition for $B_n$ induced by the vertex condition (\ref{Robin}) is
\begin{equation}\label{Robin1}
  - u'_i(1) + b u'_{i+1}(0) - \alpha u_{i+1}(0) = 0, \quad \text{for } 1\leq i\leq n-1 \\
\end{equation}
for the interior vertices, with $bu'_1(0) - \alpha f_1(0) = 0$ at the root and $-u'_n(1) - \alpha u_n(1) = 0$ at the end.  The condition for $\Bo_n$ is the same except for the boundary condition at the root vertex, which is $u_1(0) = 0$.
The operators $\Asym_n$ and $B_n$ are unitarily similar to each other, as are the operators $\Aosym_n$ and $\Bo_n$.  The decomposition theorem becomes
\begin{equation}\label{decomposition}
  A_n \;\cong\; B_n \;\oplus\;
      \bigoplus_{m=1}^n \big(\Bo_m\big)^{(b-1)b^{n-m}}.
\end{equation}
The spectrum of $A_n$ can now be computed just by computing the spectrum of two types of linear quantum trees, $B_n$ and $\Bo_m$.

\begin{theorem}[\cite{NaimarkSolomyak2000,Solomyak2004}]\label{thm:decomposition}
Let $(\Gamma_{\!n},A_n)$ be a regular quantum tree with vertex conditions (\ref{Robin}), as described above, and let $B$ and $\Bo$ be the associated linear quantum graphs with vertex conditions described in (\ref{Robin1}) and the following text.  Then the spectrum of $A_n$ is equal to the following union of spectra:
\begin{equation}\label{union}
  \sigma(A_n) \;=\; \sigma(B_n) \;\cup\;
    \bigcup_{m=1}^n \sigma(\Bo_m).
\end{equation}
Each of these spectra consists only of a discrete set of eigenvalues, and the elements of $\sigma(\Bo_m)$ have multiplicity $(b-1)b^{n-m}$ for $A_n$, provided all these $n+1$ spectral sets are mutually disjoint.
\end{theorem}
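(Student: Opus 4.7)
The plan is to derive the spectral identity (\ref{union}) as a direct corollary of the orthogonal direct-sum decomposition (\ref{decomposition}), so the bulk of the work is to justify that decomposition in the Hilbert-space sense. I would begin by exhibiting the natural invariant subspaces of $L^2(\Gamma_{\!n})$ coming from the cyclic permutation symmetries of the subtrees. For each interior vertex $v$ at level $\ell$ with $0\leq\ell\leq n-1$, let $T_1(v),\dots,T_b(v)$ denote the $b$ subtrees rooted at the children of $v$ (each isometric to $\Gamma_{\!n-\ell-1}$ attached to $v$ by one edge, hence a copy of $\Gamma_{\!n-\ell}$ with root $v$), and define $W_v\subset L^2(\Gamma_{\!n})$ to be the space of functions supported on $\bigcup_i T_i(v)$ whose restrictions $f_i$ to $T_i(v)$ are completely symmetric inside each $T_i(v)$ and satisfy $\sum_{i=1}^b f_i \equiv 0$ under the canonical identification of the subtrees. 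The orthogonal decomposition
\begin{equation*}
 L^2(\Gamma_{\!n}) \;=\; L^2_{\mathrm{sym}}(\Gamma_{\!n}) \;\oplus\; \bigoplus_{\ell=0}^{n-1}\,\bigoplus_{v\text{ at level }\ell}\! W_v
\end{equation*}
can be verified by an induction on $n$, or equivalently by counting the isotypic components under the full permutation group of the tree.

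Next I would check that $A_n$ preserves each summand. Invariance of $L^2_{\mathrm{sym}}(\Gamma_{\!n})$ is immediate since $A_n$, the potential $q$, the edge parametrizations, and the Robin condition (\ref{Robin}) are all invariant under the subtree-permutation group. Invariance of $W_v$ follows because: (i) the condition $\sum_i f_i\equiv 0$ forces $f(v)=0$, so functions in $W_v$ meet the requirement that $A_n f$ be zero off $\bigcup_i T_i(v)$; (ii) the Robin conditions at vertices inside the subtrees are preserved by the symmetric structure; and (iii) at $v$ itself the cancellation of the $f_i$'s makes the Robin equation at $v$ hold trivially with both sides zero, which in particular is the Dirichlet condition required to patch the $(b-1)$ independent "difference" functions into $\dom(\Aosym_{n-\ell})$.

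The third step is to identify the restrictions with the linear-graph operators. For the symmetric component, the compression map sending $f\in L^2_{\mathrm{sym}}(\Gamma_{\!n})$ to the vector $u=(u_1,\dots,u_n)$ of its values on one edge per level becomes unitary from $L^2_{\mathrm{sym}}(\Gamma_{\!n})$ onto $L^2(\Lambda_n)$ with the weighted inner product (\ref{inner}) (the $b^k$ weight counts the $b^k$ identified edges at level $k$), and Robin condition (\ref{Robin}) transforms into (\ref{Robin1}), so $\Asym_n$ corresponds to $B_n$. For each $W_v$ with $v$ at level $\ell$, the subspace $\{(f_1,\dots,f_b):\sum f_i=0\}$ is $(b-1)$-dimensional over functions on a single subtree $\Gamma_{\!n-\ell}$ with Dirichlet condition at the root $v$; an orthonormal basis of that $(b-1)$-dimensional coefficient space diagonalizes $A_n|_{W_v}$ into $(b-1)$ mutually isomorphic copies of $\Aosym_{n-\ell}$, which in turn is unitarily equivalent to $\Bo_{n-\ell}$ via the same compression. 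Collecting contributions over the $b^\ell$ vertices at level $\ell$ and setting $m=n-\ell$ yields the multiplicity $(b-1)b^{n-m}$ and gives (\ref{decomposition}).

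Finally, once (\ref{decomposition}) is established as a unitary equivalence, the spectral statement is automatic: $\sigma(A_n)$ is the union of the spectra of the summands, which gives (\ref{union}); discreteness of each $\sigma(B_n)$ and $\sigma(\Bo_m)$ follows from the fact that each is a regular Sturm--Liouville problem on a finite interval glued by finitely many self-adjoint vertex conditions, hence has compact resolvent; and under the disjointness hypothesis, the multiplicity of any $\lambda\in\sigma(\Bo_m)\subset\sigma(A_n)$ equals the number of copies of $\Bo_m$ appearing in (\ref{decomposition}), namely $(b-1)b^{n-m}$. The main technical obstacle in this plan is the verification that $W_v$ is actually invariant at $v$ itself -- one has to check carefully that the Robin vertex condition (\ref{Robin}) at $v$ is equivalent, on the antisymmetric subspace, to imposing the Dirichlet condition $f(v)=0$ that appears in the definition of $\Aosym$; once this is confirmed the rest is bookkeeping.
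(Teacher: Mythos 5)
Your proposal is correct, and it takes the same route as the source the paper relies on: the paper does not prove Theorem~\ref{thm:decomposition} itself but cites it from Naimark--Solomyak and Solomyak, and your invariant-subspace construction---branchwise-symmetric, sum-zero spaces $W_v$ at each non-terminal vertex, with the sum-zero constraint forcing the Dirichlet value at $v$ and rendering the flux condition there automatic, followed by compression to the weighted linear graphs---is precisely the standard decomposition argument of those references, which the paper summarizes in (\ref{decomposition}) and the surrounding text. The only bookkeeping worth tightening is that each branch $T_i(v)$ has a degree-one root rather than the degree-$b$ root of $\Gamma_{\!n-\ell}$, so its compression weights differ from (\ref{inner}) by an overall factor of $b$; this is harmless since a constant rescaling of the inner product still yields a unitary equivalence with the same $\Bo_{n-\ell}$.
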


The equations $Bu=\lambda u$ and $\Bo u=\lambda u$ can be reduced to problems on discrete graphs since the solutions to $(-d^2/dx^2+q(x)-\lambda)u=0$ on $[0,1]$ are determined by their values at $0$ and $1$, unless $\lambda$ is a Dirichlet eigenvalue.  This reduction is a standard procedure in the study of quantum graphs.  Let $c(x,\lambda)$ and $s(x,\lambda)$ be solutions of \,$-u'' + q(x) u = \lambda u$\, satisfying the initial conditions
\begin{eqnarray*}
  c(0,\lambda) = 1 & s(0,\lambda)=0 \\
  c'(0,\lambda) = 0 & s'(0,\lambda)=1,
\end{eqnarray*}
in which the prime denotes the derivative with respect to the first argument $x$.
Under the symmetry assumption $q(1-x)=q(x)$, one has $c(1,\lambda)=s'(1,\lambda)$.
Define
\begin{eqnarray}
  c(\lambda) &:=& c(1,\lambda) = s'(1,\lambda) \\
  s(\lambda) &:=& s(1,\lambda).
\end{eqnarray}
As long as $s(\lambda)\not=0$ ($\lambda$ is not a Dirichlet eigenvalue of $-d^2/dx^2 + q(x)$ on $[0,1]$), the derivatives of $u$ at the endpoints are obtained from the values of $u$ by the Dirichlet-to-Neumann map
\begin{equation}\label{DtN}
\renewcommand{\arraystretch}{1.2}
\frac{1}{s(\lambda)}
\left[\hspace{-5pt}
\begin{array}{cc}
  -c(\lambda) & 1 \\ 1 & -c(\lambda)
\end{array}
\hspace{-3pt}\right]
\renewcommand{\arraystretch}{1.2}
\left[\hspace{-5pt}
\begin{array}{c}
  u(0) \\ u(1)
\end{array}
\hspace{-5pt}\right]
 =
\renewcommand{\arraystretch}{1.2}
\left[\hspace{-5pt}
\begin{array}{c}
  u'(0) \\ -u'(1)
\end{array}
\hspace{-5pt}\right].
\end{equation}
We denote the Dirichlet spectrum for the potential $q(x)$ by
\begin{equation}\label{sigmaD}
  \sigma_{\!D}(q) \;:=\;
  \{  \lambda\in\RR : s(\lambda) = 0 \}.
\end{equation}

Let $u=\left\{ u_k(x) \right\}_{k=1}^n$ satisfy $Bu=\lambda u$, and let $u_k$ be the values at the vertices as in Fig.\,\ref{fig:linear}:
$u_0 := u_1(0)$ and $u_k := u_{k+1}(0) = u_k(1)$ and $u_n := u_n(1)$.
Since $-u''_k + q(x) u_k = \lambda u_k$, the Robin condition (\ref{Robin1}) can be written solely in terms of the~$u_k$ if $s(\lambda)\not=0$,
\begin{equation}\label{Robindiscrete}
\begin{aligned}
-(cb+s\alpha)u_0+bu_1&=0\\
u_{k-1}-(c(b+1)+s\alpha)u_k+bu_{k+1}&=0 \quad (0< k<n) \\
u_{n-1}-(c+s\alpha)u_n&=0.
\end{aligned}
\end{equation}
This is a homogeneous system for $\{u_k\}_{k=0}^n$ with $(n+1)\!\times\!(n+1)$ matrix of coefficients 
\begin{equation}\label{Mn}
M_n\;=\;
{\small
\begin{bmatrix}
cb+s\alpha &-b &0 &0  & \cdots &0\\
-1 &c(b+1)+s\alpha &-b  &0 & \cdots &0\\
0 &\ddots &\ddots &\ddots  &\ddots &\vdots\\
\vdots &\ddots &\ddots &\ddots  &\ddots &0\\
0 &\cdots &0 & -1 &c(b+1)+s\alpha &-b\\
0 & \cdots &0  &0 &-1 &c+s\alpha         
\end{bmatrix},
}
\end{equation}
which realizes the reduction to a Jacobi matrix (with respect to the inner product induced by the matrix $\mathrm{diag}(1,b,...,b^n)$) mentioned in the Introduction, with the $\lambda$ dependence occurring in $c(\lambda)$ and $s(\lambda)$.
Define the determinants
\begin{equation}\label{Dn}
   D_n = \det M_n \quad (n\geq1)\,;
   \qquad
   D_0 = s\alpha\,,
   \quad
   D_{-1} = 1-c^2\,.
\end{equation}
The values of $\lambda\not\in\sigma_{\!D}(q)$ for which $D_n=0$ are the eigenvalues of the $n$-level quantum tree.  The case that $\lambda\in\sigma_{\!D}(q)$ must be handled separately.
For the case of the Dirichlet condition at the root vertex (but retaining Robin conditions at the other vertices), just the first row of the matrix of coefficients differs from the Robin case.  The new matrix is
\begin{equation}\label{Mno}
\Mo_n\;=\;
{\small
\begin{bmatrix}
1 &0 &0 &0  &\cdots &0\\
-1 &c(b+1)+s\alpha &-b  &0 &\cdots &0\\
0 &\ddots &\ddots &\ddots  &\ddots &\vdots\\
\vdots &\ddots &\ddots &\ddots  &\ddots &0\\
0 &\cdots &0 & -1 &c(b+1)+s\alpha &-b\\
0 & \cdots &0  &0 &-1 &c+s\alpha         
\end{bmatrix}.
}
\end{equation}
Define the determinants
\begin{equation}\label{Don}
   \Do_n = \det \Mo_n \quad (n\geq1)\,;
   \qquad
   \Do_0 = 1\,,
   \quad
   \Do_{-1} = c\,.
\end{equation}
Values of $\lambda\not\in\sigma_{\!D}(q)$ for which $\Do_n=0$ are the eigenvalues of the $n$-level quantum tree with the Dirichlet condition at the root.

\begin{theorem}\label{thm:recurrence}  %%%%%%%%%%%%%%%%%%%%%%%%%%%%%%
The quantities $D_n$ and $\Do_n$ satisfy the same second-order recurrence relation,
\begin{equation}\label{Drecurrence}
  D_n \;=\; [c(b+1)+s\alpha]\, D_{n-1} - b\, D_{n-2}
  \qquad
  (n\geq1),
\end{equation}
\begin{equation}\label{Dorecurrence}
  \Do_n \;=\; [c(b+1)+s\alpha]\, \Do_{n-1} - b\, \Do_{n-2}
  \qquad
  (n\geq1).
\end{equation}
\end{theorem}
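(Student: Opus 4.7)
The plan is to reduce the recurrence to the standard three-term cofactor recursion for a purely tridiagonal matrix, by isolating the non-standard last diagonal entry of $M_n$ (and of $\Mo_n$) and handling it with a single application of determinant linearity.

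For $D_n$, I would introduce the auxiliary $(n+1)\times(n+1)$ matrix $\tilde L_{n+1}$ obtained from $M_n$ by replacing the bottom-right entry $c+s\alpha$ with $c(b+1)+s\alpha$. Since $\tilde L_{n+1}$ and $M_n$ differ only at position $(n+1,n+1)$, by the scalar $(c+s\alpha)-(c(b+1)+s\alpha)=-bc$, linearity of the determinant in a single entry gives
\begin{equation*}
D_n \;=\; \det \tilde L_{n+1} \;-\; bc\,\det \tilde L_n,
\end{equation*}
where $\tilde L_n$ denotes the leading $n\times n$ principal submatrix (which agrees for $M_n$ and $\tilde L_{n+1}$). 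The last row of $\tilde L_k$ is now the ``interior'' tridiagonal row $(0,\ldots,0,-1,\,c(b+1)+s\alpha)$, so ordinary cofactor expansion along it yields the clean recurrence
\begin{equation*}
\det \tilde L_k \;=\; [c(b+1)+s\alpha]\,\det \tilde L_{k-1} \;-\; b\,\det \tilde L_{k-2}, \qquad k\geq 2.
\end{equation*}
Substituting these two identities into the combination $[c(b+1)+s\alpha]D_{n-1}-b D_{n-2}$ and regrouping collapses it to $\det\tilde L_{n+1}-bc\det\tilde L_n = D_n$, establishing (\ref{Drecurrence}) for $n\geq 2$. The case $n=1$ is a direct $2\times 2$ verification against the stated initial values $D_0=s\alpha$ and $D_{-1}=1-c^2$.

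For $\Do_n$, I would first expand $\Mo_n$ along its first row $(1,0,\ldots,0)$, which reduces $\Do_n$ to the determinant of the $n\times n$ lower-right block of $\Mo_n$. That block is ``standard'' tridiagonal (every diagonal entry equal to $c(b+1)+s\alpha$, off-diagonals $-1$ and $-b$) except at its last diagonal entry, where it carries $c+s\alpha$ in place of $c(b+1)+s\alpha$. The same ``standardize and adjust'' trick, applied now to the fully homogeneous tridiagonal auxiliary matrix, represents $\Do_n$ as a two-term linear combination of two consecutive determinants of that auxiliary matrix, each of which satisfies the same standard tridiagonal recursion; the analogous linear-combination computation yields (\ref{Dorecurrence}). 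The base case $n=1$ is again checked directly against $\Do_0=1$ and $\Do_{-1}=c$.

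The main obstacle is purely clerical: the non-standard top and bottom diagonal entries of $M_n$ and $\Mo_n$ prevent a direct application of the tridiagonal three-term recurrence, so one must carefully ``standardize'' the bottom entry before expanding and then propagate the recurrence through a one-parameter family of boundary adjustments. No analytic difficulty arises; the whole argument is a matter of organizing cofactor expansions so that the interior tridiagonal pattern is exposed, and the fact that $D_n$ and $\Do_n$ satisfy the \emph{same} recurrence simply reflects that both top-boundary conventions affect only the initial data, never the interior step.
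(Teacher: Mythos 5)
Your proposal is correct and takes essentially the same route as the paper: your auxiliary matrix $\tilde L_{n+1}$ is exactly the paper's $N_n$ (lower-right entry standardized to $c(b+1)+s\alpha$), your single-entry linearity identity $D_n=\det\tilde L_{n+1}-bc\,\det\tilde L_n$ is the paper's relation $D_n=E_n-cb\,E_{n-1}$, and both arguments conclude by observing that $D_n$, being a fixed linear combination of two consecutive terms of a sequence obeying the standard tridiagonal three-term recursion, obeys that recursion itself. The treatment of $\Do_n$ matches as well; your preliminary expansion along the first row of $\Mo_n$ (versus the paper working with the full matrix $\No_n$) is an immaterial difference.
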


\begin{proof} %%%%%%%%%%%%%%%%%%%%%%%%%%%%%%%%%%%%%%%%%%%
  Define $N_n$ to be the $(n+1)\!\times\!(n+1)$ matrix whose entries are identical to those of $M_n$ except that the $(n+1,n+1)$ (lower right) entry is $c(b+1)+s\alpha$ in place of $c+s\alpha$.
Define the determinants
$E_n = \det N_n$ for $n\geq0$ and $E_{-1} = 1$ and $E_{-2} = cb^{-1}$.
One computes that $D_n = E_n - cb\, E_{n-1}\;(n\geq-1)$
and that $\{E_n\}$ satisfies the recursion
\begin{equation}
  E_n = [c(b+1)+s\alpha] E_{n-1} - b E_{n-2}
  \qquad
  (n\geq0),
\end{equation}
and therefore so does $\{E_{n-1}\}\;(n\geq1)$.
$\{D_n\}$ being a linear combination of these two also satisfies this recursion.

For~$\{\Do_n\}$, define the matrices $\No_n$ by replacing the $(n+1,n+1)$ (lower right) entry of $\Mo_n$ by $c(b+1)+s\alpha$, and define
$E_n = \det N_n$ for $n\geq0$ and $E_{-1} = 0$ and $E_{-2} = -b^{-1}$.
Again, one computes the relations $\Do_n = \Eo_n - cb\, \Eo_{n-1}$ ($n\geq-1$) and
$\Eo_n = [c(b+1)+s\alpha] \Eo_{n-1} - b \Eo_{n-2}$ ($n\geq0$)
and thence the desired recurrence for $\{\Do_n\}$.
\end{proof}

%%%%%%%%%%%%%%%%%%%%%%%%%%%%%%%%%%%%%%%%%%%%
\section{Graphical analysis}\label{sec:graphical}

By replacing the spectral functions $c(\lambda)$ and $s(\lambda)$ in $D_n$ and $\Do_n$ by independent variables $y$ and $z$, the functions $D_n=D_n(y,z)$ and $\Do_n=\Do_n(y,z)$ become polynomials in two variables.  An energy $\lambda\not\in\sigma_{\!D}(q)$ is an eigenvalue of the linear quantum graph $(\Lambda_n,B_n)$ whenever
\begin{equation}
  D_n\big(c(\lambda),s(\lambda)\big) = 0
\end{equation}
and is an eigenvalue of $(\Lambda_n,\Bo_n)$ whenever $\Do_n\big(c(\lambda),s(\lambda)\big) = 0$.
This point of view has the advantage that the spectrum is represented geometrically by the intersection points of two objects in the $yz$-plane that separate the role of the potential $q(x)$ from the role of the parameters $\alpha$, $b$, and $n$. 
These objects are

\begin{enumerate}
  \item  The curve $\Spiral$ in the $yz$-plane parameterized by $y\!=\!c(\lambda)$ and $z\!=\!s(\lambda)$ ($\lambda\in\RR$).  This curve is determined by the potential $q(x)$.
  \item  The zero sets $D_n(y,z)=0$ and $\Do_n(y,z)=0$.  These are determined by the Robin parameter $\alpha$, the splitting degree $b$ of the tree, and the number of levels $n$.  Both $\alpha$ and $b$ appear in the recurrence relation given by Theorem~\ref{thm:recurrence}, and $\alpha$ appears in the initial condition (\ref{Dn}) for $\{D_n\}$.
\end{enumerate}

\noindent
\rev{The superposition of these two objects is shown in Fig.\,\ref{fig:DandSq0} for $q(x)\!=\!0$ and in detail in Fig.\,\ref{fig:DandS} for a nonzero potential $q(x)$.  Each point of intersection corresponding to $\lambda\not\in\sigma_{\!D}(q)$ gives an eigenvalue.  
The rough qualitative dependence of the spectrum on $\alpha$, as shown in~Fig.~\ref{fig:Alphas}, can be deduced readily from Fig.\,\ref{fig:DandSq0}.}

Note that the elements of $\sigma_{\!D}(q)$ are the values of $\lambda$ where $\Spiral$ intersects the $y$-axis $z\!=\!0$, that is, where $s(\lambda)=0$, and whether these are eigenvalues of the tree operator is treated in Sec.~\ref{intermediate}.

\begin{figure}[ht]
\centerline{\includegraphics[width=0.95\linewidth]{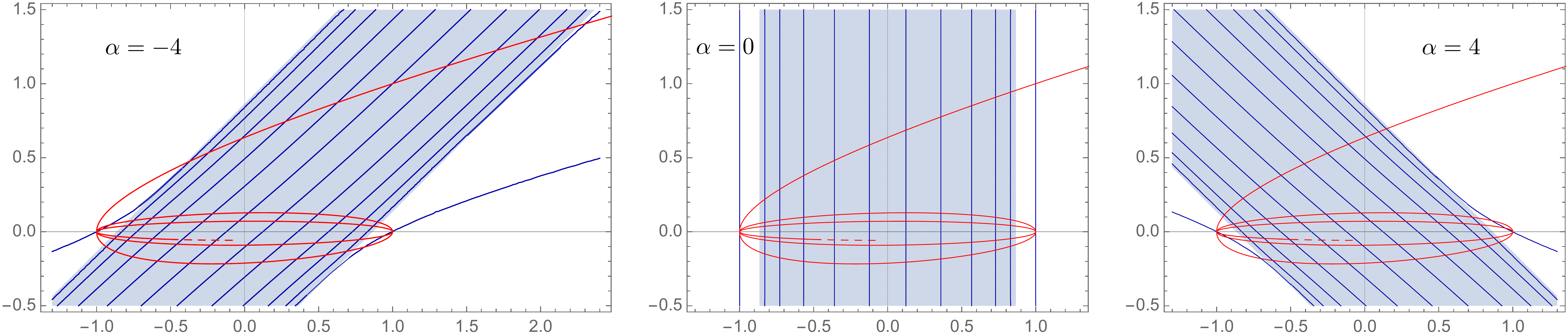}}
 \caption{\small The blue curves form the zero-sets of $D_n(y,z)$, and the red curve is $\Spiral: (y=\cos(\sqrt{\lambda}),\, z=\mathrm{sinc}(\sqrt{\lambda}))$ for potential $q(x)\!=\!0$.
 The $\lambda$-values of their intersection points are the eigenvalues of the linear quantum graph $(\Lambda_n,B_n)$.
The branching number is $b\!=\!3$, the Robin parameter is $\alpha\!=\!-2$, and the length of the graph is $n\!=\!11$.  The shaded part is the ``oscillatory region".  Compare Fig.~\ref{fig:Alphas}, which shows the eigenvalues for various values of~$\alpha$.
 }
 \label{fig:DandSq0}
\end{figure}

\begin{figure}[ht]
\centerline{\includegraphics[width=0.9\linewidth]{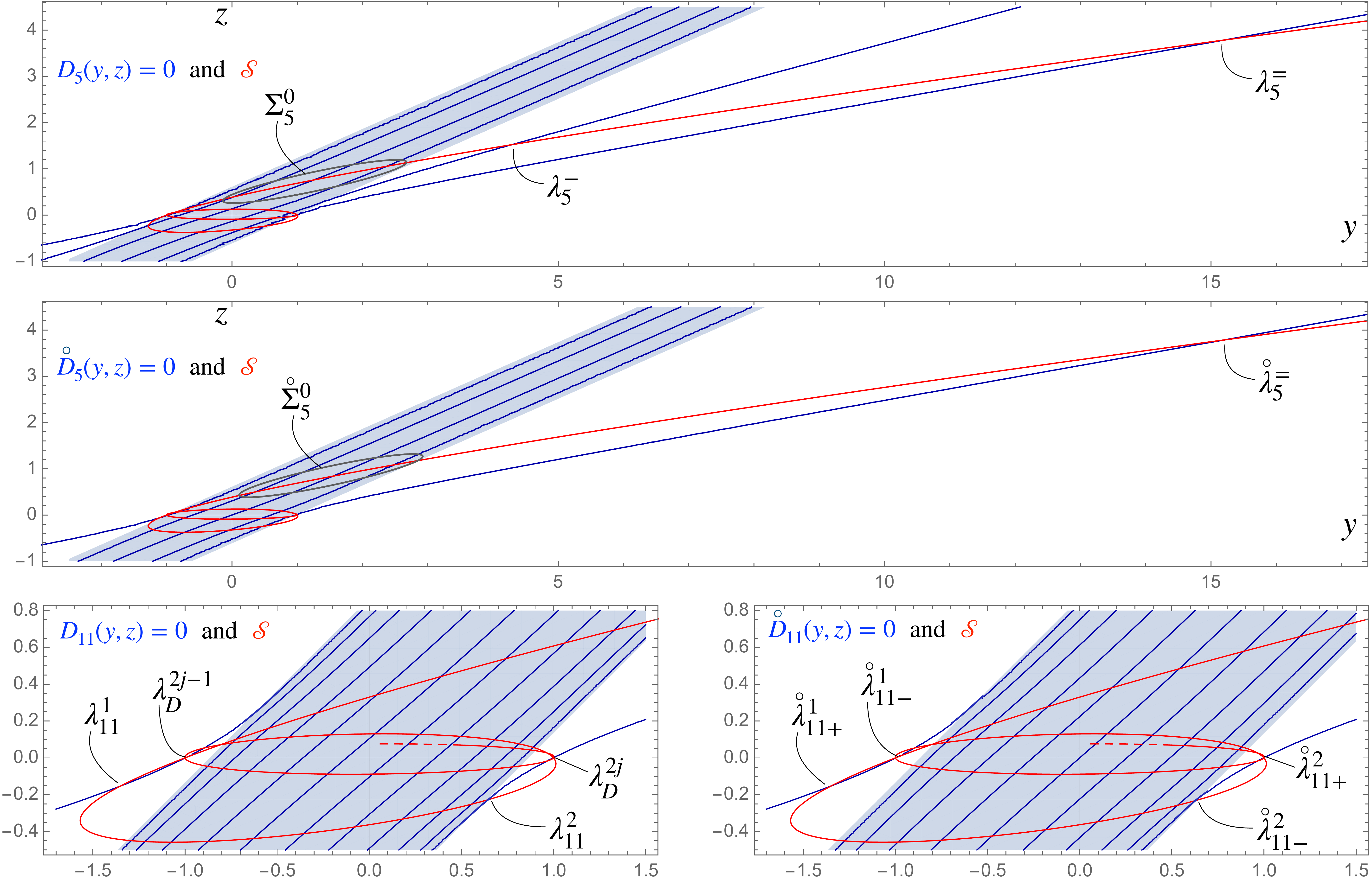}}
 \caption{\small The blue curves form the zero-sets of $D_n(y,z)$ or $\Do_n(y,z)$, which depend on the Robin parameter $\alpha$, the branching factor $b$, and the number of levels $n$ of the tree; and the red curve is $\Spiral: (y=c(\lambda),\, z=s(\lambda))$, which depends on $q(x)$ only.  The $\lambda$-values of their intersection points are the eigenvalues of the linear quantum graphs $(\Lambda_n,B_n)$ and $(\Lambda_n,\Bo_n)$.  The shaded part is the ``oscillatory region".
 {\bfseries Top:}  The top two plots show the rogue eigenvalues $\lambda_n^-$, $\lambda_n^=$, and $\lo_n^=$ and the rogue clusters $\Sigma_n^0$ and $\So_n^0$, which tend to $-\infty$ as $\alpha\to-\infty$.  Here, $\alpha=-4$ and $b=1.5$ for illustration, and $q(x)\!=\!-16\,\cchi_{[1\!/\!3,2\!/\!3]}(x)$.
{\bfseries Bottom:} For $(\Lambda_n,B_n)$, an intermediate eigenvalue $\lambda_n^k$ and the Dirichlet eigenvalue $\lambda_D^k$ for the interval interlace the bands $\Sigma_n^k$ for $k\geq1$; and for $(\Lambda_n,\Bo_n)$, two intermediate eigenvalues and the $k^\text{th}$ Dirichlet eigenvalue $\lo_{n-}^k<\lambda_D^k<\lo_{n+}^k$ interlace the bands $\So_n^k$ for $k\geq1$ (one of $\lo_{n\pm}^k$ is very close to $\lambda_D^k$).  Here, $\alpha=-4$ and $b=3$, and $q(x)\!=\!-22\,\cchi_{[1\!/\!3,2\!/\!3]}(x)$.
 }
 \label{fig:DandS}
\end{figure}

%%%%%%%%%%%%%%%%%%%%%%%%%%%%%%%%%%%%%%%%%%%%
\section{Spectrum of linear and tree graphs}\label{sec:linearspectrum}

The two theorems in this section give detailed accounts of the spectra of the linear quantum graphs with Robin and Dirichlet conditions at the root vertex, with particular attention to the behavior as the Robin parameter $\alpha$ tends to $-\infty$.  The qualitative structure of the spectra comes from the graphical analysis of the previous section, and proofs are given later.  The spectrum of our regular quantum tree is the union of spectra of linear graphs, as stated in Theorem~\ref{thm:decomposition}.

The eigenfunctions are also discussed.

\subsection{Eigenvalues}\label{sec:eigenvalues}

The Dirichlet eigenvalues of $-d^2/dx^2 + q(x)$ on $[0,1]$ separate clusters of eigenvalues of the linear quantum graphs $(\Lambda_n,B_n)$ and $(\Lambda_n,\Bo_n)$.  
\rev{This can be deduced from results on eigenvalue interlacing for quantum graphs~\cite[\S5]{BerkolaikoKuchment2012a}, where $\alpha\!=\!\infty$ corresponds to Dirichlet vertex conditions.  The graphical analysis illustrates this in a new light---see Fig.\,\ref{fig:DandS}.}  Denote the these eigenvalues by
\begin{equation}
  \lambda_D^1 < \lambda_D^2 < \dots < \lambda_D^k < \dots,
\end{equation}
and note that $\sigma_{\!D}=\sigma_{\!D}(q)=\{\lambda_D^k\}_{k=1}^\infty$ (recall eq.\,\ref{sigmaD}).
The eigenvalues of $(\Lambda_n,B_n)$ or $(\Lo_n,B_n)$ that lie below the lowest Dirichlet eigenvalue $\lambda_D^1$ are precisely those that tend to $-\infty$ as $\alpha\to-\infty$.

\begin{theorem}\label{thm:spectrum1}
  The spectrum of the linear quantum graph $(\Lambda_n,B_n)$ consists of
a sequence of clusters $\Sigma_n^k\!=\!\Sigma_n^k(q,\alpha)$ of eigenvalues ($k\geq0$), interlaced with intermediate eigenvalues $\{\lambda_n^k\}_{k=1}^\infty$, plus two rogue eigenvalues $\lambda_n^=<\lambda_n^-$ lying below the lowest cluster $\Sigma_n^0$.
\begin{enumerate}
\item
Each cluster except the $0^\text{th}$ contains $n\!-\!1$ numbers that lie between two Dirichlet eigenvalues:
\begin{equation}
  \Sigma_n^k \subset (\lambda_D^k,\,\lambda_D^{k+1})
  \qquad \text{for }\, k\geq1.
\end{equation}
\item
The rogue cluster $\Sigma_n^0$ contains $n\!-\!2$ numbers, which are less than $\lambda_D^1$.  Each $\lambda\in\Sigma_n^0\cup\{\lambda_n^1\}$ satisfies
\begin{equation}\label{hi1}
  \lambda \;=\; -(b+1)^{-2} \alpha^2 + O(1)
  \qquad (\alpha\to-\infty).
\end{equation}
\rev{If $q(x)\equiv0$, then $\Sigma_n^0$ is contained in an interval of exponentially small length,}
\begin{equation}\label{hi2}
  |\Sigma_n^0| \;\lesssim\; \frac{8\alpha^2}{(b+1)^2} e^{-(b+1)^{-1}|\alpha|}
  \qquad  (\alpha\to-\infty).
\end{equation}
\item
The intermediate eigenvalues satisfy $\lambda_n^k\to\lambda_D^k\;(k\to\infty)$ and interlace the clusters:
\begin{equation}
  \Sigma_n^{k-1} < \lambda_n^k < \Sigma_n^k
  \qquad \text{for }\, k\geq1.
\end{equation}
\item
\rev{As $k\to\infty$, the set $\sqrt{\Sigma_n^k(\alpha,q)}$ tends to the set $\sqrt{\Sigma_n^k(0,0)}$,
 which is periodic in the sense that $\sqrt{\Sigma_n^{k+1}(0,0)}=\sqrt{\Sigma_n^k(0,0)}+\pi$ for $k\geq0$.}
%As a result, for fixed $\alpha$ and sufficiently large $k$,
For sufficiently large $k$,
\begin{equation}
  \sqrt{\Sigma_n^k} \;\subset\; \pi k + (\beta,\,\pi\!-\!\beta),
  \qquad \big[\,\beta := \cos^{-1}\big(2/(b^{-1/2}+b^{1/2})\big)\,\big].
\end{equation}
\item
The rogue eigenvalues satisfy, for $\alpha\to-\infty$,
\begin{align}
   \lambda_n^- &\;=\; -b^{-2}\alpha^2 + O(1)  \label{hi3}\\
   \lambda_n^= &\;=\; -\alpha^2 + O(1)\,.  \label{hi4}
\end{align}
\end{enumerate}
\end{theorem}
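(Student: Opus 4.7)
The plan is to solve the recurrence of Theorem~\ref{thm:recurrence} in closed form and read off the spectral structure geometrically from the intersection of the spiral $\Spiral$ with the zero set of $D_n$. Introduce $\theta=\theta(\lambda)$ by $2\sqrt{b}\cos\theta=c(b+1)+s\alpha$; the characteristic roots of~(\ref{Drecurrence}) are $\mu_\pm=\sqrt{b}\,e^{\pm i\theta}$, complex conjugate of modulus $\sqrt{b}$ in the \emph{oscillatory region} $|c(b+1)+s\alpha|\le 2\sqrt{b}$, and real with $\mu_+\mu_-=b$ and $|\mu_+|>|\mu_-|$ in the \emph{hyperbolic region}. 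Writing $D_n=A(\lambda)\mu_+^n+B(\lambda)\mu_-^n$ with $A,B$ fixed by $D_0=s\alpha$ and $D_{-1}=1-c^2$, one obtains in the oscillatory region the form
\begin{equation*}
  D_n(\lambda)\;=\;b^{n/2}\bigl[P(\lambda)\sin(n+1)\theta+Q(\lambda)\cos(n+1)\theta\bigr],
\end{equation*}
whose zeros are nearly equispaced in $\theta$, giving the stated cluster sizes. In the hyperbolic region $D_n$ is dominated by $A\mu_+^n$ and has at most one zero per connected component.

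For parts~(1), (3), and~(4) I track $\theta(\lambda)$ along $\Spiral$ between consecutive Dirichlet eigenvalues $\lambda_D^k$. At each Dirichlet value $s=0$ and $c=\pm1$, so $\cos\theta=\pm(b+1)/(2\sqrt{b})$, which lies strictly outside $[-1,1]$ since $(b+1)/(2\sqrt{b})>1$. Hence every oscillatory window is bounded on both sides by hyperbolic intervals containing Dirichlet eigenvalues, giving part~(1). Standard Sturm--Liouville asymptotics $c(\lambda)=\cos\sqrt{\lambda}+o(1)$ and $s(\lambda)=\sin\sqrt{\lambda}/\sqrt{\lambda}+o(\lambda^{-1/2})$ as $\lambda\to\infty$ make the $\alpha s$ contribution subleading, so each oscillatory window asymptotically becomes $\sqrt{\lambda}\in\pi k+[\beta,\pi-\beta]$, proving~(4). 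The intermediate eigenvalue $\lambda_n^k$ is the single zero of $D_n$ inside the hyperbolic gap separating two oscillatory windows, produced by a sign change of the dominant coefficient $A(\lambda)$ across the gap; convergence $\lambda_n^k\to\lambda_D^k$ follows from the shrinking width of the gap at high energy.

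For parts~(2) and~(5) specialize to large negative $\alpha$ and set $\lambda=-\kappa^2$; for $q\equiv0$, $c=\cosh\kappa$ and $s=\kappa^{-1}\sinh\kappa$. The dominant term of $c(b+1)+\alpha s$ is $\tfrac12 e^\kappa[(b+1)+\alpha/\kappa]$, which is $O(1)$ precisely when $\kappa=-\alpha/(b+1)+O(e^{-\kappa})$, proving the cluster location~(\ref{hi1}). The derivative in $\kappa$ of this dominant term has size $\sim e^\kappa\sim e^{|\alpha|/(b+1)}$, so the $\pi$-sweep of $\theta$ compresses into a $\kappa$-interval of width $O(|\alpha|e^{-|\alpha|/(b+1)})$, which translates into~(\ref{hi2}) for the $\lambda$-width. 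For the two rogue eigenvalues, note that a zero of $D_n$ deep in the hyperbolic region forces $A(\lambda)$ to be exponentially small relative to $B(\lambda)$, since $(\mu_+/\mu_-)^n=-B/A$ and $|\mu_+/\mu_-|$ is huge. Direct computation gives
\begin{equation*}
A(\lambda)\;=\;\frac{b(1-c^2)-s\alpha\mu_+}{\mu_--\mu_+}\;\sim\;\frac{e^\kappa\bigl[b\kappa^2+\alpha(b+1)\kappa+\alpha^2\bigr]}{2\kappa^2[(b+1)+\alpha/\kappa]},
\end{equation*}
and the bracketed quadratic vanishes at $\kappa=-\alpha/b$ and $\kappa=-\alpha$, producing the two rogue eigenvalues with asymptotics~(\ref{hi3}) and~(\ref{hi4}).

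The main obstacle will be converting these asymptotic identifications into rigorous existence, uniqueness, and exhaustiveness statements uniform in $\alpha\to-\infty$: one must prove that the near-vanishing of $A$ at $\kappa\in\{-\alpha/b,-\alpha\}$ produces exactly one eigenvalue per root (not zero, not more), and that no further rogue eigenvalues hide below the cluster. This requires uniform estimates on the ratio $B(\lambda)/A(\lambda)$ together with precise two-sided bounds on $|\mu_+/\mu_-|$ across the hyperbolic region. The orthogonal-polynomial framework of Section~\ref{sec:orthogonal} supplies the natural vehicle, representing $D_n$ up to normalization as a Chebyshev-type orthogonal polynomial whose zero-counting and endpoint behavior are classical, and this same framework furnishes the uniform lower bound on $|\partial_\lambda\theta|$ needed to deduce the exponential thinness~(\ref{hi2}).
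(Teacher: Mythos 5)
Your closed-form solution $D_n=A\mu_+^n+B\mu_-^n$ via the characteristic roots of the recurrence is the same underlying mechanism as the paper's (compare (\ref{PQxi}), where $\xi+\xi^{-1}=b^{-1/2}v$), and your computations check out: the formula for $A(\lambda)$, the factorization $b\kappa^2+\alpha(b+1)\kappa+\alpha^2=(b\kappa+\alpha)(\kappa+\alpha)$ locating the rogues at $\kappa\approx-\alpha/b$ and $\kappa\approx-\alpha$, and the derivative-compression estimate for the cluster width all reproduce (\ref{hi1})--(\ref{hi4}). Your identification of the rogue eigenvalues as near-zeros of the growing-mode coefficient $A(\lambda)$ is a genuinely different and rather direct route: the paper instead proves that the two unconstrained components $\Cc_n^{n-1}$ and $\Cc_n^n$ of $\{D_n=0\}$ are asymptotic to the lines $\alpha z+by=0$ and $\alpha z+y=0$ (Theorem~\ref{thm:roguecurves}) and intersects them with $\Spiral$ via Lemma~\ref{lemma:intersection}; notably, the paper itself uses your coefficient-vanishing criterion for the semi-infinite graph in Section~\ref{sec:infinite}, so the two viewpoints are dual. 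For the width bound (\ref{hi2}), the paper pins the explicit constant by trapping $\Sigma_n^0$ in the strip $|\alpha z+(b+1)y|\le b+1$ and applying the refined case of Lemma~\ref{lemma:intersection} to the two bounding lines; your estimate via $dv/d\kappa$ gives the same exponential rate.

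There are, however, genuine gaps. The central one is exact counting and exhaustiveness, which you flag but which is the heart of the theorem: ``nearly equispaced zeros'' and ``at most one zero per hyperbolic component'' do not deliver it. Nothing forces $v(\lambda)=c(b+1)+s\alpha$ to be monotone on a Dirichlet interval, so a single interval $(\lambda_D^k,\lambda_D^{k+1})$ could a priori contain several oscillatory windows and more than $n-1$ zeros; and the domination of $D_n$ by $A\mu_+^n$ fails exactly near zeros of $A$, which is where the rogues live, so the one-zero-per-component claim needs its own proof. The paper's resolution is structural: Favard's theorem forces the zero set of $D_n$ to be $n+1$ disjoint graphs of increasing functions $y=g_n^k(z)$ (Theorem~\ref{thm:Dnzero}), and the exactly-one-crossing-per-pass claim is then proved by a deformation argument using analyticity in $\alpha$ (Sec.~\ref{clusters}): for $-\alpha$ large there is exactly one transverse intersection of $\Spiral$ with each component, and intersections on disjoint curves cannot collide as $\alpha$ varies. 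You would need to import this or prove an equivalent statement; your closing appeal to the orthogonal-polynomial framework is the right instinct but is not yet an argument.

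Two further defects. Your justification of $\lambda_n^k\to\lambda_D^k$ --- ``the shrinking width of the gap at high energy'' --- is false: by your own part~(4), in the variable $\sqrt{\lambda}$ the gap between consecutive clusters tends to the \emph{fixed} width $2\beta>0$. The correct mechanism (end of Sec.~\ref{intermediate}) is that $\Spiral$ collapses onto the segment $z=0$, $|y|\le1$ at rate $O(\lambda^{-1/2})$, so the two intersections of $\Spiral$ with $\Cc_n^0$ (resp.\ $\Cc_n^n$) --- one at $\lambda_D^k$, the other at $\lambda_n^k$ --- both converge to the fixed point $(\mp1,0)$ through which that curve passes. Finally, your whole apparatus presumes $s(\lambda)\neq0$ and is therefore silent at the Dirichlet energies themselves; deciding whether $\lambda_D^k\in\sigma(B_n)$ (and hence making the coalescence convention for $\lambda_n^k$ consistent) requires a separate argument, which the paper supplies with the dotted-graph technique, showing $\lambda_D^k$ is an eigenvalue if and only if it is a multiple root of $D_n$, i.e., a tangential intersection. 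A lesser point: (\ref{hi1}), (\ref{hi3}), (\ref{hi4}) are asserted for every symmetric $q\in L^2[0,1]$, while you compute only $q\equiv0$; Proposition~\ref{prop:Spiral}(3) gives $c,s\sim\frac12 e^{\nu}(1+O(\nu^{-1}))$ in general, so the same quadratic emerges with $O(1)$ errors, but this step should be written out.
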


\begin{theorem}\label{thm:spectrum2}
The spectrum of the linear quantum graph $(\Lambda_n,\Bo_n)$ consists of a sequence of clusters $\So_n^k\!=\!\So_n^k(q,\alpha)$ of eigenvalues ($k\geq0$), interlaced with pairs of intermediate eigenvalues $\{\lo_{n-}^k,\lo_{n+}^k\}_{k=1}^\infty$, plus one rogue eigenvalue $\lo_n^=$ lying below the lowest cluster $\Sigma_n^0$.
\begin{enumerate}
\item
Each cluster except the $0^\text{th}$ contains $n\!-\!2$ numbers that lie between two Dirichlet eigenvalues:
\begin{equation}\label{ho1}
  \So_n^k \subset (\lambda_D^k,\,\lambda_D^{k+1})
  \qquad \text{for }\, k\geq1.
\end{equation}
\item
The rogue cluster $\So_n^0$ contains $n\!-\!2$ numbers, which are less than $\lambda_D^1$.  Each $\lambda\in\So_n^0\cup\{\lo_{n-}^1\}$ satisfies
\begin{equation}
  \lambda \;=\; -(b+1)^{-2} \alpha^2 + O(1)
  \qquad (\alpha\to-\infty).
\end{equation}
\rev{If $q(x)\equiv0$, then $\So_n^0$ is contained in an interval of exponentially small length}
\begin{equation}
  |\So_n^0| \;\lesssim\; \frac{8\alpha^2}{(b+1)^2} e^{-(b+1)^{-1}|\alpha|}
  \qquad  (\alpha\to-\infty).
\end{equation}
\item
The intermediate eigenvalues satisfy $\{\lo_{n-}^k,\lo_{n+}^k\}\to\lambda_D^k\;(k\to\infty)$ and interlace the clusters:
\begin{equation}
  \So_n^{k-1} < \lo_{n-}^k < \lambda_D^k < \lo_{n+}^k < \So_n^k
  \qquad \text{for }\, k\geq1.
\end{equation}
\item
\rev{As $k\to\infty$, the set $\sqrt{\So_n^k(\alpha,q)}$ tends to the set $\sqrt{\So_n^k(0,0)}$
 which is periodic in the sense that $\sqrt{\So_n^{k+1}(0,0)}=\sqrt{\So_n^k(0,0)}+\pi$ for $k\geq0$.}
%As a result, for fixed $\alpha$ and sufficiently large $k$,
For sufficiently large $k$,
%For fixed $\alpha$ and sufficiently large $k$,
%
\begin{equation}
  \sqrt{\So_n^k} \;\subset\; \pi k + (\beta,\,\pi\!-\!\beta),
  \qquad \big[ \,\beta := \cos^{-1}(2/(b^{-1/2}+b^{1/2})\, \big].
\end{equation}
\item
The rogue eigenvalue satisfies, for $\alpha\to-\infty$,
\begin{equation}
   \lo_n^= \;=\; -\alpha^2 + O(1)\,.
\end{equation}
\end{enumerate}
\end{theorem}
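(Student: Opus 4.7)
The plan is to mimic the graphical-polynomial approach of Theorem~\ref{thm:spectrum1}, adjusting for the different initial data $\Do_0=1$, $\Do_{-1}=c$ under the common recurrence of Theorem~\ref{thm:recurrence}. Writing $\xi=(b+1)y+\alpha z$ and $\rho_\pm(\xi)=\tfrac{1}{2}(\xi\pm\sqrt{\xi^2-4b})$, one obtains the closed form $\Do_n(y,z)=A(y,\xi)\rho_+^n+B(y,\xi)\rho_-^n$ with $A,B$ determined by the initial data. The eigenvalues of $(\Lambda_n,\Bo_n)$ not in $\sigma_{\!D}(q)$ are the parameter values where the spiral $\Spiral:\lambda\mapsto(c(\lambda),s(\lambda))$ meets $\{\Do_n=0\}$. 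In the oscillatory region $|\xi|<2\sqrt{b}$, the substitution $\xi=2\sqrt{b}\cos\theta$ turns $\Do_n$ into a Chebyshev-type trigonometric polynomial with $n-1$ simple zeros in $\theta\in(0,\pi)$, producing $n-1$ arcs that cross the oscillatory strip in the $yz$-plane; outside the strip there are a few additional branches that will account for the rogue eigenvalue.

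Parts~(1), (3), and~(4) come from counting transverse intersections of $\Spiral$ with the strip arcs inside each Dirichlet interval $(\lambda_D^k,\lambda_D^{k+1})$. One sweep of $\Spiral$ across the strip per Dirichlet interval produces $n-2$ band eigenvalues in $\So_n^k$ together with a pair $\lo_{n\pm}^k$ bracketing $\lambda_D^k$. The algebraic feature distinguishing $\Do_n$ from $D_n$ is that $\Do_n(y,0)\neq 0$ at $\sigma_{\!D}(q)$ (since the recurrence is seeded by $\Do_0=1$ rather than $D_0=s\alpha$); consequently $\Spiral$ exits and re-enters the strip along two distinct arcs near each $\lambda_D^k$, yielding the pair $\lo_{n\pm}^k$ in place of the single intermediate $\lambda_n^k$ of Theorem~\ref{thm:spectrum1}. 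The periodic limit~(4) then follows by noting that as $\sqrt{\lambda}\to\infty$, $\Spiral$ asymptotes to $(\cos\sqrt{\lambda},\sin\sqrt{\lambda}/\sqrt{\lambda})$ and $\alpha z\to 0$, reducing the defining equation to its $\alpha=0$, $q=0$ form, which is $\pi$-periodic in $\sqrt{\lambda}$.

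For~(2) and~(5), set $\lambda=-\kappa^2$ with $\kappa>0$ and expand $c(\lambda)=\cosh\kappa+O(e^\kappa/\kappa)$, $s(\lambda)=\sinh\kappa/\kappa+O(e^\kappa/\kappa^2)$ uniformly in $q\in L^2[0,1]$. The strip condition $|\xi|\leq 2\sqrt{b}$ then forces $\alpha\sinh\kappa/\kappa+(b+1)\cosh\kappa=O(1)$, hence $\kappa=-\alpha/(b+1)+O(1)$ and $\lambda=-(b+1)^{-2}\alpha^2+O(1)$, proving~(2). The rogue eigenvalue $\lo_n^=$ arises from the exterior branch where $\rho_+^n$ dominates, so the vanishing of $\Do_n$ forces $A(y,\xi)\approx 0$; this translates into the terminal-vertex balance $\rho_+\approx by$, equivalently $\alpha\sinh\kappa/\kappa\approx-\cosh\kappa$, giving $\kappa=-\alpha+O(1)$ and $\lo_n^==-\alpha^2+O(1)$. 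For the exponential-width estimate, note that near $\kappa\approx-\alpha/(b+1)$ one has $|d\xi/d\kappa|\sim\tfrac{1}{2}|\alpha|e^{|\alpha|/(b+1)}$, so the $\kappa$-interval on which $|\xi|\leq 2\sqrt{b}$ has length of order $\sqrt{b}\,e^{-|\alpha|/(b+1)}/|\alpha|$; converting via $|d\lambda|=2\kappa|d\kappa|$ with $\kappa\sim|\alpha|/(b+1)$ yields the bound.

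The expected main difficulty is verifying that the exterior branch contributes exactly one rogue eigenvalue, cleanly separated from $\So_n^0$, for every $n\geq 2$. This requires tracking the signs of $A\rho_+^n$ and $B\rho_-^n$ across the transition region where $|\xi|$ passes from exponentially large down to $O(1)$, which in turn is controlled by the single positive initial value $\Do_0=1$. This is precisely why only one rogue eigenvalue breaks off in the present setting, whereas in Theorem~\ref{thm:spectrum1} the vanishing of $D_0=s\alpha$ at Dirichlet eigenvalues introduces a second exterior branch and hence two rogue eigenvalues.
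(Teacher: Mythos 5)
Your proposal follows the paper's own route---the closed form $\Do_n=A\rho_+^n+B\rho_-^n$ is just the representation (\ref{PQxi}) of $\Do_n=P_n(v)+yQ_n(v)$ in disguise, and the rest is the same graphical scheme of intersecting $\Spiral$ with the zero set---but there are two genuine gaps. First, the counting. You assert that ``one sweep of $\Spiral$ across the strip per Dirichlet interval produces $n-2$ band eigenvalues together with a pair $\lo_{n\pm}^k$,'' but nothing in your argument excludes $\Spiral$ from crossing a single component of $\{\Do_n=0\}$ several times in one Dirichlet interval, which would destroy the exact cardinalities and the interlacing of parts 1 and 3. The paper needs two ingredients you omit: (i) the global structure of the zero set---exactly $n$ disjoint components, each a monotone graph over $z$, obtained from Favard's theorem, simplicity of roots of orthogonal polynomials, and the implicit function theorem (Theorem~\ref{thm:Dnozero}); your Chebyshev count of ``$n-1$ arcs'' is wrong (there are $n$ components, of which the $n-2$ inner ones carry the cluster and $\Cco_n^1,\Cco_n^n$ carry the intermediate pair) and is inconsistent with your own later tally of $n-2+2$ intersections per sweep; and (ii) an analyticity-in-$\alpha$ continuation argument showing each component is met exactly once per pass: for $-\alpha$ large the crossing is unique and transverse, and crossings cannot collide as $\alpha$ varies because the components are disjoint. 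Also, the pivotal fact $\Do_n(\pm1,0)\neq0$ (which is what forces the \emph{pair} of intermediates straddling $\lambda_D^k$) does not follow merely from ``the recurrence is seeded by $\Do_0=1$''; it requires the induction of Theorem~\ref{thm:Dnozero}, part~1: $R_{n+1}=vR_n-bR_{n-1}$ with $R_{-1}=v$, $R_0=b+1$ is nondecreasing for $v\geq b+1$, so all roots of $\Do_n(y,0)$ lie in $(-1,1)$.

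Second, the asymptotic bookkeeping in parts 2 and 5 is too coarse. From $\kappa=-\alpha/(b+1)+O(1)$ you get only $\lambda=-(b+1)^{-2}\alpha^2+O(|\alpha|)$, not the claimed $O(1)$; the remainder in $\kappa$ must be $O(\alpha^{-1})$, which is exactly what the paper's Lemma~\ref{lemma:intersection} establishes ($\nu=-\beta^{-1}\alpha+O(\alpha^{-1})$), and the same defect affects your $\kappa=-\alpha+O(1)$ for $\lo_n^=$. Your width estimate also contains a computational error: at the strip edges the term $\frac{e^\kappa}{2}\left(\frac{\alpha}{\kappa}+b+1\right)$ in $dv/d\kappa$ is $O(1)$ (it nearly equals $v$ itself), so the derivative is dominated by the next-order term, $|dv/d\kappa|\sim\frac{|\alpha|}{2\kappa^2}e^\kappa\sim\frac{(b+1)^2}{2|\alpha|}e^{|\alpha|/(b+1)}$, not $\frac{1}{2}|\alpha|e^{|\alpha|/(b+1)}$. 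The correct derivative gives a cluster width $\Theta\big(\alpha^2e^{-|\alpha|/(b+1)}\big)$---this is precisely where the $\alpha^2$ prefactor in the theorem comes from---whereas your figures produce a width with no $\alpha^2$ at all, smaller than the truth by a factor of order $\alpha^2$, so your computation does not actually ``yield the bound.'' A final, more cosmetic point: the reason only one rogue branch exists is not the vanishing of $D_0=s\alpha$; it is the coefficient structure of (\ref{DPQ}): $D_n=\alpha zP_n+(1-y^2)Q_n$ has a quadratic-in-$y$ coefficient admitting the two asymptotic slopes $c_1\in\{-1,-b\}$ in Theorem~\ref{thm:roguecurves}, while $\Do_n=P_n+yQ_n$ is linear in $y$ and admits only $c_1=-1$. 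Your identification of the rogue branch via $\xi\approx b^{1/2}y$, equivalently $\alpha s+c\approx0$, is nonetheless correct and matches (\ref{roguecurve}).
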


Having Theorems~\ref{thm:spectrum1} and~\ref{thm:spectrum2}, obtaining the spectrum of the quantum tree $(\Gamma_{\!n},A_n)$ is straightforward in view of the decomposition given in Theorem~\ref{thm:decomposition}.
The intervals containing the clusters for large $k$ (Part 4) become the large-$\lambda$ bands for infinite graphs (see Section~\ref{sec:infinite} and~\cite[\S5.2]{Solomyak2003}).  The intervals are valid for all $k$ when $q(x)\!\equiv\!0$ and $\alpha\!=\!0$.  This is evident from the graphical depiction, as $c(\lambda)\!=\!\cos\sqrt{\lambda}$ when $q(x)\!\equiv\!0$ and the components of $D_n(y,z)=0$ and $\Do_n(y,z)=0$ are straight vertical lines when $\alpha\!=\!0$.
The analysis via orthogonal polynomials below shows that the eigenvalues of $(\Gamma_{\!n},\Asym_n)$ and $(\Gamma_{\!n},\Aosym_n)$ interlace each other.
Quite general results of this kind are obtained by Schapotschnikow~\cite{Schapotsch2006}.

\subsection{Eigenfunctions}\label{sec:eigenfunctions}

The rescaled eigenfunctions $\{\tilde u_k = b^{k/2} u_k\}$ for the discrete reduction of $(\Lambda_n,B_n)$ or $(\Lambda_n,\Bo_n)$ satisfy the recurrence
\begin{equation}
  \tilde u_{k-1} + \tilde u_{k+1} = b^{-1/2} v(\lambda)\, \tilde u_k,
\end{equation}
with the important function $v(\lambda)$ being
\begin{equation}
  v(\lambda) \;=\; (b+1)c(\lambda) + \alpha s(\lambda).
\end{equation}
Thus the solutions are of the form
\begin{equation}\label{utilden}
  \tilde u_k \;=\; a_1 \xi^k + a_2 \xi^{-k},
  \quad\text{where}\;\;\;
  \xi + \xi^{-1} = b^{-1/2}v(\lambda).
\end{equation}
Depending on the eigenvalue $\lambda$, an eigenfunction is either in the oscillatory or exponential regime:
\begin{equation}\label{regimes}
\begin{aligned}
  \big| b^{-\frac{1}{2}} v(\lambda) \big| < 2\qquad &\text{oscillatory}\\
  \big| b^{-\frac{1}{2}} v(\lambda) \big| > 2\qquad &\text{exponential}
\end{aligned}
\end{equation}
The rescaled determinants $\tilde D_n = b^{-n/2}D_n$ satisfy the same recurrence as the~$\tilde u_n$.  In the oscillatory regime, the $\tilde D_n$ (and therefore also $D_n$) has clusters of roots; it is the shaded region in Fig.\,\ref{fig:DandS}.

The clusters of eigenvalues $\Sigma_n^k$ and $\So_n^k$ lie in the oscillatory region, except perhaps for the outer eigenvalues of a cluster.  This is made more precise with the connection to orthogonal polynomials below.  The rogue eigenvalues are in the exponential regime.  The eigenfunctions $u(x)$ for six eigenvalues are shown in Fig.\,\ref{fig:Efcn}.  The number of roots of the eigenfunctions can be obtained, even for general trees~\cite{Schapotsch2006} by the method of the Pr\"ufer angle.  The $n$-th eigenfunction has $n\!-\!1$ roots if they do not occur at the vertices.  This fact is also obtained as a result of a broader theory of the dependence of spectrum on vertex conditions~\cite[Theorem~6.4]{BerkolaikoKuchment2012a}.

The intervals of $\lambda$-values in the oscillatory regime ($ \big| b^{-\frac{1}{2}} v(\lambda) \big| < 2$) turn out to be the spectral bands of infinite trees~\cite[Lemma~3.2]{Carlson1997}; this is discussed in Section~\ref{sec:infinite} below.

\begin{figure}[ht]
\centerline{\includegraphics[width=0.86\linewidth]{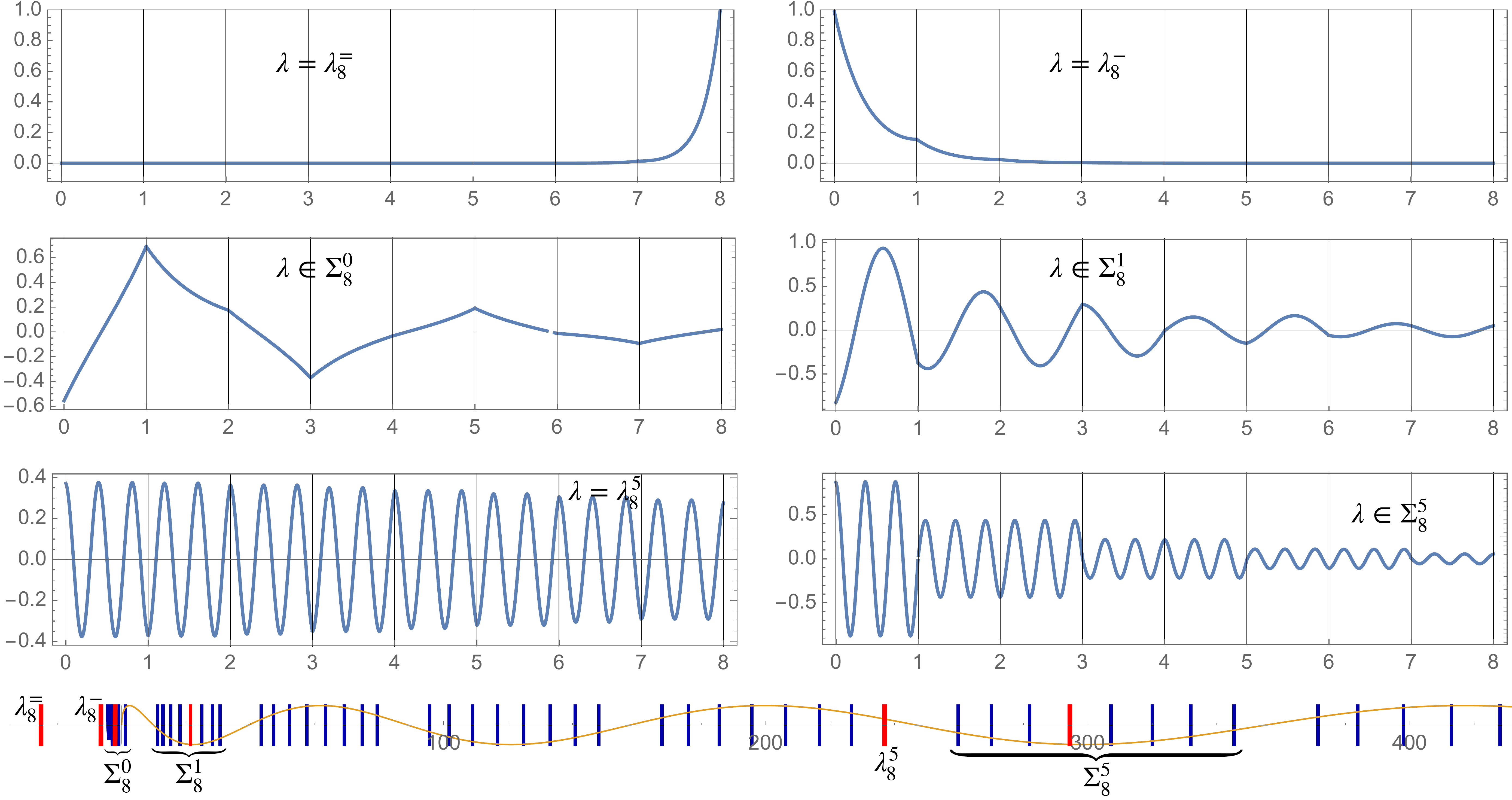}}
 \caption{\small For the tree $(\Lambda,B)$ of length $n$, the eigenfunction is plotted for different eigenvalues.  Here, $\alpha=-5$, $b=2$, and $q(x)\equiv0$.  The eigenvalues whose eigenfunctions are plotted are in red in the bottom diagram.  $\lambda_8^=$, $\lambda_8^-$, and $\lambda_8^5$ are in the exponential regime, and those in the clusters are in the oscillatory regime.  Observe that eigenfunctions in the oscillatory regime have a discrete decay rate of $b^{-1/2}$, which is compensated by the weight $b$ in the inner product~(\ref{inner}) for the linear-graph operator~$B$.}
 \label{fig:Efcn}
\end{figure}

\smallskip
{\bfseries\slshape Remarks on the connection to the Hill operator.}
The Hill operator is $-d^2/dx^2 + q(x)$ on the real line, where $q(x)$ is extended periodically.  There is a large body of literature on this operator.  Since $c^2-1=sc'$ when $q(x)$ is symmetric about $x=1/2$, the Hill discriminant is equal to $2c(\lambda)$.  Thus the parts of $\Spiral$ that lie inside $|y|\leq1$ correspond to spectral bands, and the complementary intervals are the gaps, or instability intervals of the Hill operator.  
The spectral bands for the Hill operator are wider than those of the semi-infinite tree operator for $\alpha\!=\!0$, discussed briefly in Sec.~\ref{sec:infinite}, since the oscillatory region (light-blue shaded region in the figures) does not extend up to $|y|=|c(\lambda)|=1$ for the tree.
The two points at which $\Spiral$ hits $y=1$ or $y=-1$ at the endpoints of a gap are the points where $s(\lambda)=0$ (a Dirichlet eigenvalue $\lambda_D^k$) or $c'(\lambda)=0$ (a Neumann eigenvalue $\lambda_N^k$).  These can come in either order, and they coincide exactly when the gap for the Hill operator closes into one point and $\Spiral$ intersects $y=\pm1$ tangentially.  The ordering of the intermediate eigenvalues $\lambda_n^k$, $\lo_{n\pm}^k$ and $\lambda_D^k$ depends on the slope of $\Spiral$ as it passes through $(\pm1,0)$ and the value of $\alpha$, which controls the slope of the extreme curves in the level set $D_n=0$, which pass through $(\pm1,0)$, and the slope of $\Do_n=0$ as it passes very near these points.

%%%%%%%%%%%%%%%%%%%%%%%%%%%%%%%%%%%%%%%%%%%%
\section{Orthogonal polynomials}\label{sec:orthogonal}

A detailed analysis of the zero sets of $D_n(y,z)$ and $\Do_n(y,z)$ is made possible through their relationship to sets of orthogonal polynomials.  
Since the recurrence relation satisfied by both $D_n$ and $\Do_n$ involves $y$ and $z$ only through the composite variable
\begin{equation}
  v = (b+1)y + \alpha z,
\end{equation}
it is convenient to define two sequences of polynomials in~$v$,
\begin{align*} 
P_n(v) &=v\,P_{n-1}(v)-b\,P_{n-2}(v),\; \quad
 P_0 =1, \quad P_{-1} =0,\\
Q_n(v) &=v\,Q_{n-1}(v)-b\,Q_{n-2}(v), \quad
Q_0 =0, \quad Q_{-1}=1.
\end{align*}
In terms of these, one has
\begin{equation}\label{DPQ}
  \renewcommand{\arraystretch}{1.3}
\left.
\begin{array}{lcl}
   D_n(y,z) = D_n^\alpha(y,z) &=& \alpha z\, P_n(v) + (1-y^2)\, Q_n(v)\,,\\
  \Do_n(y,z) = \Do_n^\alpha(y,z) &=& P_n(v) + y\,Q_n(v)\,, \\
\end{array}
\right\}
  \;\;\text{with }\, v = (b+1)y + \alpha z\,.
\end{equation}
Because of Favard's Theorem~\cite[Theorem~4.4]{Chihara1978}, $\{Q_n\}$ and $\{P_n\}$ are sequences of orthogonal polynomials with respect to some measure $d\psi$ on the real line (see the appendix Sec.~\ref{sec:appendixmoments}).  Denote the roots of $P_n$ by $\left\{ v_{n1}, \dots, v_{nn} \right\}$ and the roots of $Q_n$ by
$\left\{ w_{n1}, \dots, w_{n,{n-1}} \right\}$, in increasing order.

\begin{proposition}\label{prop:PQD}
  $P_n$ and $Q_n$ are even or odd polynomials in $v$ and are related by $-bP_n(v)=Q_{n+1}(v)$.  For $n\geq1$, they have the form
\begin{equation}\label{PQleading}
  \renewcommand{\arraystretch}{1.1}
\left.
\begin{array}{lcl}
  P_n(v) &=& v^n - (n-1)b\, v^{n-2} + \dots \,,\\
  \vspace{-2.5ex} \\
  Q_n(v) &=& -b\,v^{n-1} + (n-2)b^2\, v^{n-3} + \dots\,,  
\end{array}
\right.
\end{equation}
in which the ellipses indicate lower-degree monomials.
They also admit the expressions
\begin{equation}\label{PQxi}
\begin{split}
  P_n(v) \;=\; b^{\frac{n}{2}}\frac{\xi^{n+1}-\xi^{-(n+1)}}{\xi-\xi^{-1}},
  \qquad
  &Q_n(v) \;=\; -b^{\frac{n+1}{2}}\frac{\xi^n-\xi^{-n}}{\xi-\xi^{-1}}, \\
  \text{where }\;\; \xi+\xi^{-1}&=b^{-\frac{1}{2}}v.
\end{split}
\end{equation}
The roots of $P_n$ ($n\geq0$) and the roots of $Q_n$ ($n\geq1$) are bounded by
\begin{equation}\label{rootsbound}
  \{v : P_n(v)=0 \text{ or } Q_n(v)=0 \}
  \;\subset\; \big( -2b^\frac{1}{2}, 2b^\frac{1}{2} \big)
  \;\subset\; \big( -(b+1), (b+1) \big),
\end{equation}
and thus they are in the oscillatory regime.
The roots of $P_n$ and $Q_n$ interlace each other, as do the roots of $P_n$ and $P_{n+1}$ and the roots of $Q_n$ and $Q_{n+1}$.

For $n\geq-1$,
\begin{equation}\label{symmetries}
\begin{aligned}
  D_n^\alpha(-y,-z) &= (-1)^{n+1} D_n^\alpha(y,z)\,, \\
  \Do_n^\alpha(-y,-z) &= (-1)^n\, \Do_n^\alpha(y,z)\,, \\
  D_n^{-\alpha}(-y,z) &= (-1)^{n+1} D_n^\alpha(y,z)\,, \\
  \Do_n^{-\alpha}(-y,z) &= (-1)^n \Do_n^\alpha(y,z)\,.
\end{aligned}
\end{equation}
\end{proposition}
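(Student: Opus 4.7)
The plan is to verify the claims in a natural order, with each step leveraging the previous one. I would first establish the parity statements and the identity $-bP_n(v)=Q_{n+1}(v)$ by a short induction on the common recurrence. Since $P_0=1$ and $P_{-1}=0$, an induction on $P_n=vP_{n-1}-bP_{n-2}$ shows $P_n$ has the same parity in $v$ as $n$; the initial values $Q_0=0$, $Q_{-1}=1$ likewise force $Q_n$ to have parity opposite to $n$. For the identity, the sequence $\{-bP_{n-1}\}_{n\geq0}$ satisfies the $Q$-recurrence and matches $Q_0=0=-bP_{-1}$ and $Q_1=-b=-bP_0$, so the two sequences agree. The leading-term expansion (\ref{PQleading}) is then obtained by propagating the top two monomials through the recurrence.

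Next I would derive the closed form (\ref{PQxi}) by a Chebyshev-type substitution. Setting $\xi+\xi^{-1}=b^{-1/2}v$ and defining $F_n := b^{n/2}(\xi^{n+1}-\xi^{-(n+1)})/(\xi-\xi^{-1})$, a direct computation verifies $F_n=vF_{n-1}-bF_{n-2}$ with $F_{-1}=0$ and $F_0=1$, so $F_n=P_n$; the formula for $Q_n$ is then immediate from $Q_n=-bP_{n-1}$. The root bound (\ref{rootsbound}) falls out of (\ref{PQxi}): a real zero of $P_n$ forces $\xi$ to be a $2(n+1)$-th root of unity other than $\pm1$, so $v=2b^{1/2}\cos(\pi k/(n+1))$ with $1\leq k\leq n$, which lies strictly inside $(-2b^{1/2},2b^{1/2})$; the same reasoning applies to $Q_n$. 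The further inclusion $2b^{1/2}\leq b+1$ is AM--GM.

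For the interlacing claims I would invoke Favard's theorem: the recurrence $P_n=vP_{n-1}-bP_{n-2}$ has zero diagonal coefficients and positive off-diagonal coefficients $b>0$, so $\{P_n\}$ is the orthogonal polynomial sequence for some positive measure on $\RR$. The classical Sturm-type zero-separation theorem for OPS then yields interlacing of consecutive zeros of $P_n$ and $P_{n+1}$. Since $Q_{n+1}=-bP_n$, the zeros of $Q_{n+1}$ coincide with those of $P_n$, so interlacing of $P_n$ with $Q_n$ and of $Q_n$ with $Q_{n+1}$ both reduce to the $P$-interlacing just obtained.

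Finally, the symmetries (\ref{symmetries}) reduce to a brief algebraic check: under each of the substitutions $(y,z)\mapsto(-y,-z)$ with $\alpha$ fixed, and $(y,z,\alpha)\mapsto(-y,z,-\alpha)$, the composite variable $v=(b+1)y+\alpha z$ maps to $-v$; plugging into the identities (\ref{DPQ}) and applying $P_n(-v)=(-1)^nP_n(v)$ together with $Q_n(-v)=(-1)^{n+1}Q_n(v)$ produces the stated sign changes for both $D_n^\alpha$ and $\Do_n^\alpha$. I do not anticipate a serious obstacle anywhere in this proof; the whole proposition is bookkeeping built on the Chebyshev substitution. The single point requiring care is consistent sign-tracking for the parity of $Q_n$, which is opposite to that of $n$ rather than matching it, as a direct consequence of the index shift in $Q_{n+1}=-bP_n$.
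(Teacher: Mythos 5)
Your proof is correct and follows essentially the same route as the paper's (which merely declares most of these steps ``simple to verify''): the Chebyshev-type closed form (\ref{PQxi}) checked against the recurrence and initial values, the identity $-bP_n=Q_{n+1}$ plus Favard's theorem for all three interlacing claims, and the parity of $P_n$ and $Q_n$ fed through (\ref{DPQ}) for the symmetries (\ref{symmetries}). The only small divergence is in (\ref{rootsbound}), where you locate every root explicitly at $v=2b^{1/2}\cos\big(\pi k/(n+1)\big)$ via roots of unity, while the paper instead argues by sign---for $v>2b^{1/2}$ one has $\xi\in\RR$, hence $P_n(v)>0$ and $Q_n(v)<0$, with $v=2b^{1/2}$ checked directly and parity covering $v\leq-2b^{1/2}$---but both arguments rest on the same formula (\ref{PQxi}) and yield the same conclusion.
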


\begin{proof}
The expressions for $P_n$ and $Q_n$ and the properties of $D_n$ and $\Do_n$ are simple to verify.
The interlacing of roots follows from $\{Q_n\}$ and $\{P_n\}$ being sequences of orthogonal polynomials.
To prove~(\ref{rootsbound}):
If $v>2b^{1/2}$, then $\xi\in\RR$, and thus $P_n(v)>0$ and $Q_n(v)<0$; $v=2b^{1/2}$ can be checked directly; then for $v\leq-2b^{1/2}$, use that these polynomials are even or odd.
\end{proof}

Figure\,\ref{fig:PQD}(left) illustrates how the straight-line zero sets of $P_n((b+1)y+\alpha z)$ and $Q_n((b+1)y+\alpha z)$ constrain the component curves of the zero set of $D_n(y,z)$ to lie in certain slanted linear strips.  The rightmost two of these curves are unconstrained for $y>1$, and in fact, on these curves, the value of $v$ becomes unbounded.  They are responsible for the two rogue negative eigenvalues that occur for large negative $\alpha$.  Theorem~\ref{thm:Dnzero} makes this observation precise.
Figure\,\ref{fig:PQD}(right) and Theorem~\ref{thm:Dnozero} apply analogously to the zero set of $\Do_n(y,z)$; in this case there is only one unconstrained curve, which is responsible for one rogue negative eigenvalue.

The three curves that are not constrained inside of linear strips are asymptotically close to straight lines as $y\to\infty$, as stated in Theorem~\ref{thm:roguecurves}.

\begin{figure}[ht]
\centerline{\includegraphics[width=0.9\linewidth]{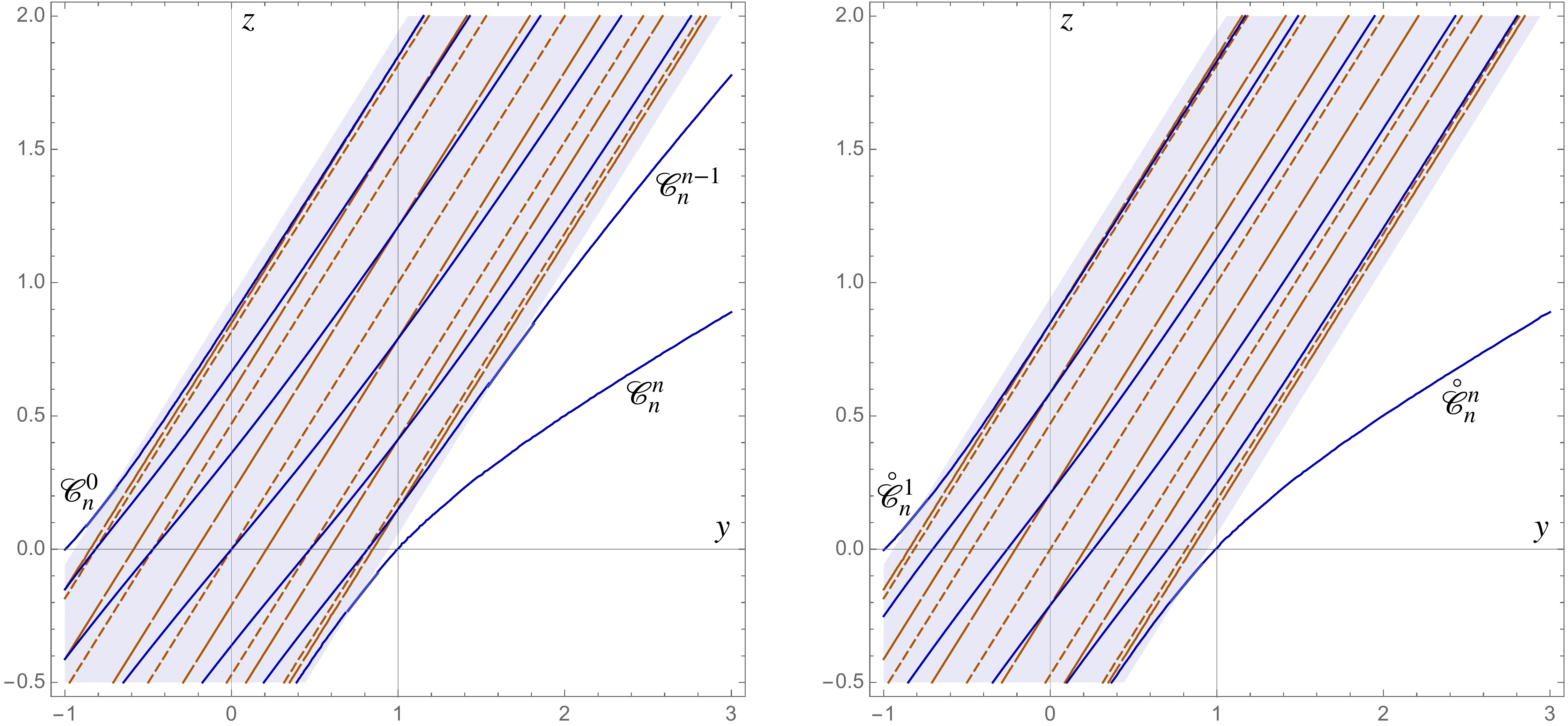}}
 \caption{\small Long-dashed straight lines are zero sets of $P_n(v)$, and short-dashed straight lines are zero sets of $Q_n(v)$, where $v=\alpha z+(b+1)y$.  Solid curves are zero sets of $D_n(y,z)$ (left) and $\Do_n(y,z)$ (right).  The zero sets of $P_n(v)$ and $Q_n(v)$ remain within the oscillatory regime (\ref{regimes}), which is shaded.
 Here, $b=2$, $\alpha=-3$, and $n=6$.}
 \label{fig:PQD}
\end{figure}

\begin{theorem}\label{thm:Dnzero}
The zero set $Z_n$ of $D_n$ consists of $n+1$ disjoint curves $\Cc_n^k$ for $0\leq k\leq n$,
\begin{equation}
  Z_n := \left\{ (y,z)\in\RR^2 : D_n(y,z)=0 \right\}
  = \bigcup\limits_{k=0}^{n} \Cc_n^k\,,
\end{equation}
and each $\Cc_n^k$ is the graph of an increasing function $y = g_n^k(z)$.
If $(y,z)\in\Cc_n^k$, then $(-y,-z)\in\Cc_n^k$, that is, the functions $g_n^k$ are odd.

\begin{enumerate}
\item
The curve $\Cc_n^0$ contains the point $(-1,0)$, and the curve $\Cc_n^n$ contains the point $(1,0)$.\label{Znintersection}
\item
For $0<k<n$, the curve $\Cc_n^k$ contains the point $(y,0)=\big(\frac{w_{nj}}{b+1},0\big)$, with
\begin{equation}\label{ybound1}
  %\left| \frac{w_{nj}}{b+1} \right| \,<\, \frac{2}{b^{1/2}+b^{-1/2}} \,<\, 1.
  \left| y \right| \,<\, \frac{2}{b^{1/2}+b^{-1/2}} \,<\, 1.
\end{equation}
\item\label{Cconstraints}
For $\alpha<0$ and $z>0$, the points $(y=g_n^k(z),z)\in\Cc_n^k$ lie inside the following slanted strips:
\begin{equation}
  \renewcommand{\arraystretch}{1.3}
\left.
  \begin{array}{rl}
    k=0: &
    \left\{
    \begin{array}{rcll}
      \hspace*{0pt} -(b+1)< & (b+1)y+\alpha z & < v_{n1}\hspace*{14.5pt} & \text{for }\, -1< y< 1 \\
      v_{n1}< & (b+1)y+\alpha z & < w_{n1} & \text{for }\, 1< y
    \end{array}
    \right. \\
    k=1,\dots,n-2: &
    \left\{
    \begin{array}{rcll}
      w_{nk}< & (b+1)y+\alpha z & < v_{n,k+1} & \text{for }\, -1< y< 1 \\
      \hspace*{10pt} v_{n,k+1}< & (b+1)y+\alpha z & < w_{n,k+1} & \text{for }\, 1< y
    \end{array}
    \right. \\
    k=n-1: &
    \left\{
    \begin{array}{rcll}
      \hspace*{7.7pt} w_{n,n-1}< & (b+1)y+\alpha z & < v_{nn}\hspace*{14.5pt} & \text{for }\, -1< y< 1 \\
      v_{nn}< & (b+1)y+\alpha z &  & \text{for }\, 1< y
    \end{array}
    \right. \\
    k=n: &
    \left.
    \begin{array}{rcll}
      \hspace*{23.5pt}b+1< & (b+1)y+\alpha z & \hspace*{38pt} & \text{for }\, 1< y\,.
    \end{array}
    \right.
  \end{array}
\right.
\end{equation}
\end{enumerate}
\end{theorem}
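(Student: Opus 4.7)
The plan is to reduce $D_n(y,z)=0$ to a quadratic in $y$ with parameter $v=(b+1)y+\alpha z$, apply the implicit function theorem for smoothness and monotonicity, and then track the sign patterns of $P_n,Q_n$ for the strip constraints. Three algebraic identities drive the argument: $Q_n(v)=-bP_{n-1}(v)$ (from Proposition~\ref{prop:PQD}), $vP_n(v)+Q_n(v)=P_{n+1}(v)$ (from the defining recurrence together with the previous identity), and the Casoratian $P_n(v)^2-P_{n-1}(v)P_{n+1}(v)=b^n$, which is immediate from (\ref{PQxi}) since the Chebyshev-like quantity $(\xi^{n+1}-\xi^{-n-1})/(\xi-\xi^{-1})$ satisfies the analogous identity with unit right-hand side.

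I would start by describing $Z_n\cap\{z=0\}$: $D_n(y,0)=(1-y^2)Q_n((b+1)y)$ vanishes exactly at $y=\pm 1$ and at $y=w_{nk}/(b+1)$ for $k=1,\ldots,n-1$, with $|w_{nk}/(b+1)|<2b^{1/2}/(b+1)=2/(b^{1/2}+b^{-1/2})<1$ by (\ref{rootsbound}), giving $n+1$ distinct points. Label these in increasing order of $y$ and define $\Cc_n^k$ as the connected component of $Z_n$ through the $k$-th one; parts~(1) and~(2) follow. Substituting $z=(v-(b+1)y)/\alpha$ into $D_n=0$ and using $vP_n+Q_n=P_{n+1}$ rewrites the equation as the quadratic
\[
Q_n(v)\,y^2+(b+1)P_n(v)\,y-P_{n+1}(v)=0,
\]
and via $Q_n=-bP_{n-1}$ together with the Casoratian, the discriminant simplifies to $\Delta(v)=(b-1)^2P_n(v)^2+4b^{n+1}>0$ for every $v$. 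Since $D_n(\cdot,z)$ has degree $n+1$ in $y$ with positive leading coefficient $b(b+1)^{n-1}$, this forces exactly $n+1$ distinct real roots for every $z\in\RR$, varying continuously. The implicit function theorem, applied with $\partial_zD_n=\alpha[P_n(v)+T]$ and $\partial_yD_n=-2yQ_n(v)+(b+1)T$ where $T:=\alpha zP_n'(v)+(1-y^2)Q_n'(v)$, and verifying via $D_n=0$ and $\Delta(v)>0$ that these partial derivatives are simultaneously nonzero and of opposite sign on $Z_n$, yields each $\Cc_n^k$ as a strictly increasing smooth graph $y=g_n^k(z)$.

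The strip constraints come from the relation $\alpha zP_n(v)=(y^2-1)Q_n(v)$ on $Z_n$, which for $\alpha<0$ and $z>0$ forces $\mathrm{sgn}\,P_n(v)Q_n(v)=\mathrm{sgn}(y^2-1)$. The interlaced roots $v_{nj},w_{nj}$ of $P_n,Q_n$ partition the $v$-axis into sign-strips of $P_nQ_n$; matching the sign against each component's $z=0$ intersection pins down the correct strip, with the two outer unbounded strips being exactly where the $v$-parameterization sends $y\to\pm\infty$, the source of the rogue eigenvalues of Section~\ref{sec:linearspectrum}. The symmetry of $Z_n$ under $(y,z)\mapsto(-y,-z)$ is immediate from the first identity of (\ref{symmetries}). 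The hardest step will be the global strip bookkeeping: verifying that as $z$ grows past the points where $\Cc_n^k$ crosses $y=\pm 1$ and $y^2-1$ changes sign, the component passes into exactly the predicted adjacent strip. Monotonicity from the implicit function theorem gives at most one such crossing per sign of $z$, so continuity from $z=0$ identifies the strip uniquely, but keeping the case analysis tidy at these boundary crossings---and at the transitional boundary points $(y,z)=(\pm 1,0)$ themselves---is the principal technical burden.
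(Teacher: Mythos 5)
Your $z=0$ analysis for parts 1 and 2 is sound and matches the paper, and your quadratic $Q_n(v)\,y^2+(b+1)P_n(v)\,y-P_{n+1}(v)=0$ with the discriminant identity $\Delta(v)=(b-1)^2P_n(v)^2+4b^{n+1}>0$ (via $Q_n=-bP_{n-1}$ and the Casoratian from (\ref{PQxi})) is correct and is a genuinely different device from the paper's. But your central step fails as stated. The quadratic fibers $Z_n$ over the slanted lines $\{(b+1)y+\alpha z=v\}$, not over the horizontal lines $\{z=\mathrm{const}\}$: for fixed $z$ the parameter $v$ varies with $y$, so $\Delta(v)>0$ does not ``force exactly $n+1$ distinct real roots'' of $D_n(\cdot,z)$ --- that inference is a non sequitur. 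Writing $D_n(y,z)=F(y,v)$ with $F(y,v)=P_{n+1}(v)-(b+1)P_n(v)\,y-Q_n(v)\,y^2$, what $\Delta(v)>0$ actually yields is that each slanted line meets $Z_n$ in at most two points, and that the directional derivative transverse to those lines, $\alpha\,\partial_yD_n-(b+1)\,\partial_zD_n=\alpha F_y=\mp\alpha\sqrt{\Delta(v)}$, never vanishes on $Z_n$. That is strictly weaker than your claim that $\partial_yD_n$ and $\partial_zD_n$ are each nonzero and of opposite sign (which, incidentally, can only hold for $\alpha<0$; for $\alpha>0$ the symmetry (\ref{symmetries}) makes the curves decreasing): a nonvanishing directional derivative does not exclude vertical or horizontal tangents. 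Excluding them is equivalent to the slope bound $y'_\pm(v)\notin[0,1/(b+1)]$ for your two branches, which you never establish and which your identities do not supply. The paper gets exactly this from Favard's theorem applied to the bivariate sequence $\{D_k(y,z)\}$ itself: for fixed $y$ it is an orthogonal-polynomial sequence in $z$, for fixed $z$ one in $y$, and simplicity of roots of orthogonal polynomials --- a fact about the whole sequence, of Christoffel--Darboux type --- shows $D_n$ has no common zero with either partial derivative, after which the implicit function theorem gives the monotone graphs. Your Casoratian $P_n^2-P_{n-1}P_{n+1}=b^n$ is the corresponding positivity for the $P$-sequence alone and does not transfer to $D_n$ and its $y$- or $z$-derivatives.

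Two further concrete problems. First, your component count relies on continuity from $z=0$, which presupposes that every component of $Z_n$ meets $\{z=0\}$; with only your ingredients one cannot exclude a monotone branch with a vertical asymptote confined to a half-plane. (A workable patch inside your framework: glue the two branches $y_\pm(v)$ across the $n-1$ poles at the roots $w_{nk}$ of $Q_n$ --- at each such $v$ one branch has a finite limit while the other escapes to infinity --- yielding $2n-(n-1)=n+1$ components globally parameterized by $v$; this fibration, together with the missing slope bound, would replace the unjustified fixed-$z$ root count.) Second, a sign error in the strip argument: multiplying $\alpha z\,P_n(v)=(y^2-1)\,Q_n(v)$ by $P_n(v)$ gives $\alpha z\,P_n^2=(y^2-1)P_nQ_n<0$ for $\alpha<0$, $z>0$, i.e.\ $\mathrm{sgn}\big(P_n(v)Q_n(v)\big)=-\,\mathrm{sgn}(y^2-1)$, the opposite of what you wrote; with your sign, the bookkeeping in part 3 would place every curve in the wrong family of strips. (Check on the strip $w_{n,n-1}<v<v_{nn}$ with $|y|<1$: there $Q_n=-bP_{n-1}<0$ and, by interlacing, $P_n<0$, so $P_nQ_n>0$ while $y^2-1<0$.) For comparison, the paper's own proof is terse --- Favard in both variables plus the implicit function theorem, with parts 2 and 3 read off from (\ref{rootsbound}) and (\ref{DPQ}) --- but it supplies precisely the two facts your route lacks: simplicity of the roots in each variable separately, and hence the absence of horizontal and vertical tangents.
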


\begin{proof}
  Because of the recursion relation for the sequence $\{D_n(y,z)\}_{n=-1}^\infty$ and Favard's Theorem, for fixed~$y$ this is a sequence of orthogonal polynomials in $z$ (provided $\alpha\not=0$), and for fixed $z$, this is a sequence of orthogonal polynomials in $y$.  Since the roots of orthogonal polynomials are simple, it follows that for all $(y,z)$ such that $D_n(y,z)$ is zero, both $\partial D_n/\partial z (y,z)$ and $\partial D_n/\partial y(y,z)$ must be nonzero.  By the implicit function theorem, each component of the zero set of $D_n$ in the $yz$-plane is a strictly monotonic function $y=g(z)$.

Part 2 comes from (\ref{rootsbound}) and~(\ref{DPQ}).
The rest of the proof can be gotten from the relations~(\ref{DPQ}).
\end{proof}

\begin{theorem}\label{thm:Dnozero}
The zero set $\Zo_n$ of $\Do_n$ consists of $n$ disjoint curves $\Cco_n^k$ for $1\leq k\leq n$,
\begin{equation}
  \Zo_n := \left\{ (y,z)\in\RR^2 :\Do_n(y,z)=0 \right\}
  = \bigcup\limits_{k=1}^{n} \Cco^n_k\,,
\end{equation}
and each curve $\Cco_n^k$ is the graph of a monotonic function $y = \go_{nk}(z)$.
If $(y,z)\in\Cco_n^k$, then $(-y,-z)\in\Cco^n_k$, that is, the functions $\go_{nk}$ are odd.

\begin{enumerate}
\item
For $1\leq k\leq n$, the intersections of the curves $\Cco_n^k$ with the $y$-axis lie in the interval $(-1,1)$.
\item
For $1<k<n$, the point $(y,0)$ on the curve $\Cco_n^k$ satisfies
\begin{equation}\label{ybound2}
  \left| y \right| \,<\, \frac{2}{b^{-1/2}+b^{1/2}} \,<\, 1.
\end{equation}
\item\label{Coconstraints}
For $\alpha<0$ and $z>0$, the points $(y=g_n^k(z),z)\in\Cco_n^k$ lie inside the following slanted strips:
\begin{equation}
  \renewcommand{\arraystretch}{1.3}
\left.
  \begin{array}{rl}
    k=1: &
    \left\{
    \begin{array}{rcll}
      \hspace*{0pt} -(b+1)< & (b+1)y+\alpha z & < v_{n1}\hspace*{14.5pt} & \text{for }\,  y< 0 \\
      v_{n1}< & (b+1)y+\alpha z & < w_{n1} & \text{for }\, 0< y
    \end{array}
    \right. \\
    k=2,\dots,n-1: &
    \left\{
    \begin{array}{rcll}
      w_{n,k-1}< & (b+1)y+\alpha z & < v_{nk} & \text{for }\,  y< 0 \\
      \hspace*{10pt} v_{nk}< & (b+1)y+\alpha z & < w_{nk} & \text{for }\, 0< y
    \end{array}
    \right. \\
    k=n: &
    \left\{
    \begin{array}{rcll}
      \hspace*{7.7pt} w_{n,n-1}< & (b+1)y+\alpha z & < v_{nn}\hspace*{14.5pt} & \text{for }\,  y< 0 \\
      v_{nn}< & (b+1)y+\alpha z &  & \text{for }\, 0< y.
    \end{array}
    \right. \\
    \end{array}
\right.
\end{equation}
\end{enumerate}
\end{theorem}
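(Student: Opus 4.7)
The plan is to mirror the proof of Theorem~\ref{thm:Dnzero}, exploiting the factorization $\Do_n(y,z)=P_n(v)+yQ_n(v)$ of~(\ref{DPQ}) with $v=(b+1)y+\alpha z$, together with the interlacing of the roots of $P_n$ and $Q_n$ from Proposition~\ref{prop:PQD}.

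First I would establish that every component of $\Zo_n$ is a smooth monotonic graph. For fixed $y$ (and $\alpha\ne 0$) the three-term recurrence $\Do_n=v\Do_{n-1}-b\Do_{n-2}$, paired with the initial data $\Do_{-1}=y$, $\Do_0=1$, realizes $\{\Do_n(y,\cdot)\}$ as a sequence of Jacobi-type polynomials in $\alpha z$---equivalently, since $\Mo_n$ decouples into a trivial $1\times 1$ upper-left block and an $n\times n$ trailing Jacobi block, $\Do_n(y,\cdot)$ is the characteristic polynomial of that block, whose roots are real and simple, so $\partial_z\Do_n\ne 0$ on the zero set. The analogous Favard argument in the variable $(b+1)y$ for fixed $z$ (as in the proof of Theorem~\ref{thm:Dnzero}), or alternatively the observation that $\Do_n$ is the determinant of a linear generalized eigenvalue problem $yD+K+\alpha z\,I$ with $D$ a positive-definite diagonal, yields $\partial_y\Do_n\ne 0$. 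The implicit function theorem then delivers the smooth monotonic graphs $y=\go_{nk}(z)$, and the invariance of $\Zo_n$ under $(y,z)\mapsto(-y,-z)$ is immediate from $\Do_n^\alpha(-y,-z)=(-1)^n\Do_n^\alpha(y,z)$ in~(\ref{symmetries}).

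Parts 1 and 2 I would obtain by analyzing $\Do_n(y,0)$ directly. Using $Q_n=-bP_{n-1}$ and the closed form~(\ref{PQxi}), setting $(b+1)y=b^{1/2}(\xi+\xi^{-1})$ reduces $\Do_n(y,0)=0$ to
\begin{equation*}
  \xi^{n+1}-\xi^{-(n+1)} \;=\; b\bigl(\xi^{n-1}-\xi^{-(n-1)}\bigr).
\end{equation*}
In the oscillatory range $|y|<2/(b^{-1/2}+b^{1/2})$, with $\xi=e^{i\theta}$ and $\theta\in(0,\pi)$, this becomes $\sin((n+1)\theta)=b\sin((n-1)\theta)$, and a sign-change count between consecutive zeros of $\sin(n\theta)$ in $(0,\pi)$ produces at least $n-2$ simple roots in the corresponding open $y$-interval, accounting for the $n-2$ middle curves with $1<k<n$ and yielding the bound~(\ref{ybound2}). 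A direct evaluation via~(\ref{PQxi}) at $\xi=b^{\pm1/2}$ gives $\Do_n(1,0)=1$ and $\Do_n(-1,0)=(-1)^n$, both nonzero for $b>1$, so any remaining $y$-axis intersections lie strictly inside $(-1,1)$. Since $\Do_n(y,0)$ has degree $n$ in $y$ and hence at most $n$ real roots, these exhaust all $n$ curve crossings, proving Part 1, while the two extreme roots corresponding to $k=1,n$ are automatically confined to $(-1,1)$.

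For Part 3 I would exploit the sign structure of $\Do_n$ on the straight lines $P_n(v)=0$ and $Q_n(v)=0$: on the line $v=v_{nj}$ one has $\Do_n=yQ_n(v_{nj})$ with $Q_n(v_{nj})\ne 0$ by the interlacing of Proposition~\ref{prop:PQD}, so $\Zo_n$ meets this line only at $y=0$; on the line $v=w_{nj}$ one has $\Do_n=P_n(w_{nj})\ne 0$, so $\Zo_n$ avoids it entirely. Combined with monotonicity, each curve $\Cco_n^k$ is therefore trapped between two consecutive forbidden lines $v=w_{n,j-1}$ and $v=w_{nj}$ (or between such a line and $v=\pm(b+1)$ for the extreme curves) and crosses the permitted line $v=v_{nj}$ precisely at its $y$-axis intersection. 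Specializing to $\alpha<0$ and $z>0$ and matching the index $j$ to the label $k$ reads off the tabulated inequalities. The main obstacle I anticipate is the root count on the $y$-axis---verifying rigorously that exactly $n-2$ of the $n$ real roots of $\Do_n(y,0)$ lie in the strict oscillatory interval of~(\ref{ybound2})---which requires reconciling the trigonometric count on $(0,\pi)$ with the boundary and exponential-regime analysis to confirm the precise correspondence between curves, strips, and $y$-axis crossings.
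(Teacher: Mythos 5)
Your overall architecture matches the paper's: simplicity of the roots of $\Do_n$ in each variable separately (your observation that $\Mo_n$ reduces to an $n\times n$ tridiagonal block similar to a symmetric Jacobi matrix is a sound substitute for the paper's Favard argument in the proof of Theorem~\ref{thm:Dnzero}), the implicit function theorem for the monotone graphs, the symmetry (\ref{symmetries}), and for Part~3 the sign analysis of $\Do_n(y,z)=P_n(v)+y\,Q_n(v)$ on the lines $P_n(v)=0$ and $Q_n(v)=0$ using interlacing --- the paper likewise gets Part~3 from (\ref{DPQ}) and Part~2 from Part~3 together with (\ref{rootsbound}). The genuine gap is in your Part~1. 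The inference ``$\Do_n(1,0)=1$ and $\Do_n(-1,0)=(-1)^n$, both nonzero, so any remaining $y$-axis intersections lie strictly inside $(-1,1)$'' is a non sequitur: nonvanishing at the two points $y=\pm1$ only excludes roots exactly there, not roots with $|y|>1$. Nor can the strips of Part~3 rescue this, because for $k=n$ and $y>0$ the strip is one-sided ($v_{nn}<(b+1)y+\alpha z$ with no upper bound) --- $\Cco_n^n$ is precisely the unconstrained curve responsible for the rogue eigenvalue --- and your trigonometric count lives only in the oscillatory range $|v|<2b^{1/2}$, while the two extreme roots of $\Do_n(y,0)$ actually lie in the exponential band $2/(b^{1/2}+b^{-1/2})<|y|<1$, invisible to both tools. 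So the confinement of the $k=1$ and $k=n$ crossings, which is the whole content of Part~1 beyond Part~2, is exactly what you have not proved, and you flag this yourself in your closing caveat.

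The missing lemma is that $\Do_n(y,0)\neq 0$ for all $y\geq 1$ (then $(-1)^n$-symmetry handles $y\leq-1$, and the degree-$n$ count closes). The paper proves this by a monotone positive recursion: $R_n:=(b+1)\Do_n(y,0)$ satisfies $R_{n+1}=vR_n-bR_{n-1}$ with $R_{-1}=v$, $R_0=b+1$, hence $R_1=v$; for $v=(b+1)y\geq b+1$ one gets inductively $R_{n+1}\geq vR_n-bR_n\geq R_n>0$. Equivalently, in your $\xi$-variables from (\ref{PQxi}): $y\geq1$ forces $\xi\geq b^{1/2}$, and then $\xi^{n-1}(\xi^{2}-b)\geq 0>\xi^{-(n-1)}(\xi^{-2}-b)$, so your reduced equation $\xi^{n+1}-\xi^{-(n+1)}=b\,(\xi^{n-1}-\xi^{-(n-1)})$ has no solution --- a one-line patch within your own framework. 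Two smaller corrections: the curve $\Cco_n^k$ crosses the permitted line $v=v_{nj}$ at $y=0$, i.e., on the $z$-axis, not at its $y$-axis ($z=0$) intersection; and the ``forbidden line'' $v=-(b+1)$ for $k=1$ is zero-free only in the half-plane $z>0$ --- on the full line, $\Do_n$ vanishes at $y=P_n(b+1)/Q_n(b+1)$, which satisfies $y<-1$ and hence $z<0$ when $\alpha<0$ --- so the hypothesis $\alpha<0$, $z>0$ must actually be invoked there rather than treated as a pure sign statement about $P_n$ and $Q_n$.
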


\begin{proof}
For part 1: With $y=(b+1)v$ and $z=0$, (\ref{DPQ}) gives
\begin{equation}
  (b+1)\Do_n(y,0) \;=\; (b+1) P_n(v) + vQ_n(v).
\end{equation}
The numbers $R_n=(b+1)\Do_n(y,0)$ satisfy the same recursion as $P_n$ and $Q_n$, namely $R_{n+1}=vR_n-bR_{n-1}$, with initial values $R_{-1}=v$ and $R_0=b+1$, which yield $R_1=v$.
Fix $v\geq b+1$, so that $R_1\geq R_0$.  Recursively, one obtains $R_{n+1}\geq R_n$.  Thus, all roots of $\Do_n(y,0)$ are less than~$1$, and by symmetry, all roots are greater than~$-1$.
Part 3 can be gotten from the relations~(\ref{DPQ}).  Part 2 comes from Part 3 and (\ref{rootsbound}).
\end{proof}

\begin{remark}
  In Fig.\,\ref{fig:PQD}, the outer curves $\Cco_n^1$ and $\Cco_n^{n}$ may appear to intersect $(\pm1,0)$.  In fact they are very close, but, according to part 1 of Theorem~\ref{thm:Dnozero} do not actually hit those points.  A refining of the proof shows that $\Do_n(1,0)=1$.
\end{remark}

The next theorem gives the asymptotic behavior of the three curves that are not constrained in strips.

\begin{theorem}\label{thm:roguecurves}
If $b\geq2$ and $\alpha\not=0$, the two unconstrained components of the zero set of $D_n(y,z)$ have the following asymptotic behavior as $y\to\infty$ in the $yz$-plane.
\begin{equation}\label{roguecurves}
  \renewcommand{\arraystretch}{1.1}
\left.
\begin{array}{rcll}
    \Cc_n^n &:& \alpha z + y - y^{-1} + O(y^{-2}) = 0\; & (y\to\infty), \\
    \Cc_n^{n-1} &:& \alpha z + b\,y - b\,y^{-1} + O(y^{-2}) = 0\; & (y\to\infty).
\end{array}
\right.
\end{equation}
If $\alpha\not=0$, the one unconstrained component of the zero set of $\Do_n(y,z)$ has the following asymptotic behavior.
\begin{equation}\label{roguecurve}
    \Cco_n^n \;\; : \;\; \alpha z + y - y^{-1} + O(y^{-2}) = 0\; \qquad (y\to\infty).
\end{equation}
\end{theorem}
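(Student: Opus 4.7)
My plan is to substitute the closed-form expressions~(\ref{PQxi}) for $P_n$ and $Q_n$ into the identities~(\ref{DPQ}) for $D_n$ and $\Do_n$, and to exploit the coincidence that each candidate line in~(\ref{roguecurves})--(\ref{roguecurve}) is precisely the locus on which the auxiliary variable $\xi$, defined by $\xi + \xi^{-1} = b^{-1/2}v$, takes a closed form in $y$ alone. If $\alpha z + y - y^{-1} = 0$, then $v = by + y^{-1}$ and $b^{-1/2}v = b^{1/2}y + b^{-1/2}y^{-1}$, so the larger root is $\xi = b^{1/2}y$; analogously, on $\alpha z + by - by^{-1} = 0$ one has $v = y + by^{-1}$ and $\xi = b^{-1/2}y$. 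Substituting these values into~(\ref{PQxi}) and telescoping the resulting differences of powers, a direct computation yields the exact restrictions
\begin{align*}
D_n\big|_{\alpha z = -y+y^{-1}} &= -(y^2-1)y^{-n-1}, \\
D_n\big|_{\alpha z = -by+by^{-1}} &= -b^{n+1}(y^2-1)y^{-n-1}, \\
\Do_n\big|_{\alpha z = -y+y^{-1}} &= y^{-n},
\end{align*}
each of which decays as $y \to \infty$, confirming that the three candidate lines are asymptotically correct to leading order.

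To sharpen this to the stated $O(y^{-2})$ accuracy, I would apply the implicit function theorem in the $z$ variable. Writing the true rogue curve as $z = z_0(y) + h(y)$ with $z_0$ one of the candidate lines, the linearization gives $h = -D_n(y,z_0)/\partial_z D_n(y,z_0) + O(h^2)$, and analogously for $\Do_n$. Using the leading-order growth $P_n(v) \sim b^{n/2}\xi^n$ and $Q_n(v) \sim -b^{(n+1)/2}\xi^{n-1}$ from~(\ref{PQxi}), together with $P_n'(v) \sim n v^{n-1}$ and $Q_n'(v) \sim -b(n-1)v^{n-2}$ from~(\ref{PQleading}), a short computation on each candidate line gives $\partial_z D_n(y,z_0) \sim \alpha b^{n-1}(b-1)\,y^n$ on $\Cc_n^n$, $\partial_z D_n(y,z_0) \sim -\alpha(b-1)\,y^n$ on $\Cc_n^{n-1}$, and $\partial_z \Do_n(y,z_0) \sim \alpha b^{n-1}y^{n-1}$ on $\Cco_n^n$. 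The hypothesis $b \geq 2$ enters here to ensure that the factor $b-1$ arising in the two $D_n$ cases is nonzero. Dividing the small residuals above by these denominators yields $\alpha h = O(y^{1-2n})$, which is contained in $O(y^{-2})$ for $n \geq 2$, giving exactly the asymptotic expansions asserted.

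The main obstacle is a uniform-in-$y$ justification of the linearization, i.e., showing that the $O(h^2)$ error is genuinely subdominant to the linear term as $y \to \infty$. Since $\xi$ depends analytically on $v$ for $v$ outside the compact interval $[-2b^{1/2}, 2b^{1/2}]$ and $v$ grows linearly in $y$ on each candidate line, the polynomials $P_n$, $Q_n$ and all of their higher derivatives grow at most polynomially in $y$ with controlled exponents, so the implicit function theorem applies uniformly on a tail $y \geq y_0$, and the three asymptotic relations of the theorem follow.
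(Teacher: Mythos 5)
Your proof is correct, but it travels a genuinely different road from the paper's. The paper \emph{derives} the asymptotes rather than verifying them: it expands the ratio (\ref{ratio}), $-bP_n(v)/(vQ_n(v)) = 1 - bv^{-2} + O(v^{-4})$, from the leading coefficients (\ref{PQleading}), inserts the undetermined-coefficient ansatz $\alpha z = c_1 y + c_0 + c_{-1}y^{-1} + O(y^{-2})$ into $D_n=0$ and $\Do_n=0$, and matches powers of $y$ to obtain $c_1\in\{-1,-b\}$, $c_0=0$ (this is where $b\neq1$ enters for $D_n$), and $c_{-1}\in\{1,b\}$. You instead \emph{verify} the lines supplied by the statement, exploiting the clean fact that on them $\xi$ equals exactly $b^{1/2}y$ or $b^{-1/2}y$, so that (\ref{PQxi}) telescopes to exact residuals; your three closed forms are correct (I checked the general telescoping and the cases $n=2,3$ directly), your derivative estimates $\partial_z D_n \sim \alpha b^{n-1}(b-1)y^n$, $\partial_z D_n \sim -\alpha(b-1)y^n$, and $\partial_z\Do_n \sim \alpha b^{n-1}y^{n-1}$ are right, and the Newton--Kantorovich smallness condition holds since $|D_n(y,z_0)|\,|\partial_z^2 D_n|/|\partial_z D_n|^2 = O(y^{-2n})$. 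Your route buys three things: exact rather than asymptotic residuals; a transparent localization of the hypothesis $b\geq2$ (the factor $b-1$ appears only in the two $D_n$ denominators, not for $\Do_n$, matching the theorem's weaker hypothesis in that clause, and likewise the paper needs $b\neq1$ only in the $D_n$ case); and a strictly sharper error $\alpha h = O(y^{1-2n})$, which says the coefficients of $y^{-2},\dots,y^{2-2n}$ in the expansions of Theorem~\ref{thm:cmu} vanish---consistent with the fact that the $n\to\infty$ limit of $\Cc_n^{n-1}$ is the \emph{exact} line $b(c-c^{-1})+\alpha s=0$ found in Section~\ref{sec:infinite}. What it costs: it cannot discover the lines, and you leave implicit the step of attaching the zeros Newton produces to the named components. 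That attachment needs one sentence, which the paper supplies and you should too: on your curves $v\sim by$ or $v\sim y$ is unbounded, whereas by Part~\ref{Cconstraints} of Theorem~\ref{thm:Dnzero} (resp.\ Part~\ref{Coconstraints} of Theorem~\ref{thm:Dnozero}) $v$ remains bounded on every constrained component; and since each component is a monotone graph $y=g(z)$, a horizontal line $y=\mathrm{const}$ meets it at most once, so the two zeros found at a common large $y$ lie on the two distinct unconstrained curves. Finally, as your closing estimate already signals, both proofs tacitly need $n\geq2$: for $n=1$, solving the quadratic $D_1=0$ exactly gives $y^{-1}$ coefficients $\pm b/(b-1)$ rather than $1$ and $b$, and correspondingly the expansion (\ref{ratio}) degenerates to the constant $1$.
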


\begin{proof}
  From the form of the leading terms of $P_n$ and $Q_n$ given in (\ref{PQleading}), one obtains
\begin{equation}\label{ratio}
  \frac{-bP_n(v)}{vQ_n(v)} = 1 - b\,v^{-2} + O(v^{-4})
  \qquad (v\to\infty).
\end{equation}
From this and the relation $D_n(y,z) = \alpha z P_n(v) + (1-y^2) Q_n(v)$, the condition
$D_n(y,z)=0$ becomes
\begin{equation}\label{D0modified}
  \alpha z v - b\alpha\, z v^{-1} + z\,O(v^{-3}) + b\,y^2 - b \;=\; 0\,.
\end{equation}
Let us consider, for $v\to\infty$, solutions of the form
\begin{equation}\label{zvsy}
  \alpha z \;=\; c_1 y + c_0 + c_{-1}y^{-1} + O(y^{-2})
  \qquad
  (y\to\infty).
\end{equation}
Recall that $v=\alpha z + (b+1)y$.  The condition $c_1>-(b+1)$ guarantees that $v\to\infty$ as $y\to\infty$.
With the ansatz~(\ref{zvsy}), one obtains
\begin{equation}
  \renewcommand{\arraystretch}{1.1}
\left.
\begin{array}{lcl}
  \alpha z v &=& c_1(c_1+b+1)y^2 + c_0(2c_1+b+1)y
     + 2c_1c_{-1} + c_0^2 + c_{-1}(b+1) + O(y^{-1})\,, \\
  \vspace{-2.2ex} \\
  \alpha z v^{-1} &=& c_1(c_1+b+1)^{-1} + c_0(b+1)(c_1+b+1)^{-2}\, y^{-1} + O(y^{-2})\,.
\end{array}
\right.
\end{equation}
Inserting these into~(\ref{D0modified}) yields
\begin{equation*}
  \left[ c_1(c_1+b+1) + b \right] y^2 + c_0 \left( 2c_1 + b + 1 \right) y 
  - bc_1(c_1+b+1)^{-1}-b + 2c_1c_{-1} + c_0^2 + c_{-1}(b+1)
  \;=\; O(y^{-1})\,.
\end{equation*}
The vanishing of the $y^2$ term implies
$  c_1 = -b$
or
$  c_1 = -1$.
In either case, the vanishing of the $y$ term implies
$  c_0 = 0$
since
$  b\not=1$.
The vanishing of the constant term yields
\begin{equation}
  c_{-1} = \frac{b\left[ 1 + c_1(c_1+b+1)^{-1} \right]}{2c_1 + b + 1}\,,
\end{equation}
which equals $b$ when $c_1=-b$ and equals $1$ when $c_1=-1$.
Since both possible values of $c_1$ satisfy $c_1>-(b+1)$, one obtains asymptotic solutions of~(\ref{D0modified}) of the form~(\ref{zvsy}) as $v\to\infty$.  These necessarily coincide with the curves $\Cc_n^{n-1}$ and $\Cc_n^n$ because, according to Theorem~\ref{thm:Dnzero}, the value of $v$ remains bounded on the curves $\Cc_n^k$ for $k<n-1$.  This is the content of the statement~(\ref{roguecurves}) in the~theorem.

Again using~(\ref{ratio}) and the relation $\Do_n(y,z)=P_n(v)+yQ_n(v)$, the condition $\Do_n(y,z)=0$ is
\begin{equation}
  v\left( 1 - bv^{-2} + O(v^{-4}) \right) - by \;=\; 0
  \qquad
  (v\to\infty)\,.
\end{equation}
Using the ansatz~(\ref{zvsy}) yields
\begin{equation}
  (c_1+1)y + c_0 + \left( c_{-1} - b(c_1+b+1)^{-1} \right)y^{-1} + O(y^{-2}) \;=\; 0
  \qquad
  (y\to\infty)\,.
\end{equation}
This implies $c_1=-1$, $c_0=0$, and $c_{-1}=1$.  The resulting curve must coincide with $\Cco_n^n$ because, according to Theorem~\ref{thm:Dnozero}, the value of $v$ remains bounded on the curves $\Cco^n_k$ for $k<n$.
\end{proof}

%%%%%%%%%%%%%%%%%%%%%%%%%%%%%%%%%%%%%%%%%%%%
\section{Analysis of eigenvalues and eigenfunctions---proofs}\label{sec:proofs}

This section provides a detailed analysis of the eigenvalues of the linear quantum graphs $(\Lambda_n,B_n)$ and $(\Lambda_n,\Bo_n)$ based on graphical analysis of the intersection of the curve $\Spiral$ with the level sets of $D_n(y,z)$ and~$\Do(y,z)$.  According to Theorem~\ref{thm:decomposition}, the spectrum of the quantum tree $(\Gamma_{\!n},A_n)$ is obtained from the spectra of these linear graphs.

\subsection{The curve $\Spiral$}

The curve $\Spiral$ in the $yz$-plane is parametrerized by analytic spectral functions of the potential $q(x)$, namely $(y,z)=(c(\lambda),s(\lambda))$ for $\lambda\in\RR$, shown in Fig.\,\ref{fig:DandSq0} and~\ref{fig:DandS}.
When the potential $q(x)=0$, the parameterization is 
\begin{equation}\label{q0}
  (y,z) = \bigg( \cos\sqrt{\lambda}\,,\;\,\frac{\sin\sqrt{\lambda}}{\sqrt{\lambda}} \,\bigg),
  \qquad \lambda\in\RR,
\end{equation}
which is shown in Fig.\,\ref{fig:DandSq0}.  For $\lambda>0$, it spirals inward toward the interval $[-1,1]$, hitting the endpoints at each pass, and for $\lambda<0$, it follows a curve asymptotic to $z\sim y/\log y$.  These behaviors are true for general potentials~$q(x)$, as detailed in the following proposition.

\begin{proposition}\label{prop:Spiral}
For any potential $q\in L^2[0,1]$:
\begin{enumerate}
  \item  For $\lambda\to\infty$, the functions $c(\lambda)$ and $s(\lambda)$ have asymptotics
\begin{equation}\label{asympinfty}
  \renewcommand{\arraystretch}{1.3}
\left.
\begin{array}{l}
  c(\lambda) \;=\; \cos\nu + O(\nu^{-1})\\
  s(\lambda) \;=\; \nu^{-1}\sin\nu + O(\nu^{-2})
\end{array}
\right.
\qquad (\lambda=\nu^2\to\infty).
\end{equation}
  \item
  If $q(1-x)=q(x)$, then
$\Spiral$ passes through $(-1,0)$ or $(1,0)$ every time it intersects the $y$-axis.  These intersections are transversal and occur exactly at the Dirichlet eigenvalues~$\lambda_D^k$.\label{Scrossing}
  \item
For $\lambda\to-\infty$,
the functions $c(\lambda)$ and $s(\lambda)$ have asymptotics
\begin{equation}\label{asympminfty}
  \renewcommand{\arraystretch}{1.3}
\left.
\begin{array}{l}
  c(\lambda) \;=\; e^\nu \big( \half + q_0\nu^{-1} + O(\nu^{-2}) \big) \\
  s(\lambda) \;=\; \nu^{-1} e^\nu \big( \half + q_0\nu^{-1} + O(\nu^{-2}) \big)
\end{array}
\right.
\qquad (\lambda=-\nu^2 \to -\infty),
\end{equation}
in which $q_0 = \quarter \int_0^1q(x)dx$.  
\end{enumerate}
\end{proposition}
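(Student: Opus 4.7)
The plan is to derive all three parts from the Volterra integral representations of the fundamental solutions. For $\lambda=\nu^2>0$,
\begin{equation*}
c(x,\lambda) \;=\; \cos(\nu x) + \nu^{-1}\!\int_0^x \sin(\nu(x-t))\,q(t)\,c(t,\lambda)\,dt\,,
\end{equation*}
with the analogous equation for $s(x,\lambda)$ having leading part $\nu^{-1}\sin(\nu x)$; for $\lambda=-\nu^2<0$ the trigonometric functions are replaced by their hyperbolic analogues. For Part 1, a single Picard iteration combined with Gronwall's inequality, using $q\in L^2[0,1]$, yields $c(x,\nu^2)=\cos(\nu x)+O(\nu^{-1})$ and $s(x,\nu^2)=\nu^{-1}\sin(\nu x)+O(\nu^{-2})$ uniformly for $x\in[0,1]$; setting $x=1$ gives (\ref{asympinfty}).

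For Part 2, I would use the constancy of the Wronskian $W(x):=c(x,\lambda)s'(x,\lambda)-c'(x,\lambda)s(x,\lambda)$: since both $c$ and $s$ solve $-u''+qu=\lambda u$, one has $W'(x)=0$, and $W(0)=1$. Evaluating at $x=1$ and inserting the symmetry identity $c(1,\lambda)=s'(1,\lambda)$ (already recorded in the paper under $q(1-x)=q(x)$) gives $c(\lambda)^2-c'(1,\lambda)\,s(\lambda)=1$. Hence $s(\lambda)=0$ forces $c(\lambda)=\pm1$, so $\Spiral$ meets the $y$-axis only at $(\pm1,0)$, and these values of $\lambda$ are exactly the Dirichlet eigenvalues by definition~(\ref{sigmaD}). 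Transversality is the classical fact that the entire function $\lambda\mapsto s(1,\lambda)$ has only simple real zeros for a regular Sturm-Liouville problem, which gives $\dot s(\lambda_D^k)\neq 0$ and so the tangent $(\dot c(\lambda_D^k),\dot s(\lambda_D^k))$ to $\Spiral$ is not parallel to the $y$-axis.

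For Part 3, the cleanest approach is to rescale: set $\tilde c(x):=e^{-\nu x}c(x,-\nu^2)$ and $\tilde s(x):=\nu e^{-\nu x}s(x,-\nu^2)$. The identity $e^{-\nu x}\sinh(\nu(x-t))=\half(e^{-\nu t}-e^{-\nu(2x-t)})$ turns the hyperbolic Volterra equation into
\begin{equation*}
\tilde c(x) \;=\; \half\bigl(1+e^{-2\nu x}\bigr) \,+\, \half\nu^{-1}\!\int_0^x \bigl(1-e^{-2\nu(x-t)}\bigr)\,q(t)\,\tilde c(t)\,dt\,,
\end{equation*}
which is a regular perturbation problem as $\nu\to\infty$, with a fully analogous equation for $\tilde s$. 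One Picard iteration starting from $\tilde c\approx\half$ yields $\tilde c(1)=\half+\quarter\nu^{-1}\!\int_0^1 q(t)\,dt+R(\nu)=\half+q_0\nu^{-1}+R(\nu)$. Multiplying through by $e^\nu$ and treating $\tilde s$ in the same way produces (\ref{asympminfty}).

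The main technical obstacle is controlling the remainder $R(\nu)$ at the stated order $O(\nu^{-2})$. The boundary-layer contribution from $e^{-2\nu(x-t)}$ gives only $O(\nu^{-3/2})$ when bounded by Cauchy-Schwarz against an $L^2$ potential, while the second Picard iterate is already $O(\nu^{-2})$; so matching the proposition's bound requires a finer decomposition of the boundary layer near $t=x$ (or a mild regularity upgrade on $q$, under which an integration-by-parts gives the required $O(\nu^{-2})$). Parts 1 and 2, by contrast, fall directly out of the standard Volterra and Wronskian machinery.
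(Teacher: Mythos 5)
Your Part 2 is the paper's own argument, essentially verbatim: constancy of the Wronskian plus the symmetry identity $c(1,\lambda)=s'(1,\lambda)$ yields $c(\lambda)^2-c'(1,\lambda)s(\lambda)=1$, so $s(\lambda)=0$ forces $c(\lambda)=\pm1$, and transversality follows from simplicity of the zeros of $s(1,\cdot)$. For Parts 1 and 3 you differ from the paper only in that the paper does not prove them at all---it cites \cite[Ch.\,1]{PoschelTrubowitz1987} and \cite[\S1.1]{FreilingYurko2001}---whereas you reconstruct the Volterra/Picard derivation underlying those references. Your Part 1 is complete (a single iteration plus Gronwall suffices, even for $q\in L^1$), and your exponential rescaling $\tilde c(x)=e^{-\nu x}c(x,-\nu^2)$ in Part 3 correctly produces the leading term and the $q_0\nu^{-1}$ correction. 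What the self-contained route buys is visibility into exactly where the remainder comes from, which turns out to matter here; what the citation buys is brevity.

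The obstacle you flag in Part 3 is genuine, and you should know that no ``finer decomposition of the boundary layer'' will remove it: the first Picard iterate contributes to $e^{-\nu}c(\lambda)$ the exact term $\quarter\nu^{-1}\big(\int_0^1 e^{-2\nu t}q(t)\,dt-\int_0^1 e^{-2\nu(1-t)}q(t)\,dt\big)$, and to $\nu e^{-\nu}s(\lambda)$ the analogous term with a $+$ sign between the integrals, while all higher iterates are $O(\nu^{-2})$. For $q(t)=t^{-1/4}\in L^2$ the first integral is $\asymp\nu^{-3/4}$, so the true remainder for $s$ is $\asymp\nu^{-7/4}$, \emph{not} $O(\nu^{-2})$: the proposition's remainder bound is an overstatement for bare $q\in L^2$, and the honest uniform statement at that regularity is $o(\nu^{-1})$ (dominated convergence) or your Cauchy--Schwarz $O(\nu^{-3/2})$. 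The ``mild regularity upgrade'' you propose is the right fix and is very mild indeed: $q\in L^\infty$ gives $\int_0^1 e^{-2\nu t}|q(t)|\,dt\le\|q\|_\infty/(2\nu)$ and hence $O(\nu^{-2})$ with no integration by parts. Two further observations are worth recording. First, under the paper's standing hypothesis $q(1-x)=q(x)$, the substitution $t\mapsto1-t$ shows the two boundary-layer integrals in $c(1,\lambda)$ cancel exactly, so the $c$-asymptotic in (\ref{asympminfty}) holds as stated for symmetric $L^2$ potentials; for $s(1,\lambda)$ they add rather than cancel, so the $s$-remainder still needs the upgrade. Second, none of this affects the paper downstream: Lemma~\ref{lemma:intersection} and Sec.~\ref{subsec:rogue} use only the leading order $c\sim\half e^\nu$, $s\sim\half\nu^{-1}e^\nu$, and the refined estimates there are carried out exactly in the case $q\equiv0$.
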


\begin{proof}
For the asymptotics of $c(\lambda)$ and $s(\lambda)$, see \cite[Ch\,1]{PoschelTrubowitz1987} or \cite[\S1.1]{FreilingYurko2001}, for example.
For Part~\ref{Scrossing}:
The roots of $s(\lambda)$ are exactly the Dirichlet eigenvalues $\lambda_D^k$.
The Wronskian $W(c(x,\lambda),s(x,\lambda))$ is identically equal to $1$; and $s'=c$ whenever $q(x)=q(1-x)$.  This yields the relation $c^2-sc'=1$ and hence $c(\lambda)^2=1$ whenever $s(\lambda)=0$.  Furthermore, $ds/d\lambda(\lambda_D^k)\not=0$ since all the roots of $s(\lambda)$ are simple.
\end{proof}

\subsection{Eigenvalue clusters}\label{clusters}

The spiral curve $\Spiral$ passes alternately through $(-1,0)$ and $(1,0)$ at the Dirichlet eigenvalues~$\lambda_D^k$, according to Part~\ref{Scrossing} of Proposition~\ref{prop:Spiral}.  
And according to Part~\ref{Znintersection} of Theorem~\ref{thm:Dnzero}, the components $\Cc_n^0$ and $\Cc_n^n$ of the zero-set of $D_n$ pass through $(-1,0)$ and $(1,0)$, respectively.  Therefore, $\Spiral$ crosses each of the curves $\Cc_n^j$, for $1\leq j\leq n\!-\!1$, as $\lambda$ traverses each of the intervals $(\lambda_D^k,\lambda_D^{k+1})$, for $k\geq1$.  These intersections constitute the cluster $\Sigma_n^k$ of eigenvalues of $(\Lambda_n,B_n)$ for $k\geq1$.  \rev{This establishes Part\,1 of Theorem\,\ref{thm:spectrum1}.}

$\Spiral$ crosses each of the curves $\Cc_n^j$, for $1\leq j\leq n$ as $\lambda$ traverses the interval $(-\infty,\lambda_D^1)$.  This follows from the asymptotics of the curves $\Cc_n^n$ and $\Spiral$ as $y\to\infty$; the former obeys $z/y \sim -1/\alpha$ (Theorem~\ref{thm:roguecurves}), while the latter obeys $z/y \sim 1/\log y$ (Part~3 of Proposition~\ref{prop:Spiral}).  
Because of Part~\ref{Cconstraints} of Theorem~\ref{thm:Dnzero}, the components $\Cc_n^{n-1}$ and $\Cc_n^n$ break away from the cluster of components $\Cc_n^k$ ($0\leq k\leq n-2$).  The intersections of these two rightmost components with $\Spiral$ give the two rogue eigenvalues $\lambda_n^-$ and $\lambda_n^=$, while the intersections with the other components $\Cc_n^k$ ($1\leq k\leq n-2$) constitute the cluster~$\Sigma_n^0$.  \rev{All these eigenvalues are treated in Sec.\,\ref{subsec:rogue}.}

As for Theorem\,\ref{thm:spectrum2}, parts 1 and\,2 of Theorem~\ref{thm:Dnozero} guarantee that all intersections of $\Do(y,z)=0$ with the $y$-axis lie in the interval $(-1,1)$.
Therefore, $\Spiral$ crosses each of the component curves $\Cco_n^j$, for $1\leq j\leq n$ as $\lambda$ traverses each of the intervals $(\lambda_D^k,\lambda_D^{k+1})$, for $k\geq1$.  The intersections with the components $\Cco_n^j$, for $2\leq j\leq n\!-\!1$ constitute the cluster $\So_n^k$ of eigenvalues of $(\Lambda_n,\Bo_n)$ for $k\geq1$.  (The intersections with $\Cco_n^1$ or $\Cco_n^n$ are designated below as intermediate eigenvalues.)  

$\Spiral$ crosses each of the curves $\Cco_n^j$, for $1\leq j\leq n$ as $\lambda$ traverses the interval $(-\infty,\lambda_D^1)$, again because of the large-$y$ asymptotics. 
Because of Part~\ref{Coconstraints} of Theorem~\ref{thm:Dnozero}, the component $\Cco_n^n$ breaks away from the cluster of components $\Cco_n^k$ ($1\leq k\leq n-1$).  The intersections of this component with $\Spiral$ gives the rogue eigenvalue $\lo_n^=$, while the intersections with the other components $\Cco_n^k$ ($2\leq k\leq n-1$) constitute the cluster~$\So_n^0$.

We claim that $\Spiral$ intersects each component $\Cc_n^j$ or $\Cco_n^j$ exactly once at each pass, excluding the intersections at $(\pm1,0)$.  The argument is based on the analyticity of the operators $B_n=B_n(\alpha)$ and $\Bo_n=\Bo_n(\alpha)$ in~$\alpha$, proved in~\cite{BerkolaikoKuchment2012a}.  For any given $\lambda$-interval $(\lambda_D^k,\lambda_D^{k+1})$ or $(-\infty,\lambda_D^1)$, $-\alpha$ can be made great enough so that there is exactly one intersection of $\Spiral$ with each $\Cc_n^j$ or $\Cco_n^j$, which is transverse.  The corresponding eigenvalue is simple.  As $\alpha$ varies over $\RR$, no two intersections can collide since the curves $\{\Cc_n^j\}_{j=0}^n$ ($\{\Cco_n^j\}_{j=1}^n$) are disjoint.  Therefore the number of intersections between $\Spiral$ and $\Cc_n^j$ ($\Cco_n^j$) must remain one for all~$\alpha$.

The interlacing property of the roots of $P_n$ and $Q_n$ and how $D_n$ and $\Do_n$ are related to them (see (\ref{DPQ})) results in the interlacing of the eigenvalues in the cluster $\Sigma_n^k$ with those in the cluster $\So_n^k$.  This is evident from Fig.\,\ref{fig:PQD}.

\rev{Part 4 of Theorem~\ref{thm:spectrum1} is proved as follows.  The asymptotics of $\Spiral$ in Part~1 of Proposition~\ref{prop:Spiral}, show that, as $\lambda\to\infty$, $\Spiral$ is asymptotic to the curve $\lambda\mapsto(\cos\sqrt{\lambda\,},0)$, independently of~$q$.  By~(\ref{DPQ}), the continuous curves $\Cc_n^j$ are independent of $\alpha$ for $z\!=\!0$, and this proves the first statement of Part~4 (this is also seen cleanly in the graphical analysis).  The periodicity of $\sqrt{\Sigma_n^k(0,0)}$ occurs because the curves $\Cc_n^j$ are vertical lines for $\alpha\!=\!0$ and $c(\lambda)\!=\!\cos\sqrt{\lambda}$ for $q\!\equiv\!0$.
The curves $\Cc_n^j$ ($1\leq j\leq n\!-\!1$) intersect $\{z=0\}$ where $y$ satisfies (\ref{ybound1}).  This implies, for large enough $\lambda$ at the intersections, the bound $|\cos\sqrt{\lambda}|<2(b^{-1/2}+b^{1/2})^{-1}$, from which Part~4 follows.  Part~4 of Theorem~\ref{thm:spectrum2} is proved similarly.}

\subsection{Intermediate eigenvalues}\label{intermediate}

Recall that each segment of the curve $\Spiral$ between clusters $\Sigma_n^{k-1}$ and $\Sigma_n^k$, for odd $k\geq1$ hits $\Cc_n^0$ twice, unless it hits it tangentially (for even $k\geq2$, it hits $\Cc_n^n$). One of the intersections occurs at $\lambda_D^k$ and the other occurs at another intermediate eigenvalue~$\lambda_n^k$ of $(\Lambda_n,B_n)$, as long as the intersection is not tangential.  In this section, we argue that $\lambda_D^k$ is an eigenvalue of $(\Lambda_n,B_n)$ if and only if $\Spiral$ intersects $\Cc_n^0$ tangentially.  This happens when $\lambda_D^k$ and $\lambda_n^k$ coalesce; and in this case we define the intermediate eigenvalue $\lambda_n^k$ to be equal to $\lambda_D^k$.

To analyze the spectrum of $(\Lambda_n,B_n)$ around $\lambda_D^k$, we apply the ``dotted-graph" technique of~\cite[\S\,IV]{KuchmentZhao2019a}.  An extra vertex is placed in the interior of each edge, with the Neumann condition (Robin parameter equal to $0$) imposed; the potentials on the new smaller edges are inherited from the original edges, and one obtains a new quantum graph $(\dot\Lambda_n,B_n)$.  The Neumann condition guarantees continuity of functions and their derivatives across the new vertices, thus making them inconsequential for the operator, and thus we retain the notation $B_n$.  The graphs $(\Lambda_n,B_n)$ and $(\dot\Lambda_n,B_n)$ are essentially the same; particularly, their spectra are equal.  Furthermore, the extra vertex can be chosen so that $\lambda_D^k$ is not a Dirichlet eigenvalue of any of the edges of $\dot\Lambda_n$.  Thus the matrix $\dot M_n$ for the dotted graph, analogous to $M_n$ (\ref{Mn}), vanishes at $\lambda_D^k$ if and only if $\lambda_D^k$ is an eigenvalue of $(\Lambda_n,B_n)$.

Returning to the matrix $M_n$ (\ref{Mn}), notice that the Dirichlet-to-Neumann matrix (\ref{DtN}) is multiplied by $s(\lambda)$ when writing the Robin condition for each of the vertices.  This means that each row of $M_n=M_n(\lambda)$ has an extra factor of $s(\lambda)$ compared with the ``true" spectral matrix $\hat B_n(\lambda)$ for $(\Lambda_n,B_n)$, obtained by using the Dirichlet-to-Neumann map directly.  Thus
\begin{equation}
  \det(\hat B_n(\lambda)) \;=\; \frac{1}{s(\lambda)^{n+1}} D_n(\lambda).
\end{equation}
Next we use \cite[Proposition\,1]{FisherLiShipman2020a}, which tells us that the insertion of an extra vertex into an edge $e$ results in the determinant of the spectral matrix being multiplied by the factor
$s(\lambda)/(s_1(\lambda)s_2(\lambda))$, where $s_1(\lambda)$ and $s_2(\lambda)$ are the spectral $s$-functions for the sub-edges obtained by splitting $e$ by the new vertex.  By applying this fact to each of the $n$ edges of $\Lambda_n$, we obtain a relation between the determinants of $\hat B_n(\lambda)$ and $\hat {\dot B}_n(\lambda)$, the latter being the spectral matrix for the dotted graph:
\begin{equation}
  \det {\dot B}_n(\lambda) \;=\;  \frac{s(\lambda)^n}{s_1(\lambda)^ns_2(\lambda)^n} \det(\hat B_n(\lambda))
\end{equation}
Since $\lambda_D^k$ is an eigenvalue of $(\Lambda_n,B_n)$ if and only if $\det {\dot B}_n(\lambda_D^k)=0$, we find that $\lambda_D^k$ is an eigenvalue of $(\Lambda_n,B_n)$ if and only if $\lambda_D^k$ is a root of
\begin{equation}\label{newD}
  \frac{1}{s(\lambda)} D_n(\lambda).
\end{equation}
Recall that $\lambda_D^k$ is a root of $D_n(\lambda)$, so the pole of (\ref{newD}) at $\lambda_D^k$ is removable.

We have arrived at the conclusion that $\lambda_D^k$ is an eigenvalue of $(\Lambda_n,B_n)$ if and only if it is a multiple root of $D_n(\lambda)$.  This occurs when the two intersection points of $\Spiral$ and $\Cc_n^0$ coalesce into one tangential intersection. 

As discussed under ``eigenvalue clusters" above, the two intermediate eigenvalues of $(\Lambda_n,\Bo_n)$ between clusters $\So_n^k$ and $\So_n^{k+1}$ ($k\geq0$) come from the two intersections of $\Spiral$ with $\Cco_n^1$ or $\Cco_n^n$ near the Dirichlet eigenvalue~$\lambda_D^k$.

\rev{The limit $\lambda_n^k\to\lambda_D^k$ as $k\to\infty$ in Part~3 of Theorem~\ref{thm:spectrum1} is seen as follows.  The curve $\Spiral$ passes through $(-1,0)$ or $(1,0)$ at the energies $\lambda\!=\!\lambda_D^k$, and $\Cc_n^0$ passes through $(-1,0)$ and $\Cc_n^n$ passes through $(1,0)$.  The other points where $\Spiral$ intersects $\Cc_n^0$ or $\Cc_n^n$ correspond to the energies $\lambda_n^k$, and these coincide with $\lambda_D^k$ whenever the intersection is tangential.  
The asymptotics of $\Spiral$ for large positive $\lambda$, given by Part~1 of Proposition~\ref{prop:Spiral}, show that $\lambda_n^k$ tends to $\lambda_D^k$ as $k\to\infty$.  The corresponding statement in Part~3 of Theorem~\ref{thm:spectrum2} is proved similarly.}

\subsection{Large negative cluster and rogue eigenvalues}\label{subsec:rogue}

Quantifying the cluster $\Sigma_n^0$ and the rogue eigenvalues $\lambda_n^-$ and $\lambda_n^=$
as $\alpha\to-\infty$ requires finding the asymptotic relation between $\lambda$ and $\alpha$ at the intersection points of the curve $\Spiral$ with curves in the $yz$-plane that tend to straight lines as $y\to\infty$.  This is because the two unconstrained components $\Cc_n^{n-1}$ and $\Cc_n^n$ of $D(y,z)=0$ plus the slanted lines constraining the rest of the components all have the form $\alpha z + \beta y + \gamma + O(y^{-1}) = 0$, according to Theorems~\ref{thm:Dnzero} and~\ref{thm:roguecurves}.  Analogous reasoning applies to $\So_n^0$ and~$\lo_n^=$ by Theorems~\ref{thm:Dnozero} and~\ref{thm:roguecurves}.
Parts 2 and 5 of Theorems \ref{thm:spectrum1} and~\ref{thm:spectrum2} can be derived from these asymptotic forms together with the following Lemma.

\begin{lemma}\label{lemma:intersection}
The parameterized curve $\Spiral : (y,z)=(c(\lambda),s(\lambda))$ and the set\, $\{ (y,z) : \alpha z + \beta y + \gamma + O(y^{-1}) = 0 \}$ (with $\beta>0$) intersect in the $yz$-plane at values of $\lambda=-\nu^2<0$, where
\begin{equation}\label{nu1}
  \nu \;=\; -\beta^{-1}\alpha + O(\alpha^{-1})
  \qquad (\alpha\to-\infty).
\end{equation}
At this intersection, $c(\lambda)\sim \half e^{-\beta^{-1}\alpha}$ and $s(\lambda)\sim -\half\alpha^{-1}\beta e^{-\beta^{-1}\alpha}$ as $\alpha\to-\infty$.

When $q(x)\equiv0$, this value of $\nu$ can be refined to
\begin{equation}\label{nu2}
  \nu \;=\; -\beta^{-1}\alpha + 2\beta^{-2} \gamma\, \alpha\, e^{\beta^{-1}\alpha} + O(\alpha\, e^{2\beta^{-1}\alpha})
  \qquad
  (\alpha\to-\infty).
\end{equation}
\end{lemma}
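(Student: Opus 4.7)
The plan is to substitute the $\lambda\to-\infty$ asymptotic expansions of $c(\lambda)$ and $s(\lambda)$ from Part~3 of Proposition~\ref{prop:Spiral} into the linear condition $\alpha z + \beta y + \gamma + O(y^{-1}) = 0$, then solve the resulting scalar equation for $\nu>0$ (where $\lambda=-\nu^2$) as $\alpha\to-\infty$. Since $y = c(\lambda) \sim \tfrac12 e^\nu$ is exponentially large along $\Spiral$, the remainder $O(y^{-1})$ becomes $O(e^{-\nu})$, which is negligible compared to the algebraic corrections in $\nu$ that arise in the general case.

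The key observation is that the expansions of $c$ and $s$ share the same bracketed factor $(\tfrac12 + q_0\nu^{-1} + O(\nu^{-2}))$, so the combination factors as
\[
  \alpha s(\lambda) + \beta c(\lambda) \;=\; e^\nu\!\left(\tfrac{1}{2} + q_0\nu^{-1}\right)(\beta + \alpha/\nu) + O(e^\nu/\nu^2).
\]
After dividing the equation $\alpha s + \beta c + \gamma + O(e^{-\nu})=0$ by $e^\nu$, this yields
\[
  \left(\tfrac{1}{2} + q_0\nu^{-1}\right)(\beta + \alpha/\nu) \;=\; O(\nu^{-2}) + O(e^{-\nu}),
\]
and since the bracketed factor on the left is bounded away from zero, the right-hand side forces $\beta + \alpha/\nu = O(\nu^{-2}) = O(\alpha^{-2})$. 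Writing $\nu = -\beta^{-1}\alpha + \delta$, the expansion $\beta + \alpha/\nu = -\beta^2\delta/\alpha + O(\delta^2/\alpha)$ then gives $\delta = O(\alpha^{-1})$, establishing~(\ref{nu1}). Plugging $\nu = -\beta^{-1}\alpha + O(\alpha^{-1})$ back into the asymptotic expansions of $c$ and $s$ immediately produces $c(\lambda) \sim \tfrac{1}{2}e^{-\beta^{-1}\alpha}$ and $s(\lambda) \sim -\tfrac{1}{2}\alpha^{-1}\beta\, e^{-\beta^{-1}\alpha}$, using $\nu^{-1} \sim -\beta/\alpha$ for the $s$-asymptotic.

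For the refinement (\ref{nu2}) with $q\equiv 0$, the identities $c(\lambda) = \cosh\nu$ and $s(\lambda) = \sinh\nu/\nu$ are exact, so no algebraic $O(\nu^{-k})$ corrections are present at all. Direct rearrangement gives the exact identity
\[
  \tfrac{e^\nu}{2}(\beta + \alpha/\nu) + \tfrac{e^{-\nu}}{2}(\beta - \alpha/\nu) + \gamma + O(e^{-\nu}) = 0,
\]
from which $\beta + \alpha/\nu = -2\gamma e^{-\nu} + O(e^{-2\nu})$, hence $\alpha/\nu = -\beta - 2\gamma e^{-\nu} + O(e^{-2\nu})$. Inverting yields $\nu = -\alpha/\beta + 2\gamma\alpha\beta^{-2} e^{-\nu} + O(\alpha e^{-2\nu})$, and iterating once by substituting the leading $\nu \sim -\alpha/\beta$ into the exponentially small correction gives $\nu = -\beta^{-1}\alpha + 2\beta^{-2}\gamma\alpha\, e^{\beta^{-1}\alpha} + O(\alpha\, e^{2\beta^{-1}\alpha})$, which is (\ref{nu2}). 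The main care required throughout is bookkeeping to distinguish the algebraic $O(\nu^{-k})$ remainders (which dominate in the general-$q$ case and are what produce the $O(\alpha^{-1})$ error in~(\ref{nu1})) from purely exponentially small corrections (which alone survive when $q \equiv 0$, enabling the sharper formula~(\ref{nu2})).
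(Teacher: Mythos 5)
Your proof is correct and takes essentially the same route as the paper: substitute the $\lambda\to-\infty$ asymptotics of Proposition~\ref{prop:Spiral} into the linear condition to force $\alpha/\nu\to-\beta$ (the paper phrases this as the ansatz $\alpha=c_1\nu+c_0+O(\nu^{-1})$ with $c_1=-\beta$, $c_0=0$), and for $q\equiv0$ use the exact hyperbolic formulas to reach the same relation $\alpha\nu^{-1}+\beta+2\gamma e^{-\nu}+O(e^{-2\nu})=0$ and iterate once. The only minor point is that your factored remainder $O(e^\nu/\nu^2)$ tacitly presupposes $\alpha/\nu=O(1)$, which should be bootstrapped first (e.g.\ from $-\alpha s/c\to\beta$ together with $s/c\sim\nu^{-1}$); the paper's terser proof glosses the same step.
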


\begin{proof}
Using the asymptotics~(\ref{asympminfty}), the two conditions $(y,z)=(c(\lambda),s(\lambda))$ and $\alpha z + \beta y + \gamma + O(y^{-1}) \;=\; 0$ together imply a relation between $\alpha$ and $\nu$,
\begin{equation}
  \alpha = c_1\nu + c_0 + O(\nu^{-1}),
\end{equation}
and the constants are computed to be
$c_1=-\beta$ and $c_0=0$, and hence $\alpha = -\beta\nu + O(\nu^{-1})$.  This implies
$\nu = -\beta^{-1}\alpha + O(\alpha^{-1})$.
Using this and (\ref{asympminfty}) produces the asymptotics of $c(\lambda)$ and $s(\lambda)$. 

In the case that $q(x)\equiv0$, one has
  $c(\lambda) = \half\, \left( e^\nu + e^{-\nu} \right)$
and  $s(\lambda) = \half\, \nu^{-1} \left( e^\nu - e^{-\nu} \right)$,
which yields
\begin{equation}
  \alpha\nu^{-1} + \beta + 2\gamma e^{-\nu} + O(e^{-2\nu}) = 0\,,
\end{equation}
from which one can deduce~(\ref{nu2}).
\end{proof}

\rev{To obtain part 5 of Theorem~\ref{thm:spectrum1}, recall that $\lambda_n^-$ is the $\lambda$-value at the intersection of $\Spiral$ with $\Cc_n^{n-1}$, and by (\ref{roguecurves}) of Theorem~\ref{thm:roguecurves}, $\Cc_n^{n-1}$ is $\alpha z+by+O(y^{-1})=0$.  Thus (\ref{nu1}) with $\beta=b$ results in $\lambda_n^-=-\nu^2=-b^{-2}\alpha^2 + O(1)$ as $\alpha\to-\infty$.
The curve $\Spiral$ intersects $\Cc^n_n$ at $\lambda^=_n$, and this time $\beta=1$, resulting in
$\lambda_n^==-\nu^2=-\alpha^2 + O(1)$.

To obtain (\ref{hi1}) of Part 2 of Theorem~\ref{thm:spectrum1}, observe that, according to Theorem~\ref{thm:Dnzero}, the curves $\Cc_n^k$ ($0\leq k\leq n-2$) lie in the region $|\alpha z+(b+1)y| \leq b+1$, and thus (\ref{nu1}) applies for all $\lambda\in\Sigma_n^0\cup\{\lambda_n^1\}$ with $\beta=b+1$.

The thickness result (\ref{hi2}) for $\Sigma_n^0$ when $q(x)\equiv0$ is obtained as follows.  For $z\geq0$, according to Theorem~\ref{thm:Dnzero}, the cluster $\Sigma_n^0$ lies in the region $-(b+1)<\alpha z+(b+1)y < b+1$, which is bounded by lines of the form $\alpha z+\beta y+\gamma=0$ with $\beta=b+1$ and $\gamma=-(b+1)$ on the right and $\beta=\gamma=b+1$ on the left.  According to Lemma~\ref{lemma:intersection}, $\Spiral$ intersects the right line at an energy $\lambda_-=-\nu_-^2$ and the left line at an energy $\lambda_+=-\nu_+^2$, where
\begin{equation}
  \nu_\pm \,=\, -(b+1)^{-1}\alpha \,\pm\, 2(b+1)^{-1}\alpha e^{(b+1)^{-1}\alpha} + O(\alpha e^{2(b+1)^{-1}\alpha}),
\end{equation}
so that
\begin{equation}
  \lambda_\pm \,=\, -(b+1)^{-2}\alpha^2 \,\pm\, 4(b+1)^{-2}\alpha^2 e^{2(b+1)^{-1}\alpha}
       + O(\alpha^2 e^{2(b+1)^{-1}\alpha}).
\end{equation}
$\Sigma_n^0$ lies in an interval of length less than $\lambda_+\!-\!\lambda_-$, which yields the result.

The corresponding results for $(\Lambda_n,\Bo_n)$ in Theorem~\ref{thm:spectrum2} are proved essentially identically.}

\subsection{Eigenfunctions}\label{efcns}

The coefficients $a_1$ and $a_2$ in~(\ref{utilden}) for the discretized eigenfunctions of $(\Lambda_n,B_n)$ can be found by using the first equation from~(\ref{Mn}).  We find that the eigenspace is spanned by
\begin{equation}\label{uk}
  u_k \;=\; 
   \big(b^{1/2}\xi - c\big) \big(b^{-1/2}\xi\big)^k - \big(b^{1/2}\xi^{-1} - c\big) \big(b^{-1/2}\xi^{-1}\big)^k,
\end{equation}
where, as before, $\xi+\xi^{-1}\!=\!b^{-\frac{1}{2}}v$ and $v=v(\lambda)=\alpha s(\lambda)+(b+1)c(\lambda)$.
Similarly, the eigenspace for a discretized eigenvalue of $(\Lambda,\Bo)$ is spanned by
\begin{equation}\label{uko}
  u_k \;=\; 
  \big( b^{-1/2}\xi \big)^k - \big( b^{-1/2}\xi^{-1} \big)^k.
\end{equation}
Whether $\lambda$ is an eigenvalue is determined by whether $\left\{ u_n \right\}$ satisfies the equation represented by the last row of $M_n$ or $\Mo_n$, which is of course equivalent to the vanishing of $D_n$ or~$\Do_n$.

Theorems~\ref{thm:Dnzero} and~\ref{thm:Dnozero} and the fact that all roots $v$ of $P_n$ and $Q_n$ lie in the oscillatory regime (see~(\ref{rootsbound})), allow us to make the following conclusion, wherein the visual aid of Fig.\,\ref{fig:PQD} is helpful.  Because of the symmetries (\ref{symmetries}) we can consider just $\alpha\leq0$ and $z\geq0$.  For $|y|\leq1$, the curves $\Cc_n^k$ for $1\leq k\leq n\!-\!1$ and $\Cco_n^k$ for $2\leq k\leq n\!-\!1$ are entirely within the oscillatory regime.  The curves $\Cc_n^0$ and $\Cco_n^1$ enter the oscillatory regime as $y$ increases.  At $y=1$, $\Cc_n^{n-1}$ is in the oscillatory regime, very close to the boundary of the two regimes, and as $y$ increases, it enters the exponential regime.  Corresponding conclusions are made concerning the regimes of the eigenfunctions for the various eigenvalues.  Six eigenfunctions are plotted in Fig.\,\ref{fig:Efcn}, and their regimes are stated in the caption.

The rogue eigenvalues $\lambda_n^-$, $\lambda_n^=$, and $\lo_n^=$ lie in the exponential regime when $\alpha$ is negative and large enough.  We are interested in the ratio of the coefficients of the modes in the eigenfunctions~$\{u_k\}$.  For $\lambda_n^-$ and $\lambda_n^=$, which lie on $\Cc_n^{n-1}$ and $\Cc_n^n$, Theorem~\ref{thm:roguecurves} gives us $\alpha s + \beta c = O(c^{-1})$ with $\beta\in\{b,1\}$.   Therefore, at these eigenvalues, $v=\alpha s + (b+1)c=O(c)$ as $c\to\infty$.  Taking $|\xi|\geq1$, we have $\xi\sim b^{-1/2}v$.  From this, we find that the ratio of coefficients is
\begin{equation}\label{coeffratio}
  \frac{b^{1/2}\xi - c}{b^{1/2}\xi^{-1} - c}
  \;=\; 
\renewcommand{\arraystretch}{1.1}
\left\{
\begin{array}{ll}
  O(c^{-2}) &\text{ for } \lambda_n^- \\
  1-b + O(c^{-2}) &\text{ for } \lambda_n^=.
\end{array}
\right.
\end{equation}
As for $\lo_n^=$, (\ref{uko}) shows that the ratio of the coefficients is $-1$.
Fig.\,\ref{fig:Efcn} illustrates these asymptotics, as the eigenfunction for $\lambda_n^=$ experiences exponential growth while that for $\lambda_n^-$ decays.

%%%%%%%%%%%%%%%%%%%%%%%%%%%%%%%%%%%%%%%%%%%%
\section{Semi-infinite graphs}\label{sec:infinite}

We offer a few words about the semi-infinite quantum tree $(\Gamma_{\!\infty},A_\infty)$ with the same vertex conditions and potential as $(\Gamma_{\!n},A_n)$ but no terminal vertices.  Denote the corresponding linear graphs by $(\Lambda_\infty,B_\infty)$ and $(\Lambda_\infty,\Bo_\infty)$.  Eigenvalue clusters of the finite graphs, being in the oscillatory regime, survive the limit $n\to\infty$ and become spectral bands of the semi-infinite graphs, whereas in the exponential regime, eigenvalues of the semi-infinite graphs cannot be deduced from the $n\to\infty$ limit of the eigenvalues of the finite graphs.

As $n\to\infty$, the clusters $\Sigma_n^k$ and $\So_n^k$ of eigenvalues become spectral bands $\Sigma_\infty^k$ and $\So_\infty^k$ of the semi-infinite graphs.  These are of course the same bands found by Carlson~\cite[Theorem\,4.2]{Carlson1997} for trees with uniform degree $b+1$, that is, instead of having a distinguished root vertex with degree $b$ as our tree does.
\rev{The following theorem is from \cite{Carlson1997}, adapted to the notation in this paper.

\begin{theorem}[Thm.\,4.2 and Cor.\,4.5 of Carlson~\cite{Carlson1997}]\label{thm:carlson}
Denote by $T$ the quantum tree that is the same as $(\Gamma_{\!\infty},A_\infty)$ except with degree $b+1$ at the root vertex.  Define $\sigma_1\subset\RR$ to consist of all energies $\lambda$ in the closure of the oscillatory regime, that is, for which $|b^{-1/2}v(\lambda)|\leq 2$ (see Sec.~\ref{sec:eigenfunctions} above).
The spectrum of $T$ is
\begin{equation*}
  \sigma(T) \;=\; \sigma_1 \cup \sigma_{\!D}
\end{equation*}
with $\sigma_1$ being absolutely continuous and $\sigma_{\!D}$ being point spectrum.
\end{theorem}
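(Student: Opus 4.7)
The plan is to exploit the reduction to linear graphs of Section~\ref{sec:linear}, adapted to the uniform-degree tree $T$, and then carry out a Floquet/Bloch analysis of the resulting semi-infinite periodic Jacobi-type recursion.  First I would apply the radial symmetry of $T$ to decompose its Schr\"odinger operator as a direct sum of semi-infinite linear quantum-graph operators in the style of Theorem~\ref{thm:decomposition}; the completely symmetric component plus one $\Bo_\infty$-type summand for each generation.  Because every vertex of $T$ now has the same degree $b+1$, the resulting linear operator has the same constant-coefficient Robin/Jacobi structure at every interior vertex.  For $\lambda\notin \sigma_{\!D}(q)$, the Dirichlet-to-Neumann map~(\ref{DtN}) reduces the eigenvalue equation to the discrete, constant-coefficient recursion
\[
  \tilde u_{k-1} - b^{-1/2}v(\lambda)\,\tilde u_k + \tilde u_{k+1} = 0,
\]
on the weighted space $\ell^2$ with weight $b^k$, in the rescaled variables $\tilde u_k = b^{k/2}u_k$ and with $v(\lambda)$ as in Section~\ref{sec:eigenfunctions}.

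Next I would separate the two regimes in~(\ref{regimes}).  On the interior of the oscillatory set, $|b^{-1/2}v(\lambda)|<2$, the characteristic roots $\xi,\xi^{-1}$ lie on the unit circle, so the transfer matrices are polynomially bounded and the Weyl--Titchmarsh $m$-function is an explicit M\"obius transform of $\xi$ whose boundary values have strictly positive imaginary part.  Standard Gilbert--Pearson/subordinacy arguments then yield purely absolutely continuous spectrum on $\mathrm{int}\,\sigma_1$.  On the exponential regime $|b^{-1/2}v(\lambda)|>2$, the two characteristic roots satisfy $|\xi|>1>|\xi^{-1}|$; the weighted $\ell^2$ requirement kills the coefficient of $\xi^k$, and the Robin condition at the root then kills the remaining coefficient except at isolated candidates, which the graphical apparatus of Section~\ref{sec:graphical} rules out in the limit $n\to\infty$ since the extremal curves $\Cc_n^{n-1}$, $\Cc_n^n$, and $\Cco_n^n$ carry their intersections with $\Spiral$ off to $-\infty$.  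Hence the exponential regime contributes nothing to $\sigma(T)$.

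For $\lambda_D^k\in\sigma_{\!D}(q)$ I would construct an infinite-dimensional eigenspace by a ``bubble'' construction.  Fix any interior vertex $v$ and let $\varphi(x)$ be a Dirichlet eigenfunction of $-d^2/dx^2+q(x)$ on $[0,1]$ at eigenvalue $\lambda_D^k$.  Placing $\varphi$ on one edge incident to $v$, $-\varphi$ on another (with signs arranged using $q(1-x)=q(x)$ so that the derivatives cancel in the sum that defines the Robin condition at $v$), and zero on all other edges, yields a continuous function on $T$ that is edgewise an eigenfunction, vanishes at every vertex of the support, and trivially satisfies the Robin condition at every vertex.  This gives a $(b-1)$-parameter family of independent bubbles at each interior vertex, and since $T$ has infinitely many such vertices, one obtains an infinite-multiplicity eigenvalue at $\lambda_D^k$, producing the point-spectrum summand $\sigma_{\!D}$.

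The main obstacle is upgrading ``no $\ell^2$ solutions'' in the oscillatory regime to honest absolute continuity.  Absence of point spectrum in $\mathrm{int}\,\sigma_1$ is automatic, but ruling out singular continuous spectrum requires a quantitative statement: that the boundary values of the Weyl $m$-function from the upper half-plane have a locally bounded, strictly positive imaginary part on $\mathrm{int}\,\sigma_1$.  For the constant-coefficient recursion above, the $m$-function is an explicit rational function of the Floquet multiplier $\xi$, which admits an analytic continuation across the open oscillatory intervals, making this computation tractable; but care is required at the band edges $|b^{-1/2}v(\lambda)|=2$, where the $m$-function acquires square-root singularities, and at the Dirichlet points $\lambda=\lambda_D^k$, where the Dirichlet-to-Neumann reduction fails and one must revert to the bubble construction on the quantum graph itself to identify precisely the pure-point contribution.
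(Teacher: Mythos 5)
Two genuine gaps, both at the heart of the theorem. (For context: the paper does not prove this statement at all---it is imported verbatim from Carlson's Thm.\,4.2 and Cor.\,4.5, with the graphical apparatus of Sections~\ref{sec:graphical} and~\ref{sec:infinite} serving only to visualize it---so the comparison below is against what a correct proof would require and against the facts the paper itself records in Section~\ref{sec:infinite}.)

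First, your ``bubble'' construction for the point spectrum $\sigma_{\!D}$ fails the vertex conditions at the far endpoints of its support. If $\varphi$ is a Dirichlet eigenfunction at $\lambda_D^j$, then $\varphi'(1)\neq 0$ (otherwise $\varphi\equiv 0$ by uniqueness for the initial value problem), so at the bottom vertex of each supporting edge the condition $\sum_{i}u_i'(0)-u_0'(1)=\alpha\,u_*=0$ reads $-\varphi'(1)=0$, which is false; arranging sign cancellation at the top vertex $v$ does nothing to fix this. In fact no compactly supported eigenfunction exists at $\lambda_D^j$: the derivative flux must keep propagating into the children edges, and the genuine eigenfunctions have infinite support with geometric decay, as in the formula $\big((-1)^j/b\big)^k u_D^j(x)$ displayed in Section~\ref{sec:infinite}. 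These lie in $L^2$ only because the weight $b^k$ against amplitude $b^{-k}$ gives the summable factor $b^{-k}$; infinite multiplicity then comes from differences of such tails over sibling subtrees, not from local bubbles.

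Second, your exclusion of the exponential regime is wrong as argued. The intersections of $\Spiral$ with $\Cc_n^{n-1}$, $\Cc_n^n$, $\Cco_n^n$ run off to $-\infty$ as $\alpha\to-\infty$ at fixed $n$, not as $n\to\infty$; at fixed $\alpha$ the rogue eigenvalues converge to finite limits, and Section~\ref{sec:infinite} explicitly warns that eigenvalues of the semi-infinite graphs in the exponential regime \emph{cannot} be deduced from the $n\to\infty$ limit of the finite spectra. Worse, if your argument were valid it would equally exclude gap eigenvalues for $(\Lambda_\infty,B_\infty)$ (root degree $b$), whereas the paper shows that operator has an eigenvalue near $\lambda_n^-$ and one in every other spectral gap. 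The absence of gap eigenvalues for $T$ hinges precisely on the root having degree $b+1$: the symmetric component then has root condition $-v\,u_0+(b+1)u_1=0$, and imposing the decaying solution $\tilde u_k=\xi^{-k}$ with $|\xi|>1$ forces $b^{-1/2}\xi^{-1}-b^{1/2}\xi=0$, i.e.\ $\xi^2=b^{-1}$, impossible; for the $\Bo_\infty$ components the mode coefficients in (\ref{uko}) are $\pm1$ and never vanish. That computation is the missing step. Your subordinacy/Weyl-function plan for absolute continuity on the interior of $\sigma_1$ is sound in outline and close in spirit to Carlson's weighted Sturm--Liouville analysis, but as written the proposal proves neither of the two nontrivial assertions of the theorem.
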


The graphical apparatus developed in the present work provides a way to visualize the spectral transition from finite to semi-infinite trees.  The continuous spectrum $\sigma_1$ of the semi-infinite tree consists of a sequence of bands that are the $\lambda$ intervals corresponding to the intersection of $\Spiral$ with the light-blue shaded region in Figures \ref{fig:DandSq0}, \ref{fig:DandS}, and \ref{fig:PQD}.
  Theorem~\ref{thm:carlson} essentially says that intersections of $\Spiral$ with the curves $\Cc_n^k$ and $\Cco_n^k$ (corresponding to the clusters of eigenvalues of $(\Lambda_{\!n},A_n)$) fill up the part of $\Spiral$ that overlaps the shaded region.  The point spectrum of $T$, being $\sigma_{\!D}$, corresponds to the intersection of $\Spiral$ with the horizontal axis $\{z=0\}$.  Thus the bands interlace the point spectrum.
  We argue below that $\sigma_{\!D}$ is in the spectrum of $(\Lo_{\!\infty},B_\infty)$ and therefore also $(\Gamma_{\!\infty},A_\infty)$. 

The densities of states in each band can be computed from the asymptotic density of eigenvalues of the clusters as $n\to\infty$.  For a $\lambda$-interval $I$ in any given band, this measure $\mu$ is defined as the limiting ratio of the number of eigenvalues in $I$ to the number of eigenvalues in the band,
\begin{equation}
  \mu(I) \;:=\; \lim_{n\to\infty} \frac{\#(I,n)}{n}, 
\end{equation}
where $\#(I,n)$ is the number of eigenvalues of $(\Lambda_n,B_n)$ in~$I$.
To compute it, we need the density of roots of the determinants $D_n$ and $\Do_n$ as functions of $\lambda$, since these roots are precisely the eigenvalues.}  Because these roots interlace with those of $P_n(v(\lambda))$ and $Q_n(v(\lambda))$, we first need the asymptotic density of the roots $\{v_n^k\}_{k=1}^n$ or $\{w_n^k\}_{k=1}^{n-1}$ of the orthogonal polynomials $P_n(v)$ and $Q(v)$ as $n\to\infty$.  This density $d\mu(v)$ is computed in the Appendix~(\ref{density}) and has support in the $v$-interval $[-2b^{1/2},2b^{1/2}]$.  The clusters $\Sigma_n^k$ and $\So_n^k$, as $n\to\infty$, therefore have asymptotic density
\begin{equation}
  d\mu(v(\lambda)) \;=\; \frac{2b^{1/2}}{\pi \sqrt{4b-v(\lambda)^2}\,} \left|\frac{dv}{d\lambda}\right| d\lambda,
\end{equation}
with interval of support between values $\lambda$ where $v(\lambda)=\pm 2b^{1/2}$.
This can be written more explicitly when $v(\lambda)=\alpha s(\lambda)+(b+1)s(\lambda)$ is available.
But one can always compute the limits of high-energy clusters, independently of $q(x)$.

As $\lambda\to\infty$, $v(\lambda)\sim(b+1)\cos\sqrt{\lambda}$ and $dv/d\lambda \sim (b+1)(\sin\sqrt{\lambda})/(2\sqrt\lambda)$, and the density of states becomes
\begin{equation}
  d\mu(v(\lambda)) \;\sim\; \frac{\sin\sqrt{\lambda}}{\pi\sqrt\lambda}
       \frac{b^{1/2}(b+1)}{\mathrm{Re} \sqrt{4b-(b+1)^2\cos^2\sqrt{\lambda}}\,}d\lambda
       \qquad (\lambda\to\infty).
\end{equation}
The intervals where the square root is real are the spectral bands $\Sigma_\infty^k$ and $\So_\infty^k$,
\begin{align}
  &\sqrt{\Sigma_\infty^k} \;=\; \sqrt{\So_\infty^k} \;\sim\;  k \pi + [\beta,\,\pi-\beta] \qquad (k\to\infty)\\
  &\beta \,=\, \arccos\frac{2}{b^{\frac{1}{2}}+b^{-\frac{1}{2}}}.
\end{align}
If $q(x)\equiv0$ and $\alpha=0$, these are exact for all $k$.
These are precisely the bands in \S5.2 of Solomyak~\cite{Solomyak2003}.

For the exponential regime, $(\Lambda_\infty,B_\infty)$ or $(\Lambda_\infty,\Bo_\infty)$ has an eigenvalue at $\lambda\not\in\sigma_{\!D}$ exactly when the coefficient of the growing mode of $\{u_k\}$ vanishes in (\ref{uk}) or (\ref{uko}).  The Dirichlet eigenvalues must be examined separately.

For $(\Lambda_\infty,\Bo_\infty)$ the coefficients of $u_k$ never vanish, and thus there are no eigenvalues outside of $\sigma_{\!D}$.  Each $\lambda_D^j$ is in fact an eigenvalue, with eigenfunction equal to
\begin{equation}
  \left(\frac{(-1)^j}{b}\right)^{\!\!k} \!u_D^j(x)
  \quad
  \text{ for $x$ in the $k$th edge,}
\end{equation}
where $u_D^j(x)$ is the $j$th Dirichlet eigenfunction of $-d^2/dx^2+q(x)$ on $[0,1]$ ($1\leq j$).  These eigenvalues were found by Carlson~\cite[Theorem\,4.2]{Carlson1997} for infinite regular trees of uniform degree $b+1$; see also Solomyak~\cite[Theorem\,5.2]{Solomyak2003}.  In fact, the spectra of such trees coincides with that of $(\Lambda_\infty,\Bo_\infty)$;

For $(\Lambda_\infty,B_\infty)$, there are no Dirichlet eigenvalues in the spectrum by the following reasoning.  Because of the Robin condition at the root, $u_0\not=0$; and $c(\lambda_D^j)=\pm1$ so that $|u_D^j(0)|=|u_D^j(1)|\not=0$; and this, together with continuity, prohibits the decay required of an eigenfunction of the semi-infinite graph.
A value $\lambda\not\in\sigma_{\!D}$ is an eigenvalue whenever $b^{1/2}\xi(\lambda)-c(\lambda)=0$, with the convention that $|\xi|>1$.  Because of $\xi+\xi^{-1}=b^{-1/2}v$, this implies that $b(c-c^{-1})+\alpha s=0$.  But this last equation implies that either $b^{1/2}\xi(\lambda)=c(\lambda)$ or $b^{1/2}\xi(\lambda)^{-1}=c(\lambda)$.  Thus we seek roots of $b(c-c^{-1})+\alpha s$ and check whether $|b^{-1/2}c(\lambda)|>1$.  We do this numerically for potentials $q(x)=V\cchi_{[1/3,2/3]}(x)$.  As expected from the asymptotic~(\ref{coeffratio}), there is an eigenvalue near $\lambda_n^-$.  In addition, $b(c-c^{-1})+\alpha s$ has a root in each spectral gap and
the sign of $\log|b^{-1/2}c(\lambda)|$ alternates between gaps; thus there is one eigenvalue in every other spectral gap.

%%%%%%%%%%%%%%%%%%%%%%%%%%%%%%%%%%%%%%%%%%%%
\section{Appendix on moments}\label{sec:appendixmoments}

By Favard's Theorem~\cite[Theorem~4.4]{Chihara1978}, $\{P_n\}$ is a sequence of orthogonal polynomials with respect to a measure $d\psi$, with $\psi$ being a (not strictly) increasing function on $\RR$:
\begin{equation}
  \int P_n(v)P_m(v)\,d\psi(v) \;=\; 0
  \qquad (m\not=n).
\end{equation}
For each $n>0$, let the positive numbers $A_{n1}, \dots, A_{nn}$ be the Gaussian quadrature weights,
and define the functions
\begin{equation*}
 \psi_n(v)= \left\{
  \begin{array}{lll}
0 &\;\;\text{ if } v < v_{n1}& \\
A_{n1}+A_{n2}+\dots+A_{np}&\;\;\text{ if }v_{np} \leq\;\; v <v_{n, p+1}&(1\leq p<n).\\
\mu_0 &\;\;\text{ if } v\geq v_{nn}& 
\end{array}
\right.
 \end{equation*}
The moments of $\psi$ and its approximants $\psi_k$ are
\begin{equation}
  \mu_k = \int v^k\,d\psi(v),
  \qquad
  \mu_k^{(n)} = \int v^k\,d\psi_n(v).
\end{equation}
Since the roots of $P_n$ lie between $-(b+1)$ and $(b+1)$,
\begin{equation}
  \mathrm{supp}(d\psi_n) \subset [-(b+1),b+1],
  \qquad
  \mathrm{supp}(d\psi) \subset [-(b+1),b+1].
\end{equation}
The $d\psi_n$ approximate $d \psi$ as measures in the sense that $d\psi_n$ produces integrals of polynomials of degree $k \leq 2n-1$ exactly.  Therefore,
\begin{equation}\label{equalmoments}
  \mu^{(n)}_k=\mu_k,
  \qquad 0\leq k \leq 2n-1.
\end{equation}
Since $P_n(v)$ and $Q_n(v)$ are either even or odd, their roots are symmetric about the origin, and thus all odd moments vanish,
\begin{equation}
  \mu_k=0
  \;\;\text{ and }\;\;
  \mu^{(n)}_k=0
  \qquad
  \text{($k$ odd).}
\end{equation}

The series for the ratio~(\ref{ratio}) used in the proof of Theorem~\ref{thm:roguecurves} can be refined as follows.

\begin{proposition}\label{prop:PQratio}
The ratio of $P_{n+1}(v)$ and $Q_{n+1}(v)$ satisfies
\begin{equation}
\dfrac{-bP_{n+1}(v)}{vQ_{n+1}(v)} 
   \;=\; 1-v^{-2}\sum\limits_{j=0}^{\infty}v^{-j}\mu_j^{(n)}.
\end{equation}
\end{proposition}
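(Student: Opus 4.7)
My plan is to reduce the claim to an identity about the ratio $bP_{n-1}(v)/P_n(v)$ and then recognize that ratio as the Stieltjes transform of the Gauss-quadrature measure $d\psi_n$. To carry this out, I would first apply the identity $Q_{n+1}(v) = -bP_n(v)$ from Proposition~\ref{prop:PQD} to write
\[
\frac{-bP_{n+1}(v)}{vQ_{n+1}(v)} \;=\; \frac{P_{n+1}(v)}{vP_n(v)},
\]
and then use the three-term recurrence $P_{n+1} = vP_n - bP_{n-1}$ to split this as
\[
\frac{P_{n+1}(v)}{vP_n(v)} \;=\; 1 - \frac{bP_{n-1}(v)}{vP_n(v)}.
\]
It therefore suffices to prove $bP_{n-1}(v)/P_n(v) = \sum_{j=0}^{\infty} v^{-j-1}\mu_j^{(n)}$.

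For the middle step I would expand the rational function $bP_{n-1}(v)/P_n(v)$ by partial fractions over the simple roots $\{v_{nj}\}_{j=1}^n$ of $P_n$ and expand each simple pole in a geometric series convergent for $|v|$ beyond $\mathrm{supp}(d\psi_n)\subset[-(b+1),b+1]$:
\[
\frac{bP_{n-1}(v)}{P_n(v)} \;=\; \sum_{j=1}^n \frac{bP_{n-1}(v_{nj})/P_n'(v_{nj})}{v - v_{nj}} \;=\; \sum_{k=0}^\infty v^{-k-1}\sum_{j=1}^n \frac{bP_{n-1}(v_{nj})}{P_n'(v_{nj})}\, v_{nj}^k.
\]
Since $\mu_k^{(n)} = \sum_j A_{nj}v_{nj}^k$ by the Gauss-quadrature definition, matching this with $\sum_k v^{-k-1}\mu_k^{(n)}$ reduces everything to showing that the residues equal the Gaussian weights, i.e.\ $A_{nj} = bP_{n-1}(v_{nj})/P_n'(v_{nj})$.

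The main obstacle is this residue identification. I would handle it by introducing the second-kind polynomial
\[
\tilde Q_n(v) \;:=\; \int \frac{P_n(v) - P_n(u)}{v - u}\, d\psi(u),
\]
of degree $n-1$. A short calculation shows $\tilde Q_n$ satisfies the same three-term recurrence as $P_n$, with $\tilde Q_0 = 0$ and $\tilde Q_1 = \mu_0$, so the recurrence forces $\tilde Q_n = \mu_0 P_{n-1}$. The Lagrange-interpolation form of Gaussian quadrature then yields
\[
A_{nj} \;=\; \frac{\tilde Q_n(v_{nj})}{P_n'(v_{nj})} \;=\; \frac{\mu_0 P_{n-1}(v_{nj})}{P_n'(v_{nj})},
\]
so the residue identification requires $\mu_0 = b$. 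This is the consistent normalization: matching the leading $v^{-2}$ coefficient on both sides of the proposition (and compatible with the $v\to\infty$ expansion already used in equation~(\ref{ratio})) forces $\mu_0^{(n)} = \mu_0 = b$. Once this normalization is fixed---really the only subtlety, since Favard's theorem determines $d\psi$ only up to positive scaling---the three reductions combine to give the formula, so the hard part is recognizing $bP_{n-1}$ as the second-kind polynomial for $P_n$ under the implicit normalization of $d\psi$.
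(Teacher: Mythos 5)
Your proof is correct and takes essentially the same route as the paper: after the identical reduction via $Q_{n+1}(v)=-bP_n(v)$ and the three-term recurrence, the paper simply cites \cite[Ch.~III, Thm.~4.3]{Chihara1978} for the identity $P_{n+1}(v)/\bigl(vP_n(v)\bigr)=1-v^{-1}\int(v-t)^{-1}\,d\psi_n(t)$, and your partial-fraction, second-kind-polynomial, and quadrature-weight argument is precisely a self-contained proof of that cited identity, followed by the same geometric-series expansion of the Stieltjes transform. Your flagging of the normalization $\mu_0=b$ as the one genuine subtlety is apt: it is exactly the Favard normalization $\mathcal{L}[1]=\lambda_1$ in Chihara's formulation, which for the recurrence $P_n=vP_{n-1}-bP_{n-2}$ gives $\mu_0=\lambda_1=b$, so the convention you deduce is the one the paper uses implicitly through its citation.
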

\noindent
The proof falls out of the identity
\begin{equation}
  \frac{-bP_{n+1}(v)}{vQ_{n+1}(v)}  \;=\; 1-\dfrac{1}{v} \int \dfrac{d \psi _n(t)}{v-t},
\end{equation}
which comes from $-bP_n(v)=Q_{n+1}(v)$ (Proposition~\ref{prop:PQD}) and~\cite[Chapter III, Theorem~4.3]{Chihara1978}.

\smallskip
This leads to a refinement of the asymptotics of the rogue curves in Theorem~\ref{thm:roguecurves}.
See~\cite{Wang2018a} for details.

\begin{theorem}\label{thm:cmu}
If $b\geq2$ and $\alpha\not=0$, the components $\Cc_n^n$ and $\Cc_n^{n-1}$ of $D_n(y,z)=0$ and the component $\Cco_n^n$ of $\Do_n(y,z)=0$ have the following asymptotic behavior as $y\to\infty$ in the $yz$-plane.
\begin{equation*}
\alpha z= -cy+cy^{-1}+c^{(n)}_{-2}y^{-2}+c^{(n)}_{-3}y^{-3}+c^{(n)}_{-4}y^{-4}+ \dots + c^{(n)}_{-k}y^{-k}+\dots
\end{equation*}  
The coefficient $c^{(n)}_{-k}$ depends on $\alpha$, $b$, and $\{\mu_j^{(n-1)}\}_{j=0}^{k-1}$ only.
\end{theorem}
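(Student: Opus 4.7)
The plan is to substitute a Laurent ansatz in $y^{-1}$ into each curve equation after first recasting it via Proposition~\ref{prop:PQratio} so that the moments $\mu_j^{(n-1)}$ appear explicitly, and then to extract each coefficient $c^{(n)}_{-k}$ inductively by matching orders of $y^{-1}$.

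On $\Cc_n^n$ and $\Cc_n^{n-1}$, where $v=(b+1)y+\alpha z\to\infty$ as $y\to\infty$, I would divide $D_n = \alpha z\,P_n(v) + (1-y^2)Q_n(v) = 0$ by $vQ_n(v)$ and apply Proposition~\ref{prop:PQratio} to rewrite the equation as
\begin{equation*}
\alpha z \Big(1 - v^{-2}\sum_{j=0}^{\infty} \mu_j^{(n-1)} v^{-j}\Big) + \frac{b(y^2-1)}{v} \;=\; 0,
\end{equation*}
while the analogous manipulation of $\Do_n = P_n(v) + y Q_n(v) = 0$ on $\Cco_n^n$ yields
\begin{equation*}
y + \alpha z \;=\; v^{-1}\sum_{j=0}^{\infty} \mu_j^{(n-1)} v^{-j}.
\end{equation*}
I would then substitute $\alpha z = -cy + cy^{-1} + \sum_{k\ge 2} c^{(n)}_{-k} y^{-k}$ with $c\in\{1,b\}$ chosen to match the relevant curve (as forced by Theorem~\ref{thm:roguecurves}), and note that $v = v_1 y + \sum_{k\ge 1} c^{(n)}_{-k} y^{-k}$ with $v_1 := b+1-c\in\{1,b\}$. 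Since $v_1\ne 0$, each $v^{-m}$ expands as a formal Laurent series in $y^{-1}$ through $v^{-m} = v_1^{-m}y^{-m}(1+w)^{-m}$ with $w = O(y^{-2})$ (using that $c_0=0$, which is already known from Theorem~\ref{thm:roguecurves}), so every term above becomes a formal Laurent series.

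For $k\ge 2$ I would match the coefficient of $y^{-k}$. A convolution count shows that the coefficient of $y^{-k}$ in each summand $\alpha z\cdot \mu_j^{(n-1)} v^{-(j+2)}$ involves only $c^{(n)}_{-\ell}$ with $\ell\le k-j-2\le k-2$, so no $c^{(n)}_{-k}$ arises from the moment-bearing terms; the only other source of $c^{(n)}_{-k}$ in the $\Cc$ equation is $by^2/v$, whose leading contribution is $-bc^{(n)}_{-k}/v_1^2$. Hence the $c^{(n)}_{-k}$ coefficient in the $y^{-k}$ equation is $1-b/v_1^2\in\{1-b,\ 1-1/b\}$, which is nonzero precisely because $b\ge 2$. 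For $\Cco_n^n$, the parallel count shows that no $c^{(n)}_{-k}$ appears on the right-hand side, so its $c^{(n)}_{-k}$ coefficient is simply $1$. Finally, since $\alpha z\cdot\mu_j^{(n-1)} v^{-(j+2)}$ and $\mu_j^{(n-1)} v^{-(j+1)}$ are both of order $y^{-(j+1)}$, the moment $\mu_j^{(n-1)}$ affects the $y^{-k}$ equation only when $j\le k-1$. These observations together allow one to solve the $y^{-k}$ equation uniquely for $c^{(n)}_{-k}$ as a rational expression in $\alpha$, $b$, the previously determined $c^{(n)}_{-\ell}$ with $\ell<k$, and the moments $\mu_j^{(n-1)}$ with $j\le k-1$; induction on $k$ then yields the stated dependence.

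The main obstacle is the combinatorial bookkeeping behind the first claim: one must expand $(1+w)^{-m}$ in $y^{-1}$ carefully enough to confirm that $c^{(n)}_{-k}$ never sneaks into the coefficient of $y^{-k}$ in $\alpha z\cdot v^{-(j+2)}$ or $v^{-(j+1)}$ for any $j\ge 0$, so that the $y^{-k}$ equation is genuinely linear in the unknown $c^{(n)}_{-k}$. Once this is secured, the recursion for $c^{(n)}_{-k}$ and the tracking of moments become routine.
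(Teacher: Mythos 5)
Your proposal is correct and follows essentially the route the paper indicates: you feed the moment expansion of Proposition~\ref{prop:PQratio} into the curve equations $D_n=0$ and $\Do_n=0$ and match a Laurent ansatz in $y^{-1}$ order by order, which is exactly the refinement of the ansatz-matching argument used for Theorem~\ref{thm:roguecurves} that the paper invokes (deferring details to~\cite{Wang2018a}). Your identification of the nonzero pivots $1-b/v_1^{2}\in\{1-b,\;1-1/b\}$ for $\Cc_n^{n-1}$ and $\Cc_n^{n}$ (and pivot $1$ for $\Cco_n^{n}$), together with the convolution count showing that only $c^{(n)}_{-\ell}$ with $\ell\le k-2$ and moments $\mu_j^{(n-1)}$ with $j\le k-1$ enter the $y^{-k}$ equation, is precisely what makes the induction close and yields the stated dependence of $c^{(n)}_{-k}$.
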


Putting this together with~(\ref{equalmoments}) says that the coefficient $c^{(n)}_{-k}$ stabilizes when $n$ is large enough that $k < 2n$.  Furthermore, the expansions for the curves $\Cc^n_n$, {\itshape etc.} for two different values of $n$ are different.

The measure $d\mu$ can be computed easily from the expressions (\ref{PQxi}) of $P_n$ and $Q_n$.  
As a density, it is the limit of the density of roots of these polynomials as $n\to\infty$.  By putting $\xi=\exp(i\theta)$ we obtain $-b^{-(n+1)/2}Q_n=\sin n\theta/\sin\theta$.  Thus we seek the density of roots of $\sin n\theta$ as a function of $v$, which is
\begin{equation}
  R_n(v) \;=\; \sin \Big( n\arccos \frac{v}{2b^{1/2}} \Big),
  \qquad
  \left| v \right| < 2b^{1/2},
\end{equation}
which yields
\begin{equation}\label{density}
  d\mu(v) \;=\; \frac{2b^{1/2}\,dv}{\pi\sqrt{4b-v^2}},
  \qquad
  \left| v \right| < 2b^{1/2}.
\end{equation}

The following proposition refines the expression of $P_n$ and $Q_n$; its proof is omitted (see~\cite{Wang2018a}).

\begin{proposition}\label{prop:coefficients}
For $n\geq1$, $P_n(v)$ is the polynomial part of
\begin{equation*}
  v^n - (n-1)b\, v^{n-2} + \dfrac{(n-3)(n-2)}{2}b^2v^{n-4}-\dfrac{(n-5)(n-4)(n-3)}{6}b^3v^{n-6}+\dots\,,
\end{equation*}
and $Q_n(v)$ is the polynomial part of
\begin{equation*}
  -b\,v^{n-1} + (n-2)b^2\, v^{n-3} - \dfrac{(n-4)(n-3)}{2}b^3v^{n-5}+\dfrac{(n-6)(n-5)(n-4)}{6}b^4v^{n-7}+
 \dots\,,
\end{equation*}
in which the ellipses indicate lower-degree monomials.
\end{proposition}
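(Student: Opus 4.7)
The plan is induction on $n$ using the three-term recurrence $P_n=vP_{n-1}-bP_{n-2}$ from Theorem~\ref{thm:recurrence}, together with the identity $Q_{n+1}=-bP_n$ from Proposition~\ref{prop:PQD}, which reduces the statement for $Q_n$ to that for $P_{n-1}$. The first step is to rewrite the displayed expansion for $P_n$ in closed form: the coefficient of $v^{n-2k}$ ($k\geq 1$) equals
\[
(-1)^k\,\frac{(n-2k+1)(n-2k+2)\cdots(n-k)}{k!}\;=\;(-1)^k\binom{n-k}{k},
\]
and this expression vanishes automatically whenever $\lfloor n/2\rfloor<k\leq n$ (the range $[n-2k+1,\,n-k]$ then contains $0$), so the proposition's phrase ``polynomial part of'' simply drops the remaining tail of genuinely negative powers $v^{n-2k}$ with $k>n$. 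Thus the claimed expression for $P_n$ is
\[
\widetilde P_n(v)\;:=\;\sum_{k=0}^{\lfloor n/2\rfloor}(-1)^k\binom{n-k}{k}\,b^k\, v^{n-2k}.
\]

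Next I would verify that $\widetilde P_n$ satisfies the same recurrence and initial data as $P_n$. The base cases $\widetilde P_0=1$ and $\widetilde P_1=v$ match, and for the inductive step one computes
\[
v\,\widetilde P_{n-1}(v)-b\,\widetilde P_{n-2}(v)\;=\;\sum_{k\geq 0}(-1)^k\!\left[\binom{n-1-k}{k}+\binom{n-1-k}{k-1}\right]\!b^k v^{n-2k},
\]
where the second binomial arises from the $-b\,\widetilde P_{n-2}$ term after the index shift $k\mapsto k-1$ (with the usual convention $\binom{\cdot}{-1}=0$ at $k=0$). Pascal's rule collapses the bracket to $\binom{n-k}{k}$, producing $\widetilde P_n(v)$. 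Uniqueness of solutions to the three-term recurrence with prescribed initial data then forces $P_n=\widetilde P_n$, establishing the first displayed expansion.

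For $Q_n$, substituting the just-proved formula for $P_{n-1}$ into $Q_n=-bP_{n-1}$ gives
\[
Q_n(v)\;=\;\sum_{k\geq 0}(-1)^{k+1}\binom{n-1-k}{k}\,b^{k+1}\, v^{n-1-2k},
\]
and the translation $\binom{n-1-k}{k}=\frac{(n-2k)(n-2k+1)\cdots(n-1-k)}{k!}$ reproduces the leading monomials $-b v^{n-1}+(n-2)b^2 v^{n-3}-\tfrac{(n-4)(n-3)}{2}b^3 v^{n-5}+\tfrac{(n-6)(n-5)(n-4)}{6}b^4 v^{n-7}-\cdots$ as stated, with the same automatic truncation of would-be negative-power terms as for~$P_n$. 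No substantive obstacle arises; the entire argument reduces to Pascal's identity and the routine bookkeeping that converts between binomial and falling-factorial forms of the same coefficients.
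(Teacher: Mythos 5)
Your proof is correct. There is nothing in the paper to compare it against: the paper explicitly omits the proof of this proposition, deferring to the thesis \cite{Wang2018a}, so your argument supplies a complete, self-contained verification of what the text leaves out. Your reading of ``polynomial part'' is the right one --- the falling-factorial coefficients $(-1)^k(n-2k+1)\cdots(n-k)/k!=(-1)^k\binom{n-k}{k}$ vanish automatically for $\lfloor n/2\rfloor<k\le n$ because the product straddles $0$, and only the genuinely negative powers ($k>n$) get discarded --- and your Pascal's-rule induction on $P_n=vP_{n-1}-bP_{n-2}$ is clean, including the boundary behavior: for even $n$ the top term $k=n/2$ of $\widetilde P_n$ arises solely from the $-b\,\widetilde P_{n-2}$ summand, which Pascal's rule absorbs under the standard convention $\binom{m}{k}=0$ for $0\le m<k$, as you implicitly use. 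Reducing $Q_n$ to $P_{n-1}$ via $Q_{n+1}=-bP_n$ (Proposition~\ref{prop:PQD}) is exactly the right move and the coefficient translation checks against all four displayed monomials. Worth noting as an alternative: the paper's formula (\ref{PQxi}) exhibits $P_n(v)=b^{n/2}\,U_n\!\big(v/(2\sqrt{b})\big)$ with $U_n$ the Chebyshev polynomial of the second kind, whose classical expansion $U_n(x)=\sum_k(-1)^k\binom{n-k}{k}(2x)^{n-2k}$ gives the closed form immediately; your route proves that expansion from scratch rather than citing it, which keeps the argument elementary at the cost of a few lines of index bookkeeping.
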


%\bibliography{ShipmanWang}
%\end{document}

\end{document}